\documentclass[11pt,reqno]{amsart}
\usepackage{amssymb}

\usepackage{enumitem}
\usepackage{amsthm,amsfonts,amssymb,euscript,color,bbm}
\usepackage{comment}
\usepackage{mathtools}
\usepackage{stmaryrd}
\usepackage{mathrsfs}
\usepackage{pifont}
\usepackage{amsfonts,amsmath,amssymb,amsthm,amscd,cancel}
\usepackage{graphicx}
\usepackage[]{geometry}
\usepackage{verbatim}
\usepackage{mathrsfs}
\usepackage{fancyhdr}
\usepackage{footnpag,footmisc}
\usepackage[normalem]{ulem}
\usepackage{float}

\usepackage{cite}
\usepackage{hyperref}
\usepackage{engord}
\usepackage[displaymath]{lineno}
\usepackage[utf8]{inputenc}

\theoremstyle{definition}
\newtheorem{theorem}{Theorem}[section]
\newtheorem{lemma}{Lemma}[section]
\newtheorem{proposition}{Proposition}[section]

\newtheorem{remark}{Remark}[section]

\allowdisplaybreaks[4]

\DeclareMathAlphabet{\mathsfsl}{OT1}{cmss}{m}{sl}
\numberwithin{equation}{section}

\newcommand{\D}{\mathrm{d}}

\newcommand{\tr}{\mathrm{tr}}
\newcommand{\cir}[1]{\overset{\circ}{#1}}

\def\alphab{\underline{\alpha}}
\def\betab{\underline{\beta}}
\def\chib{\underline{\chi}}
\def\chibh{\widehat{\underline{\chi}}}
\def\chih{\widehat{\chi}}
\def\etab{\underline{\eta}}

\def\Lb{\underline{L}}
\def\mub{\underline{\mu}}

\def\tr{\mathrm{tr}}
\def\omegab{\underline{\omega}}

\def\sigmac{\check{\sigma}}
\def\tensor{\widehat{\otimes}}

\def\ub{\underline{u}}
\def\Cb{\underline{C}}

\def\Lh{\widehat{L}}
\def\Lbh{\widehat{\underline{L}}}

\def\Xib{\underline{\Xi}}
\def\Ib{\underline{I}}
\def\Lambdab{\underline{\Lambda}}
\def\Nb{\underline{N}}
\def\Thetab{\underline{\Theta}}
\def\Kb{\underline{K}}

\newcommand{\Db}{\underline{D}}
\newcommand{\Dh}{\widehat{D}}
\newcommand{\Dbh}{\widehat{\underline{D}}}

\def\nablas{\mbox{$\nabla \mkern -13mu /$ }}
\def\Deltas{\mbox{$\Delta \mkern -13mu /$ }}

\def\divs{\mbox{$\mathrm{div} \mkern -13mu /$ }}
\def\curls{\mbox{$\mathrm{curl} \mkern -13mu /$ }}
\def\omegas{\mbox{$\omega \mkern -13mu /$ }}
\def\omegabs{\mbox{$\omegab \mkern -13mu /$ }}
\def\ds{\mbox{$\nabla \mkern -13mu /$ }}
\def\gs{\mbox{$g \mkern -9mu /$}}
\def\epsilons{\mbox{$\epsilon \mkern -9mu /$}}

\def\Os{\mbox{$\mathcal{O} \mkern -11mu /$}}
\def\Es{\mbox{$\mathcal{E} \mkern -11mu /$}}

\def\Ks{\mbox{$K \mkern -13mu / $}}

\begin{document}
	
	\title[Interior gravitational perturbations to naked singularities]{Interior gravitational perturbations to naked singularities of a scalar field}

	\author[Junbin Li]{Junbin Li}
	\address{Department of Mathematics, Sun Yat-sen University, Guangzhou, China}
	\email{lijunbin@mail.sysu.edu.cn}
	\author[Tingting Li]{Tingting Li}
	\address{Department of Mathematics, Sun Yat-sen University, Guangzhou, China}
	\email{litt76@mail2.sysu.edu.cn}
	
	\begin{abstract}
		
	For the $k$-self-similar naked singularity solutions of Einstein--scalar field equations constructed by Christodoulou, we construct a family of interior gravitational perturbations leading to trapped surface formation that remain large but finite at the threshold H\"older regularity, while converging to zero in all regularities below the threshold. This extends the interior spherically symmetric result below the threshold in \cite{Li25} to non-spherically symmetric setting.  We further construct a genuinely localized family of perturbations whose angular support shrinks to a single point, with smooth convergence to zero away from that point. This exploits the additional angular freedom available beyond spherical symmetry.
Moreover,  when measured instead in Sobolev regularity,  such genuinely localized family of perturbations still converges to zero in all regularities below the same threshold.

	\end{abstract}
	
	\maketitle
	

	\setcounter{tocdepth}{1}


	\section{Introduction}

\subsection{Previous Results}
   In general relativity, the weak cosmic censorship conjecture is one of the most important problem. It says that generically naked singularities will not arise in gravitational collapses, i.e., solutions of the Einstein equations (coupled with reasonable matter field) of asymptotically flat initial data. In a series of papers \cite{Chr87, Chr91, Chr93, Chr99}, Christodoulou studied spherically symmetric solutions of the Einstein--scalar field equations
   \begin{equation}\label{ES}\mathbf{Ric}_{\alpha\beta}-\frac{1}{2}\mathbf{R}g_{\alpha\beta}=2\partial_\alpha\phi\partial_\beta\phi-g_{\alpha\beta}\partial_\mu\phi\partial^\mu\phi\end{equation}
    and verified the weak (and strong) cosmic censorship conjecture for this model in the class of BV solutions. In the paper \cite{Chr94}, Christodoulou was able to construct examples of so-called $k$-self-similar naked singularities of this model, which implies that the term ``generic'' is necessary. The main strategy  of the proof of the weak cosmic censorship is then to identify the naked singularities and study their instability. In the paper \cite{Chr99}, Christodoulou showed that for a solution with a first singularity at the center, if the past null cone $\mathcal{N}$ of the singularity has the property that the following blue-shift 
    $$\int_{\mathcal{N}}\frac{1}{r}\frac{\mu}{1-\mu}\D r$$
    is infinite, then the solution contains a trapped region terminating at a spacelike singularity, or satisfies this property after a small BV perturbation to the future of $\mathcal{N}$, which we call exterior perturbations. To gain insights to the non-spherically symmetric problem, Li--Liu \cite{Liu-Li, Li-Liu1} studied exterior gravitational perturbations (which must be non-spherically symmetric due to Birkhoff theorem) to spherically symmetric singular solution of the Einstein-scalar field equations, with the infinite blue-shift property. We showed that small $C^1$ gravitational perturbation can result in trapped surface formation, raising the codimension of the exceptional set, which consists of perturbations still resulting in naked singularities. In general, the perturbations produced in this way must be in a regularity strictly lower than $C^2$, and on the other hand it can be made $C^\infty$ away from the intersection of the initial cone and $\mathcal{N}$. The finite regularity only localizes around $\mathcal{N}$ and in the discussions below, all finite regularities are understood in this localized sense\footnote{A recent work \cite{Ci26} considered the weak cosmic censorship in $2+1$ circularly symmetric Einstein-scalar field system with negative cosmological constant. Because of the so-called mass gap phenomena in this model, the blue-shift in singular solutions tends to infinity in an extremely fast rate, so that one can find perturbations resulting in trappdd surface formation in $C^k$ for any integer $k$.}.
    
    Although the instability of general naked singularities was verified, it is also important to study specific examples,  which will advance our understanding of the weak cosmic censorship conjecture. The examples constructed in \cite{Chr94}, the $k$-self-similar ones, have been extensively studied recently. An \cite{A} constructed small $C^1$ anisotropic exterior perturbations (which means that the shear tensor is supported only in a small region of the spherical sections) to the $k$-self-similar naked singularities so that an apparent horizon (in addition to trapped surfaces) emerging from the singularity can be identified in the maximal development. 
    
    On the other hand, the stable construction of the exterior of the vacuum naked singularities \cite{R-Sh} suggested that exterior region is stable under exterior perturbations with sufficiently high regularity. To be more precise, let us recall that the $k$-self-similar naked singularity solution of \eqref{ES} suffers a finite regularity across the past null cone $\mathcal{N}$ of the singularity. Let $C_o$ be an outgoing null cone from the center in the solution, parametrized by the area radius $r$. Then the function $\partial_r(r\phi)\big|_{C_o}$, which is the initial data, belongs to $C^{\frac{k^2}{1-k^2}}$ (and $C^\infty$ alway from $S_0=\mathcal{N}\cap C_o$). In fact, the function $\partial_r(r\phi)\big|_{C_o}$, after subtracted its value $\frac{1}{k}$ on $S_0$, looks like a power function 
    $$a|r-r_0|^{\frac{k^2}{1-k^2}}+\text{more regular part}$$
     around $S_0$ with possibly different coefficients on both sides of $S_0$,  where $r_0=r\big|_{S_0}$. The (localized) regularity $C^{\frac{k^2}{1-k^2}}$ is called the threshold regularity. The analysis in \cite{JS1, JS2} suggested that, at least restricted within spherical symmetry,  small exterior perturbations above and equal to the threshold regularity lead to stability.  Examples of naked singularity exterior for \eqref{ES} outside spherical symmetry are constructed very recently in \cite{An2} and their instability to anisotropic apparent horizon formation was shown in \cite{An3}. For exterior perturbations below threshold, one expected instability. It is not hard to see from the arguments of previous works that trapped surface will form for H\"older perturbations with sufficiently low exponent,  that is, the perturbed data $\left(\partial_r(r\phi)\right)_t$ looks like
     $$\frac{1}{k}+t|r-r_0|^\alpha,$$
     for $\alpha>0$ sufficiently small and $t\ne0$. It is further confirmed in \cite{Li25} that, in spherical symmetry, one can find small perturbations resulting in trapped surface formation in $C^\alpha$ for any $\alpha<\frac{k^2}{1-k^2}$. 
     \begin{remark}
     The exterior perturbations with the form of power function produce not only a single trapped surface but a sequence of trapped surfaces with arbitrarily small areas. A consequence is that the apparent horizon emanates from the singularity and even a small piece of the Cauchy horizon from the singularity does not exist, justifying also the strong cosmic censorship. But if we only concern about the weak cosmic censorship, a single trapped surface is enough (see \cite{D}) and the differences of the perturbed and original data can be made at least $C^1$, while the convergence is still in $C^\alpha$. 
     \end{remark}

\begin{figure}
\includegraphics[width=1.5 in]{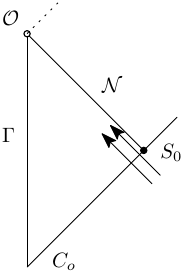}
\caption{Interior perturbations}
\label{fig:interior}
\end{figure}
    
     Although the exterior stability and instability were studied extensively, there are almost no results concerning interior perturbations before \cite{JS2}, in which solutions of linear wave equation in $k$-self-similar background (for $k^2\ll1$) was studied. Here interior perturbations refer to perturbations not supported to the future of $\mathcal{N}$ (see Figure \ref{fig:interior}). One obvious difference is that interior perturbations will change the causal past of the singularity and the end state of the perturbations is not a priori known. Interior perturbations may result in stability, instability to trapped surface (black hole) formation, or instability to dispersion, the latter of which cannot caused by exterior perturbations. Within spherical symmetry, in \cite{Li25} we found small interior perturbations resulting in trapped surface and black hole formation still in a localized $C^{\alpha}$ topology for any $\alpha<\frac{k^2}{1-k^2}$.  On the stability side, the works \cite{SZ26, Zheng}, completing the program initiating in \cite{JS2}, proved stability under small (interior) perturbations above the threshold (for $k^2\ll1$, the smallness is measured using a weighed norm). The stability result in \cite{Zheng} implies that perturbations above the threshold cannot result in dispersion, and the instability result in \cite{Li25} narrows the search of dispersive perturbations in the original BV topology. All these stability and instability results justify that the regularity $C^{\frac{k^2}{1-k^2}}$ is a genuine threshold regularity.

\subsection{The main results}
 
 In this paper, we generalize the instability result for interior perturbations below the threshold to non-spherically symmetric perturbations.  We demonstrate that the $k$-self-similar naked singularity solutions are also unstable under (interior) gravitational perturbations below the threshold in a certain sense. In fact, the family of perturbations we construct has uniform bound at the threshold.  Our theorems provide complementary information toward understanding how naked singularity solutions are embedded in the phase space of initial data. A concrete question in this direction is  whether there exist dispersive gravitational perturbations of the $k$-self-similar naked singularity solutions and our results rule out certain candidates. 
 
The main results in \cite{Li25} and this paper are set up as follows.  Let $\mathcal{M}$ be the $k$-self-similar naked singularity solution of spherically symmetric Einstein--scalar field system and $C_o$ be some outgoing null cone parametrized by $\ub$, which is one of the optical function in the double null coordinates. Let $\mathcal{N}$ be the past null cone of the singularity $\mathcal{O}$ and denote $S_0=\mathcal{N}\cap C_o$. Outside spherical symmetry, initial data imposed on $C_o$ consists of the scalar function $\partial_r(r\phi)$ and the shear $\chih$ of $C_o$. Let us recall that the following theorem established in \cite{Li25} within spherical symmetry. 
 \begin{theorem}[\cite{Li25}]\label{thm:Li25}
 For any $\alpha\in[0,\frac{k^2}{1-k^2})$, there exists a $t$-family of spherically symmetric perturbations of $\left(\partial_r(r\phi)\right)_t$ on $C_o$, supported before $S_0$, such that $\left(\partial_r(r\phi)\right)_t$ converges as $t\to0$ to the data of $\mathcal{M}$ in $C^\infty$ away from $S_0$ and in $C^{\alpha}$ on $C_o$. The maximal development $\mathcal{M}_t$ of perturbed data contains a closed trapped surface for $t\ne0$.
 \end{theorem}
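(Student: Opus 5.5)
We will use self-similarity together with Christodoulou's trapped surface criterion in spherical symmetry. The structure is: build one perturbation of unit size that yields trapping, then rescale it into a shrinking neighborhood of $S_0$ and exploit the scaling of Hölder seminorms.

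\emph{Setup near $S_0$.} Write $\beta=\partial_r(r\phi)$ on $C_o$ and use the area radius $r$ as parameter, with $r_0=r|_{S_0}$. The background satisfies
\[
\beta_0(r)=\tfrac1k+a|r-r_0|^{\gamma}+(\text{more regular}),\qquad \gamma=\tfrac{k^2}{1-k^2}.
\]
We will take perturbed data of the form
\[
\beta_t=\beta_0+\epsilon_t\,\psi\!\left(\frac{r_0-r}{\delta_t}\right),
\]
where $\psi\in C_c^\infty((1,2))$ is fixed and $\delta_t\to0$. Such data are supported before $S_0$ in the interior. They are also smooth and converge to $\beta_0$ in $C^\infty$ away from $S_0$, since for fixed distance from $S_0$ the perturbation eventually vanishes.

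\emph{Convergence in $C^\alpha$.} The $C^\alpha$ norm of the perturbation is comparable to $\epsilon_t\delta_t^{-\alpha}$. We will choose $\epsilon_t=\delta_t^{\gamma'}$ for some $\gamma'\in(\alpha,\gamma)$. Then $\epsilon_t\delta_t^{-\alpha}=\delta_t^{\gamma'-\alpha}\to0$, while the perturbation still dominates the background deviation $a\delta_t^{\gamma}$ on its support.

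\emph{Trapping.} This is the main step. Since $C_o$ lies past the singularity $\mathcal O$, the solution to the future of $C_o$ near $S_0$ is approximately governed by the self-similar background. We will use the self-similar scaling $r\mapsto\lambda r$, $u\mapsto\lambda u$ to transport the problem toward the vertex. Following the past cone $\mathcal N$ inward corresponds, in the similarity coordinate, to moving toward the singularity, and the threshold exponent $\gamma$ arises precisely as the decay/growth rate of the linearized perturbation along $\mathcal N$ (the blue-shift rate). Concretely:

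\begin{enumerate}
\item We will show, via Christodoulou's sharp trapped surface criterion for the Einstein--scalar field system in spherical symmetry \cite{Chr91}, that a trapped surface forms once, on a cone segment $[r_1,r_2]$ with $r_2-r_1\sim\delta$, the mass ratio satisfies $\frac{2m(r_2)}{r_2}-\frac{2m(r_1)}{r_1}\gtrsim$ a threshold comparable to $\frac{\delta}{r}\log\frac{r}{\delta}$.
\item We will show that the perturbation of size $\epsilon$ at scale $\delta$ near $S_0$ is amplified when propagated toward the center along the incoming direction: linearizing around the self-similar solution, the relevant mode grows like $(\text{distance to }\mathcal N)^{-\gamma}$. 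At the final scale comparable to $\delta$ its effective amplitude is $\epsilon\delta^{-\gamma}=\delta^{\gamma'-\gamma}\to\infty$.
\item The dangerous part is controlling the nonlinear evolution long enough to reach that scale. We will run a bootstrap in the rescaled self-similar variables, in the region bounded by $\mathcal N$ perturbed and the incoming cone through the support. The smallness $\epsilon\delta^{-\gamma}$ remains small until the last $O(1)$ rescaling, which allows the linear analysis of step 2 to be used up to that point. After that we will use a large-data argument at unit rescaled size, choosing the sign of $\psi$ and possibly $\psi$ itself once and for all, so that the remaining evolution satisfies the criterion in step 1 \emph{uniformly}.
\end{enumerate}

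This uniform final-step argument, rather than the scaling bookkeeping, is the genuine obstacle. We will resolve it by a compactness/continuity argument: the rescaled limit problem is a fixed problem of Christodoulou's type with data $\frac1k+c\psi$, and we will choose $c$ large so that trapping occurs for it, then argue stably. For the case $\alpha=0$ the same construction applies.
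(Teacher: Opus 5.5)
Your data family and the bookkeeping are fine. A bump of width $\delta$ and height $\delta^{\gamma'}$, $\gamma'\in(\alpha,\gamma)$, placed at distance $\sim\delta$ before $S_0$, tends to zero in $C^\alpha$ and in $C^\infty$ away from $S_0$, and the effective amplitude $\epsilon\delta^{-\gamma}$ is exactly the paper's heuristic. The gap is the trapping step, which is the whole content of the theorem, and the way you propose to close it cannot work.

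At depth $|u|$ the pulse has relative width $\sim\delta|u|^{k^2-1}$ and self-similarly normalized amplitude $\sim\epsilon|u|^{-k^2}$. These reach order one at different depths, $|u|\sim\delta^{1/(1-k^2)}$ and $|u|\sim\delta^{\gamma'/k^2}$. Because $\gamma'<\gamma$, the amplitude becomes large while the pulse is still arbitrarily thin compared with the scale, and trapping takes place in exactly this thin, large-amplitude regime. In the paper, the relative width at $u_1$ is at most $\varepsilon/a$, so the normalized amplitude there must be of order $\sqrt{a/\varepsilon}$ for Raychaudhuri to trap.

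Hence there is no $\delta$-independent rescaled problem with data $\frac1k+c\psi$ to which a compactness or continuity argument could reduce. Every rescaled limit is a concentrating pulse, and the stage you defer to a ``large-data argument at unit size'' is a singular short-pulse problem. It needs nonlinear estimates uniform in the width, which neither linearization nor stability of a fixed limit supplies. Your step 1 is harmless (Christodoulou's criterion could replace the final Raychaudhuri integration), but it has to be applied at a depth that only such uniform estimates reach. Choosing the sign of $\psi$ does not help either: in the controllable regime the term linear in the perturbation changes $|u|\mathrm{tr}\chi'$ by only $O(K\delta a|u|^{p-1})=O(K\varepsilon)$, so trapping comes from the quadratic term.

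What is missing is the mechanism of the paper (following \cite{Li25}). The perturbation is put just after an incoming cone $\Cb_s$ strictly before $\mathcal N$. This cone carries unperturbed background data but has a regular vertex and finite blue shift, so the amplification cannot be computed along $\mathcal N$ as in your step 2. What remains is the approximate rate $-2\omegab\ge(k^2-\epsilon)|u|^{-1}$, hence $\Omega^2(0,u)\le|u|^p$ with relative scaling control for $p=k^2-\epsilon$. This holds only on the part of $\Cb_s$ where the self-similar variable stays in $[s_1,s_*]$, i.e.\ down to a radius comparable to $d^{1/(1-k^2)}$ when $\Cb_s$ is at distance $d$ from $S_0$.

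On $\ub\in[0,\delta]$, $u\in[u_0,u_1]$ with $\delta|u_1|^{p-1}a\le\varepsilon$, the existence theorem (Theorem~\ref{main3}) for data bounded by $a\delta^{p/(1-p)}$ keeps the energy $M(u)$ of Section~\ref{sec:trappedsurface} essentially conserved. Integrating Raychaudhuri over $[0,\delta]$ at $u_1$ then traps $S_{\delta,u_1}$ once $M(u_0)\gtrsim|u_1|^{1+p}$. This is compatible with the upper bound for large $a$ because $\frac{1+p}{1-p}<2$ (Theorem~\ref{main4}). One then checks that $u_1$ is reached before $\Cb_s$ leaves the near-self-similar range, and lets $\Cb_s\to\mathcal N$.

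This is where $\gamma'<\gamma$ is genuinely used: you must take $\epsilon$ so small that $\gamma'<\frac{p}{1-p}$, because along any cone before $\mathcal N$ only the rate $k^2-\epsilon$ is available. Your family does pass these checks (with $a\sim\delta^{\gamma'-p/(1-p)}$). But it is this quantitative nonlinear argument, not linearization plus compactness, that proves trapping.
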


The theorem remains true if we replace the modifier ``supported before $S_0$'' by ``supported after $S_0$'', which is the exterior instability arbitrarily close to the threshold from below, also established in \cite{Li25}.

The first main result of this paper is to generalize the above result to non-spherically symmetric case, allowing gravitational perturbations.  
 \begin{theorem}\label{main1}
There exists a $\delta$-family of perturbations of the shear $\chih_\delta$ on $C_o$, supported before $S_0$, such that the $C^{\frac{k^2}{1-k^2}}_{\ub}$ norm of $\chih_\delta$, with suitable angular control,  is uniformly bounded, and $\chih_\delta$ converges as $\delta\to0$ to zero in $C^\infty$ away from $S_0$ and in $C^{\alpha}_{\ub} C^\infty_\vartheta$ on $C_o$ for any $\alpha<\frac{k^2}{1-k^2}$. The maximal development $\mathcal{M}_\delta$ of perturbed data contains a closed trapped surface for $\delta\ne0$. 
 \end{theorem}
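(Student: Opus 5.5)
Let $\ub_0$ denote the value of $\ub$ on $S_0$. The plan is to reduce trapped surface formation to a short-pulse/blue-shift mechanism near $S_0$, in the spirit of the exterior gravitational instability argument of Li--Liu, but for data supported before $S_0$. The shear is the free gravitational datum on $C_o$. I would prescribe
\[
\chih_\delta(\ub,\vartheta)=\delta^{\frac{k^2}{1-k^2}}\,\psi\Big(\frac{\ub_0-\ub}{\delta}\Big)\,\Theta(\vartheta),
\]
where $\psi$ is a fixed smooth profile supported in $(1,2)$, and $\Theta$ is a fixed smooth traceless symmetric $2$-tensor on the sphere. For such a $\Theta$ we must accept that it vanishes somewhere, which is harmless. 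By scaling, the $C^{\frac{k^2}{1-k^2}}_{\ub}$ seminorm of $\chih_\delta$ is uniformly bounded, with all angular derivatives bounded. Interpolating against the $C^0$ norm, which is $O(\delta^{\frac{k^2}{1-k^2}})$, gives convergence to zero in $C^\alpha_{\ub}C^\infty_\vartheta$ for $\alpha<\frac{k^2}{1-k^2}$. The support shrinks to $S_0$, so $\chih_\delta\to0$ in $C^\infty$ away from $S_0$. Since $\phi$ is left unperturbed, the data are a pure gravitational perturbation.

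The core of the proof is the evolution. I would first solve the characteristic problem in a thin slab between $C_o$ and a nearby incoming cone, in double null gauge. I would treat $\chih_\delta$ as a pulse of width $\delta$ and amplitude $\delta^{\frac{k^2}{1-k^2}}$. The Raychaudhuri equation $D\tr\chi=-\frac12(\tr\chi)^2-|\chih|^2-2(L\phi)^2$ then shows that $\tr\chi$ on $C_o$ is unchanged before the pulse. It is then decreased by an amount of order $\delta^{\frac{2k^2}{1-k^2}+1}\int\psi^2|\Theta|^2$.

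The key quantitative input is the self-similar structure of $\mathcal{M}$ near $\mathcal{N}$. Incoming null cones to the past of $\mathcal{N}$ that are at distance $\sim\delta$ from $\mathcal{N}$ experience a blue-shift, i.e. an amplification of $\chih$ and of the deficit in $\tr\chi$ along $\Lb$, by a factor $\sim\delta^{-\frac{k^2}{1-k^2}\cdot 2}\cdot\delta^{-1}$ before reaching the region near the singularity at scale $\delta$. This is exactly the mechanism that fixes the threshold exponent in Theorem \ref{thm:Li25}.

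Concretely, I would rescale by the self-similarity so that the region $\ub\in[\ub_0-2\delta,\ub_0-\delta]$ becomes a unit-size region of the self-similar background. In rescaled variables the perturbation is $O(1)$ in the scale-invariant norm. That is precisely why the exponent $\frac{k^2}{1-k^2}$ appears: the background incoming geometry contributes the factor $r^{-\frac{k^2}{1-k^2}}$ from the behavior of $\partial_r(r\phi)$ near $S_0$. The problem then becomes $\delta$-independent, namely a fixed-size perturbation of the self-similar solution in a fixed compact region. Choosing the amplitude of $\Theta$ large, as a fixed constant independent of $\delta$, makes the $\tr\chi$ deficit large enough to make $\tr\chi$ negative on some sphere. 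That sphere is also required to have $\tr\chib<0$, which holds automatically before the singularity. Scaling back yields a closed trapped surface of area $\sim\delta^2$ for each $\delta\neq0$.

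The main obstacle is the rescaled existence and trapped surface statement. Before the trapped surface forms, one needs a nonlinear existence result for the Einstein-scalar field system outside spherical symmetry on a self-similar background with a large gravitational pulse. I would first try a Christodoulou-type short-pulse hierarchy, with $\delta'$ a fresh auxiliary parameter inside the rescaled problem, adapted to the scalar field background. Alternatively, I would try a direct a priori estimate on a fixed region, using that the background is smooth strictly before $\mathcal{N}$. Large $\Theta$ then forces trapping via the Raychaudhuri/transport equations for $\tr\chi$ along $\Lb$, following An's trapped surface criterion.
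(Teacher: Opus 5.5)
Your data-side claims are fine: support before $S_0$, a uniform $C^{\frac{k^2}{1-k^2}}_{\ub}$ bound, and convergence below the threshold. Your scaling heuristics (amplitude $\delta^{\frac{k^2}{1-k^2}}$, amplification $\delta^{-1-\frac{2k^2}{1-k^2}}$) match the paper's. The evolution step, however, has a genuine gap. The reduction ``rescale by self-similarity to a $\delta$-independent perturbation of the background on a fixed compact region'' fails because the problem has two scales. The data sit on $C_o$ at $|u|\sim1$. A trapped sphere can only appear once $|u|\lesssim C\delta^{\frac{1}{1-k^2}}$, since Raychaudhuri needs $M\gtrsim\Omega_0^2|u|\approx|u|^{1+k^2}$ while $M\sim\delta^{\frac{1+k^2}{1-k^2}}$. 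The self-similar scaling sends $C_{-1}$ to $C_{-\lambda}$, so it can normalize only one of these scales. If you make the pulse of unit width, the initial cone moves to $|u|\sim\delta^{-\frac{1}{1-k^2}}$, and you still must propagate over a range of length $\sim\log\frac{1}{\delta}$ in $\log|u|$. Over that range the pulse is short (self-similar width $\delta|u|^{k^2-1}\ll1$, with large $\partial_{\ub}$-derivatives). Smoothness of the background strictly before $\mathcal{N}$ gives no uniformity either, since the relevant incoming cones converge to $\mathcal{N}$. What is missing is exactly Theorem \ref{main3}: a short-pulse existence theorem valid down to $|u_1|$ with $\delta|u_1|^{p-1}a=\varepsilon$. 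Its $|u|$-weights are tuned to $\Omega_0^2\le|u|^p$. It uses the lower bound on $-\omegab$ along $\Cb_0$ to sign boundary terms in the energy estimates, and the relative scaling control of $\Omega_0$ for top-order $\tr\chib$. The Raychaudhuri argument of Theorem \ref{main4} then follows. ``I would try a short-pulse hierarchy'' names this step without supplying it, and it is most of the proof.

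Three further points are missing.

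\textbf{Amplitude.} You cannot simply take the amplitude large. A larger $a$ shrinks the existence region. Trapping needs $M(u_0,\vartheta)\ge c_1(a\varepsilon^{-1})^{\frac{1+p}{1-p}}\delta^{\frac{1+p}{1-p}}$, while the data supply at most $\sim a^2\delta^{\frac{1+p}{1-p}}$. This closes for large $a$ only because $\frac{1+p}{1-p}<2$, which is the An--Luk-type balance \cite{An-Luk}.

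\textbf{Blue shift on interior cones.} The exact $|u|^{k^2}$ blue shift holds on $\mathcal{N}$, but for interior data you must work on the incoming cone through the front of the pulse. There one only has control with exponent $k^2-\epsilon$ on $[u_0,u_1]$. For fixed $\epsilon$, your data $\delta^{\frac{k^2}{1-k^2}}\Theta$ fall below the $p=k^2-\epsilon$ scale $\delta^{\frac{p}{1-p}}$ by a factor tending to $0$, so the trapping condition fails. The paper gets uniform boundedness at the threshold by letting $\epsilon=\epsilon(\delta)\to0$ with $\delta^{\epsilon(\delta)}\to1$ (and $s_1\nearrow s_*$). You would need this device or a sharper lapse estimate along that cone.

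\textbf{Angular profile.} With the product $\psi(\ub)\Theta(\vartheta)$, $M(u_0,\vartheta)$ vanishes at the zeros of $\Theta$, so no sphere of the double null foliation is trapped. This is not harmless for an isotropic criterion like Theorem \ref{main4}, which needs the lower bound at every $\vartheta$. Either let the angular profile vary with $\ub/\delta$ so that $\int|\chih_0|^2\,\D s>0$ everywhere, or use the deformed-sphere argument of \cite{K-L-R}, as in the proof of Theorem \ref{main2}. Also, the trapped sphere has area $\sim|u_1|^2$, not $\delta^2$.
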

 
 
\begin{remark}
We refrain from including the statement of exterior perturbations: the theorem remains true if we replace the modifier ``supported before $S_0$'' by ``supported after $S_0$''. Although this is a new result, it follows directly from the proof and is therefore not stated separately.
\end{remark}
\begin{remark}
We note that Theorem \ref{main1} gives uniform boundedness of the perturbations at the threshold regularity. Although this was not stated in Theorem \ref{thm:Li25}, the same endpoint property can be obtained there by a minor modification of the argument  as in this paper.
\end{remark}
\begin{remark}
The argument becomes somewhat delicate once the angular derivatives of $L\phi$ are included, which is already present in our previous work \cite{Li-Liu1}, especially when approaching the threshold regularity in this paper.  The above statement is proven under the assumption that some top angular derivative of $\partial_r(r\phi)$ vanishes in a sufficiently fast way as approaching $S_0$. For convenience, we may assume that the scalar field is constant along each spherical section on $C_o$, or even leave it unperturbed.  If we choose to perturb the scalar field in a similar way of the shear simultaneously, then it may not hold that the family we constructed is uniformly bounded at the threshold, but it still tends to zero in all regularities below the threshold.
\end{remark}

The second main result further generalizes the interior instability to an anisotropic setting, allowing the family of perturbations to converge smoothly away of a \emph{point} of $S_0$, rather than the whole spherical section $S_0$ as before. This flexibility fully exploits the genuinely non-spherically symmetric nature. A natural question would then be whether the threshold can still be approached arbitrarily closely from below in such a genuinely localized setting. In this setting,  the perturbations are constructed so that their supports shrink to a point, while the amplitude of the perturbations must be increased to compensate for the shrinking angular support.  If we measure the size of these perturbations in  a  localized H\"older norm (in $\ub$), the H\"older exponent must be bounded away from the threshold $\frac{k^2}{1-k^2}$. A natural way to quantify the size of the perturbations, which reveals that our result can approach the threshold arbitrarily closely from below, is through the scaling of the the total incoming energy:
$$\int_{C_o}|\chih_\delta|^2.$$
The same scaling can equivalently be expressed in terms of the three-dimensional $H^s$ spaces on $C_o$. Note that the power function $\ub^\alpha$ lies in $H^{\frac{1}{2}+\alpha'}$ for $\alpha'<\alpha$, so the threshold regularity corresponds to $H^{\frac{1}{2}+\frac{k^2}{1-k^2}}$. The initial data $\partial_r(r\phi)$ of the $k$-self-similar naked singularity solution lies in $H_{\mathrm{loc}}^{\frac{1}{2}+\alpha'}$ for any $\alpha'<\frac{k^2}{1-k^2}$. In terms of the $H^s$ spaces, the second main result is the following.
 \begin{theorem}\label{main2}
 Consider the same setting as described before Theorem \ref{main1}. Then there exists a $\delta$-family of perturbations of the shear $\chih_\delta$ on $C_o$, supported before $S_0$, such that  $\chih_\delta$ converges as $\delta\to0$ to zero in $C^\infty$ away from a point on $S_0$ and in $H^{\frac{1}{2}+\alpha}(C_o)$ on $C_o$ for any $\alpha<\frac{k^2}{1-k^2}$. The maximal development $\mathcal{M}_\delta$ of perturbed data contains a closed trapped surface for $\delta\ne0$. 
  \end{theorem}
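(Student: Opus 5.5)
We will prove Theorem \ref{main2} by reducing it to the same trapped surface criterion that drives Theorem \ref{main1}, applied now to data localized in a shrinking angular disk. We place the perturbation in a short-pulse type region just before $S_0$. Let $\ub_0$ be the value of $\ub$ on $S_0$ and take the $\ub$-interval $[\ub_0-\delta,\ub_0)$. In an angular disk $D_\epsilon$ of radius $\epsilon=\epsilon(\delta)$ around a fixed point $p\in S_0$, we prescribe
$$\chih_\delta(\ub,\vartheta)=A_\delta\,\psi\Big(\frac{\ub_0-\ub}{\delta}\Big)\,\varphi\Big(\frac{\vartheta-p}{\epsilon}\Big)\,\hat e(\vartheta),$$
with smooth bump functions $\psi,\varphi$ and a fixed smooth traceless symmetric tensor $\hat e$. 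The scalar field data is left unperturbed, following the convention of the remark after Theorem \ref{main1}.

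The first step is to identify the amplitude needed for trapping. Integrating the Raychaudhuri equation $L\tr\chi=-\frac12(\tr\chi)^2-|\chih|^2-2(L\phi)^2$ along $C_o$ shows that $\tr\chi$ drops by roughly $\int|\chih|^2\,\D\ub\sim A_\delta^2\delta$ inside $D_\epsilon$. We then transport $\tr\chib$ in the incoming direction, as in the argument for Theorem \ref{main1} and the blue-shift analysis of \cite{Li25, Li-Liu1}. In the background, the incoming null geodesics from the pulse region approach the past cone $\mathcal N$, where the self-similar geometry amplifies the deficit of $\tr\chi$. The amplification factor is a power $\delta^{-\gamma}$ with $\gamma$ tied to $\frac{k^2}{1-k^2}$; this is exactly what makes the Hölder threshold appear in Theorem \ref{main1}.

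The genuinely new step is anisotropy. A trapped surface localized near $p$ requires that the angular scale $\epsilon$ be large compared with the scale on which the incoming cones focus. It also requires that the deformed sphere, a graph $\ub=\ub_0-\delta\, h(\vartheta)$ in the style of Christodoulou's and An's anisotropic constructions, solve an elliptic inequality of the form $\Deltas h+\ldots<0$. This inequality must hold where $\chih$ is large and must be compatible with untrapped behavior, or handled by a gluing, outside $D_\epsilon$.

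We will try the following route first. We take $\epsilon=\delta^{\beta}$ for a small $\beta>0$ and work in rescaled angular coordinates, where $\nablas\sim\epsilon^{-1}$. Each angular derivative then costs $\epsilon^{-1}$, which is acceptable provided $\delta\epsilon^{-1}$ and similar products stay small in the bootstrap. The trapped surface condition then reads $A_\delta^2\delta\cdot\delta^{-\gamma}\gtrsim 1$. This is the same amplitude as in Theorem \ref{main1}, up to losses of powers of $\epsilon$ coming from the anisotropic elliptic problem, which we absorb by choosing $\beta$ small.

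The second step is the norm computation. For $s=\frac12+\alpha$, the $H^s(C_o)$ norm of $\chih_\delta$ scales like $A_\delta\,\delta^{\frac12-s}\,\epsilon^{1-\min(s,\ldots)}$, up to angular factors. The key gain is the factor $\epsilon^{2/2}=\epsilon$ coming from the two-dimensional angular measure of $D_\epsilon$. With $A_\delta$ fixed by the threshold relation, the norm is at most $\delta^{c(\alpha_*-\alpha)}\epsilon^{1-O(s)}$, where $\alpha_*=\frac{k^2}{1-k^2}$. For every $\alpha<\alpha_*$ this tends to zero as $\delta\to0$, provided $\beta$ is small or the angular $\epsilon$-loss is mild. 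Since $\beta$ may be taken as small as we like, one family serves all $\alpha<\alpha_*$ at once. Smooth convergence away from $p$ is immediate, because the support shrinks to $p$.

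The main obstacle is the anisotropic incoming propagation. We must show that the localized deficit in $\tr\chi$ survives transport to near $\mathcal N$ with the self-similar blue-shift amplification, despite angular derivative terms such as $\divs\eta$, $|\eta|^2$ and the angular derivatives of $L\phi$. These terms are enlarged by $\epsilon^{-1}$ and must remain dominated along the whole incoming segment, where the background degenerates near the singularity. We would control them with a bootstrap in scale-invariant norms weighted by $\epsilon$, using the estimates from the proof of Theorem \ref{main1} as the background. If the $\epsilon^{-1}$ losses close only for $\epsilon\ge\delta^{\beta_0}$ with some $\beta_0>0$, this still suffices, since any positive power of $\epsilon$ gained in the norm is enough.
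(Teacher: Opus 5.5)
Your overall architecture matches the paper's: a $\delta$-pulse of near-threshold amplitude, angular support of radius $\delta^{\text{small}}$, angular-derivative losses absorbed by taking that exponent small, a Klainerman--Luk--Rodnianski (KLR) deformed surface, and an $H^s$ scaling count. However, the steps that carry the proof are wrong or missing.

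First, the trapped surface cannot be a graph $\ub=\ub_0-\delta h(\vartheta)$. Such a surface lies in an outgoing cone $C_u$. Its outgoing null normal is the generator $L$, and the null expansion of $C_u$ does not depend on the cross-section. So it is trapped only where the spheres $S_{\ub,u}$ already are, which fails at angles away from the pulse. The deformation has to be in the $u$-direction on the incoming cone at the end of the pulse: $\{\ub=\delta,\ u=-R(\vartheta)\}$, with $R=C\mathrm{e}^{-\phi}$ and $\phi$ the KLR function satisfying $\Delta\phi+1<\frac12M_0\mathrm{e}^{\phi}$. This reduces $\overline{\tr\chi}<0$ to \eqref{ellipticinequality4}.

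Your trapping heuristic is also off. The argument does not transport $\tr\chib$. It uses the near-conservation in $u$ of $M(u,\vartheta)=\int_0^\delta|u|^2|\Omega\chih|^2\D\ub$ (see \eqref{equationofM}) together with Raychaudhuri along $C_u$. This gives $\tr\chi'(\delta,u,\vartheta)\le\frac{2}{|u|}-\frac{c}{|u|^2}\Omega_0^{-2}(u)M(u,\vartheta)$. Trapping at depth $R$ therefore needs $M\gtrsim\Omega_0^2(-R)R$, which $M\gtrsim R^{1+p}$ guarantees. Your relation $A_\delta^2\delta\cdot\delta^{-\gamma}\gtrsim1$ is not the threshold relation: for $A_\delta=\delta^{\gamma}$ its left side is $\delta^{1+\gamma}$.

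Second, and this is the real content of Theorem \ref{main2} beyond Theorem \ref{main1}, you never check where the anisotropic surface sits relative to the existence region. Write the data as $\delta^{\beta}\chih_0(\ub/\delta,\vartheta/\delta^{\alpha})$; the paper's $\alpha$ is your $\beta$.

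The depth side:
\begin{itemize}
\item The bounds $\log\lambda^{-2}-O(1)\le\phi\le\log\lambda^{-5}+O(1)$ with $\lambda\approx\delta^{\alpha}$ place the surface at depths down to about $C\lambda^5$.
\item Here $C=\delta^{\frac{1+2\beta}{1+p}}$ is the isotropic trapping depth, so the surface sits far deeper than in Theorem \ref{main1}.
\end{itemize}

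The existence side:
\begin{itemize}
\item The existence region of Theorem \ref{main3} shrinks, because the angular derivatives force $a\mapsto a\delta^{\beta-\frac{p}{1-p}-2N\alpha}$.
\item In the interior problem, the lower bound $\Omega_0^2\ge|u|^{k^2+\epsilon}$ on the unperturbed incoming cone produces factors $|u_1|^{-\epsilon}$. These must be absorbed via \eqref{deltau1modified}.
\end{itemize}

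Requiring $R(\vartheta)\ge|u_1|$ gives \eqref{inexistenceregion}, i.e. \eqref{alpha+beta}. Its right side is $\frac{p}{1-p}$, with $p=k^2-\epsilon$, minus a term of order $\epsilon$ and minus a large multiple (of order $N$) of $\alpha$. So the angular localization is a net loss, not the ``key gain'' you claim. The extra amplitude more than cancels the factor $\delta^{\alpha}$ gained from the angular measure. Each fixed choice only yields a family bounded in $H^{\frac12+\alpha+\beta}$ with $\alpha+\beta$ strictly below $\frac{k^2}{1-k^2}$.

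In particular, contrary to your claim, a single fixed small angular exponent does not serve every regularity below the threshold. You must let $\epsilon(\delta)\to0$ and $\alpha(\delta)+\beta(\delta)\to\frac{k^2}{1-k^2}$ while keeping $\delta^{\alpha(\delta)}\to0$, then extract a diagonal family. Your ``main obstacle'' paragraph names the interior degeneracy and the angular losses. It supplies neither the deep existence estimate (which uses only $\Omega_0^2\le|u|^{p}$) nor this depth-versus-existence bookkeeping, and that is where the proof actually lives.
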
 
 
\begin{remark} The family of perturbations constructed in Theorem \ref{main1} can also be measured in $H^s$ spaces and it is uniformly bounded at the threshold. In contrast, the family constructed in Theorem \ref{main2}, with shrinking angular support, is not uniformly bounded at the threshold. Similar to  Theorem \ref{main1}, we still have an anisotropic  exterior version of Theorem \ref{main2} but we will not state it separately. Note that this type of anisotropic result is new even when we consider trapped surface formation from Minkowski. See the discussion below. 
 \end{remark}

 At last, we remark that, although the threshold space $H^{\frac{1}{2}+\frac{k^2}{1-k^2}}$ arises naturally from the regularity of the $k$-self-similar naked singularity solutions, it is not clear whether it is a genuine threshold (outside spherical symmetry). It requests further studies on whether the $k$-self-similar naked singularity solutions are stable under small (anisotropic) perturbations at and above $H^{\frac{1}{2}+\frac{k^2}{1-k^2}}$.

	\subsection{Main Ideas of the Proof}
	
	The proof of Theorem \ref{main1} relies on two main ingredients. The first addresses the difficulties caused by interior perturbations, while the second allows us to obtain perturbations with regularity arbitrarily close to the threshold. Both ingredients were introduced in our previous work \cite{Li25} in spherical symmetry and are adapted here to the non-spherically symmetric setting. 
	
	\subsubsection{Interior perturbations} As mentioned before, the difficulty in studying instability under interior perturbations is that it is not a priori known the end state of the perturbed solution. When studying exterior perturbations, we started from the past null cone $\mathcal{N}$ of the singularity, and reduced it to the trapped surface formation problem exterior to $\mathcal{N}$.  For interior perturbations, we have no such natural starting point.
	
	The solution turns out to be surprisingly straightforward. We do not start from $\mathcal{N}$ but choose to start from an incoming null cone $\mathcal{N}_1$ before $\mathcal{N}$, whose vertex is regular, unlike that of $\mathcal{N}$ (see Figure \ref{fig:nakedinterior}). The perturbations we put in are ``exterior to''  $\mathcal{N}_1$ and hence the past of $\mathcal{N}_1$ will not be influenced by the perturbations. Since $\mathcal{N}_1$ has a regular vertex, it is impossible to find arbitrarily small perturbations exterior to $\mathcal{N}_1$ that lead to trapped surface formation within the future domain of dependence of $\mathcal{N}_1$. Nevertheless, we can quantify the the energy required for trapped surface formation, which will be reduced by the blue shift of $\mathcal{N}_1$, even when the blue shift of $\mathcal{N}_1$ is finite.  The key to resolving this problem is the following. We can let $\mathcal{N}_1$ approach  $\mathcal{N}$ and the blue shift of $\mathcal{N}_1$ tends to infinity. Based on the quantitative analysis of the trapped surface formation, we find that the energy required for trapped surface formation tends to zero, which is exactly the instability. The above idea works for more general singular solutions besides the $k$-self-similar naked singularities. However, establishing interior instability with arbitrarily regularity below the threshold requires a more refined quantitative analysis, which is closely related to the second ingredient discussed below.

	Obviously, the ideas above can be applied to construct non-spherically symmetric gravitational perturbations, which will not influence the past of  $\mathcal{N}_1$. The required quantitative analysis can also be established because $\mathcal{N}_1$ is still spherically symmetric after perturbations.

 \begin{figure}
\includegraphics[width=1.5 in]{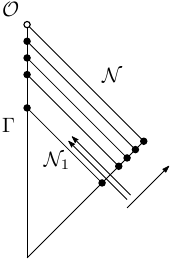}
\caption{}
\label{fig:nakedinterior}
\end{figure}

	\subsubsection{Regularity issues in trapped surface formation} In view of the above discussions,  we are going to focus on the trapped surface formation problem exterior to $\mathcal{N}_1$ (see Figure \ref{fig:ingredient2}). Letting $\mathcal N_1$ approach $\mathcal N$ amounts to varying the data on $\mathcal N_1$. Therefore, the problem is reduced to studying trapped surface formation  with varying data on $\mathcal{N}_1$. As $\mathcal{N}_1$ approaches $\mathcal{N}$, the data on $\mathcal{N}_1$ converge to the singular data on $\mathcal{N}$ with convergence being uniform away from the vertex. So, understanding the exterior instability (to $\mathcal{N}$) remains the key step in understanding interior instability.

 \begin{figure}
\includegraphics[width=1.5 in]{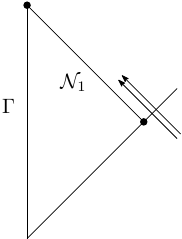}
\caption{}
\label{fig:ingredient2}
\end{figure}
	
	The exterior instability under gravitational perturbations was studied in \cite{Li-Liu1} and \cite{A} based on spherically symmetric background, and in \cite{An3} with a genuine non-spherically symmetric background.  However,  the instability in arbitrary regularity  below the threshold was not achieved.  On a technical level, they essentially fall within the framework of trapped surface formation pioneered by Christodoulou in \cite{Chr08}. He considered the trapped surface formation problem in vacuum, with initial data imposed on two intersecting null cones. Initial data on the incoming one is Minkowskian. Suppose that in a future neighborhood of the incoming null cone,  we have a double null coordinate system  $(\ub, u, \vartheta)$, where  $\ub$ and $u$ are two optical functions with values $\ub=0$ on the incoming null cone, and $u=u_0$ on the outgoing null cone respectively, and both increase towards the future. Christodoulou showed that if the initial shear $\chih$ on the outgoing cone obeys the following schematic upper bounds for $0\le\ub\le\delta$,
$$|\chih|\lesssim\delta^{-\frac{1}{2}}, |\partial_{\ub}\chih|\lesssim\delta^{-\frac{3}{2}},$$
then if $\delta>0$ is sufficiently small, the regular solution of the vacuum Einstein equations exists up to $|u|\approx c$, where the upper bounds still hold true. The choice of $\delta$ depends on $c$. If moreover a lower bound on the energy holds
\begin{equation}\label{Chr08lower}\int_{0}^\delta|\chih|^2(\ub,\vartheta)\D\ub\ge c, \forall\vartheta\in S^2,\end{equation}
then a closed trapped surface will form around $\ub=\delta, |u|\approx c$, within the existence region.  Heuristically, the lower bound on the energy corresponds to the pointwise bound 
$$|\chih|\ge c^{\frac{1}{2}}\delta^{-\frac{1}{2}},$$
whose power in $\delta$ is the same to the upper bound.  Here $c$ is an arbitrary positive constant, this result can be viewed as an $L^2$-instability result, that is, gravitational radiation with arbitrarily small $L^2$ norm leads to trapped surface formation. 

In a later work, An-Luk \cite{An-Luk} assumes the upper bound
$$|\chih|\lesssim a^{\frac{1}{2}},$$
for $0\le\ub\le\delta$, then the solution of vacuum Einstein equations remains regular up to $|u|=\delta a^{\frac{1}{2}}b$, where $b$ is a sufficiently large constant and $a\ge b$. Assuming in addition
$$|\chih|\ge 2\sqrt{ba^{\frac{1}{2}}},$$
then  a closed trapped surface forms at $\ub=\delta, |u|=\delta a^{\frac{1}{2}}b$. Here  $a$ is allowed to depend on $\delta$, then if we set $a^{\frac{1}{2}}=\delta^{-\frac{1}{2}}$, then the upper bound is the same to Christodoulou's. The existence part of the above result says that the solution remains regular up to $|u|=\delta^{\frac{1}{2}}b$, much deeper than Christodoulou's. This allows us to relax the lower bound of Christodoulou to
$$|\chih|\gtrsim 2\sqrt{b}\delta^{-\frac{1}{4}},$$
which is equivalent to that $\chih$ is large in $H^{\frac{1}{4}}$ at the level of scaling. Furthermore, An-Luk's result allows to produce closed trapped surface by putting in even less energy, by setting for example $a\approx b^2$. Now the lower bound becomes
$$|\chih|\ge 2 b,$$
equivalent to that $\chih$ is large in $H^{\frac{1}{2}}$, the scale-critical norm of the vacuum Einstein equations, so we need to solve the equation up to $|u|\approx \delta b^2$. This can be accomplished because the upper bound becomes
$$|\chih|\lesssim b,$$
in which case the solution does indeed remain regular up to $|u|\approx \delta b^2$.

The main idea of An-Luk's work is that if we want to produce closed trapped surface with less energy, then the closed trapped surface is expected to form in a deeper region, so we need to solve the equations in a deeper region, based on a smaller upper bound. The upper and lower bounds should be adjusted simultaneously.

The same idea works for the case when the background is a naked singularity solution. The Minkowski space is indeed a $0$-self-similar solution of the Einstein(-scalar field) equations. In the language of this paper, the above trapped surface formation results correspond to the case when $\mathcal{N}$ is a flat null cone. The singular behavior of $\mathcal{N}$ allows us to produce a closed trapped surface with even smaller incoming energy. To obtain a sharp trapped surface formation result, we need to establish a sharp existence result.

Suppose now $\mathcal{N}$ is the past null cone of the singularity in the $k$-self-similar naked singuarity solution. If we impose on $\mathcal{N}$ the gauge condition $u=-r$, then the null lapse $\Omega=\sqrt{-\frac{1}{2}g(\partial_{\ub},\partial_u)}$ restricted on $\mathcal{N}$ is simply $\Omega^2(0,u)=|u|^{k^2}$ (up to a constant multiple). As a comparison, $\Omega^2(0,u)\equiv1$ for when $\mathcal{N}$ is flat. Suppose that we have all estimates we want in the existence region.  The Raychaudhuri equation along outgoing direction reads 
$$\Omega^{-2}\partial_{\ub}(\Omega^{-1}\tr\chi)\le -\frac{1}{2}(\Omega^{-1}\tr\chi)^2-|\Omega^{-1}\chih|^2.$$
Integrating it along $\ub\in[0,\delta]$ along $C_u$ yields
$$\Omega^{-1}|u|\tr\chi\big|_{\ub=\delta}-\Omega^{-1}|u|\tr\chi\big|_{\ub=0}\lesssim -\delta |u|^{-1}\Omega^{-2}||u|\Omega\chih|^2.$$
On the other hand, the equation of $\partial_u(\Omega\chih)$ shows that
$$|u|\Omega\chih(\ub,u)\approx |u_0|\Omega\chih(\ub,u_0)=: f(\ub), 0\le\ub\le\delta,$$
we will have
$$\Omega^{-1}|u|\tr\chi\lesssim 1-\delta |u|^{-1}\Omega^{-2}|f(\ub)|^2, 0\le\ub\le\delta.$$
In order that a closed trapped surface forms along $C_u$, the initial data $\chih$ should verifies
$$|f(\ub)|^2\gtrsim \Omega^2(0,u)\delta^{-1}|u|=\delta^{-1}|u|^{1+k^2}, 0\le\ub\le\delta.$$
Suppose that the energy we put in are exactly at the threshold, that is, $|f(\ub)|\ge c\delta^{\frac{k^2}{1-k^2}}$, where $c$ is a large constant (small perturbations at the threshold may lead to stability, whose spherically symmetric version was established in \cite{Zheng}), then the closed trapped surface can only form not earlier than $c\delta^{\frac{2k^2}{1-k^2}}\approx \delta^{-1}|u|^{1+k^2}$, that is, 
\begin{equation}\label{expectedlocation}|u|\approx c^{\frac{1}{1+k^2}}\delta^{\frac{1}{1-k^2}}.\end{equation}
Note that this is exactly the intersection of $\ub=\delta$ and the future boundary of the region I of the $k$-self-similar naked singularity solutions  in the problem of naked singularity constructions (see \cite{R-Sh, JS1, An2}).   In fact, when the $k$-self-similar naked singularity solution is written in the double null coordinate system described as above, then the dimensionless variables are functions of $\frac{\ub}{|u|^{1-k^2}}$ alone, which is exactly the definition of $k$-self-similarity. On the other hand, in order to solve the solution up to $|u|\approx c^{\frac{1}{1+k^2}}\delta^{\frac{1}{1-k^2}}$, we need to propagate the self-similar bound $|\Omega^{-1}\chih|\lesssim C |u|^{-1}$, or
\begin{equation}\label{expectedupper}|u||\Omega\chih|\lesssim C|u|^{k^2},\end{equation}
which is consistence with the initial upper bound  (plugging in \eqref{expectedlocation})
\begin{equation}\label{expectedupperinitial}|u_0||\Omega\chih(\ub,u_0)|\lesssim c^{\frac{k^2}{1+k^2}}C\delta^{\frac{k^2}{1-k^2}},  \end{equation}
The above heuristic will be made rigorous in Theorem \ref{main4} in Section \ref{sec:trappedsurface},  a trapped surface formation theorem with data that are large at the threshold regularity, which is a nonzero-$k$ analogue of the result of An--Luk.  This result also gives exterior instability below the threshold since initial data with the form
$$\chih(\ub, u_0,\vartheta)=\delta^{\frac{k^2}{1-k^2}}\chih_0(\frac{\ub}{\delta},\vartheta),$$
which verifies \eqref{expectedupperinitial}, is already small in $C^\alpha_{\ub}$ for any $\alpha<\frac{k^2}{1-k^2}$ when $\delta$ is small enough. Previous works on exterior instability (like \cite{ Li-Liu1, A ,L-Z3,An3}) only propagate the estimates up to $|u|\approx \delta^{1-c(k^2)}$ where $c(k^2)$ is a positive number depending on $k^2$, so the exterior instability in arbitrary regularity below the threshold cannot be achieved. 
 
However, the above heuristic cannot be adapted directly to the proof of interior instability. The main additional difficulty is that, when $\mathcal{N}_1$ is strictly before $\mathcal{N}$, we will not have $\Omega^2(0,u)=|u|^{k^2}$. But instead, we have a weaker inequality
$$|u|^{k^2+\epsilon}\le \Omega^2(0,u)\le |u|^{k^2-\epsilon},$$
which only holds for $u\in [u_0,u_\epsilon]$ where $u_\epsilon\ne0$, and $\epsilon\to0$, $u_\epsilon\to0$ as $\mathcal{N}_1$ approaches $\mathcal{N}$. In fact, we have a stronger \emph{relative scaling control}
$$\left(\frac{|u|}{|u'|}\right)^{k^2+\epsilon}\le \frac{\Omega^2(0,u)}{\Omega^2(0,u')}\le \left(\frac{|u|}{|u'|}\right)^{k^2-\epsilon}, |u_\epsilon|\le |u|\le |u'|\le|u_0|.$$
The best we can hope for is to carry out the above heuristic with $k^2$ replaced by $k^2-\epsilon$. Since $\epsilon$ can be chosen arbitrarily small, we should very carefully choose the family of perturbations so that their size at the threshold is still uniformly bounded. One additional difficulty is that because the upper and lower bounds for $\Omega^2$ have different exponents.  Since invoking the lower bound would result in a loss in the power of $|u|$, making the estimates difficult to track, we will carefully arrange the argument so that only the upper bound $|u|^{k^2-\epsilon}$ is used throughout. The only place we use the lower bound is in the proof of Theorem \ref{main2} in Section \ref{sec:trappedsurface}.

\subsubsection{The anisotropic setting}

The proof of Theorem   \ref{main2} rely on one more ingredient, which can be traced back to the work of Klainerman--Luk--Rodnianski \cite{K-L-R} on anisotropic trapped surface formation, which was then generalized in \cite{An-Han, A, An3}. The main contribution of \cite{K-L-R} is relax the lower bound \eqref{Chr08lower} needed for trapped surface formation to 
\begin{equation*} 
\sup_{\vartheta\in S^2}\int_{0}^\delta|\chih|^2(\ub,\vartheta)\D\ub>0,\end{equation*}
or more quantitatively, 
\begin{equation}\label{K-L-Rlower} 
\inf_{\vartheta\in B_p(\epsilon)\subset S^2}\int_{0}^\delta|\chih|^2(\ub,\vartheta)\D\ub\ge M_*>0,\end{equation}
where $p\in S^2$ and $\epsilon, M_*$ be two constants.  It was shown that all spherical sections of the incoming null cone $\ub=\delta$ are not trapped but a closed trapped surface does exist along $\ub=\delta$. This closed trapped surface is obtained by deforming the foliation on $\ub=\delta$ and the deepest location of this trapped surface is around $|u|\approx M_*\epsilon^5$, which lies in the existence region established by Christodoulou \cite{Chr08} if $\delta$ is chosen sufficiently small (depending on $M_*,\epsilon$).  Note that under the condition \eqref{K-L-Rlower}  the square of the $L^2$ norm of $\chih$ is $\gtrsim M_*\epsilon^2$.

Now we consider an analogue to Theorem \ref{main2} on the $L^2$-trapped surface formation, that is, finding closed trapped surface when $\chih$ is a $\delta$-family data so that the support of $\chih$ shrinks to a point $p$ at $S_0=\{\ub=0, u=u_0\}$ as $\delta\to0$. At the same time, we require that the $L^2$ norm of $\chih_\delta$ is still larger than a positive constant independent of $\delta$. A possible way to define 
$$\chih_\delta(\ub,\vartheta)=\delta^{-\frac{1}{2}-\alpha_1}\chih_0(\frac{\ub}{\delta}, \frac{\vartheta}{\delta^{\alpha_2}}),$$
where $\chih_0$ is some fixed seed data with its support lying within a small neighborhood of $\vartheta=0$.  In fact, the case $\alpha_1=\alpha_2=0$ is exactly the family of data considered in Christodoulou's work \cite{Chr08}, and in a later generalization of Klainerman--Rodnianski \cite{K-R12}, existence theorem also up to $|u|\approx c$ is established in the case $\alpha_1=0, \alpha_2=\frac{1}{2}$.   The $L^2$ norm of $\chih_\delta$ is $\approx \|\chih_0\|_{L^2}\delta^{\alpha_2-\alpha_1}$ so we may set $\alpha_1=\alpha_2$ in which case $\|\chih_\delta\|_{L^2(C_{u_0})}\approx \|\chih_0\|_{L^2}$. To find a closed trapped surface,  we may apply the argument in \cite{K-L-R} for $\epsilon=\delta^{\alpha_2}$. Then the closed trapped surface is expected to form before $|u|\approx \|\chih_0\|_{L^2}\delta^{5\alpha_2}$. But this lies outside the existence region established in \cite{Chr08}. Moreover, the introduction of $\alpha_1$ and $\alpha_2$ may further shrink the existence region. At this point, the existence result in \cite{An-Luk} becomes crucial, since it extends the existence region to $|u|\approx\delta^{\frac{1}{2}}b$, thereby covering $|u|\approx \|\chih_0\|_{L^2}\delta^{5\alpha_2}$ whenever $5\alpha_2<\frac{1}{2}$. Although the introduction of $\alpha_1,\alpha_2$ may reduce the existence region, but we may choose $\alpha_2$ small enough so that the desired construction remains compatible with the reduced existence region, since we only need to construct one particular family of $\chih_\delta$ rather than pursue an optimal $\alpha_2$.  We remark that the techniques developed in \cite{K-R12} may help retain the existence region when introducing $\alpha_2>0$ even when it is not small, but we do not pursue this direction here.

 The same heuristic works when the background is $k$-self-similar, which is the main idea in proving Theorem \ref{main2}. The key point is to establish a trapped surface theorem allowing the initial data to satisfy upper and lower bounds with different scales. We need to make an additional effort to show that we can choose suitable parameters so that the threshold can be approached, especially when interior instability is considered.  We remark that the work \cite{A} of An considered the anisotropic (exterior) problem when the background is $k$-self-similar, with the number $\alpha_2$ being set to be zero.  In the language of Theorem \ref{main1} and \ref{main2},  the family constructed in \cite{A} of exterior gravitational perturbations converges to zero in $C^\infty$ outside an $\epsilon$-ball centered at  $p\in S_0$ and $C^\alpha_{\ub}C^\infty_\vartheta$ for some $\alpha$. The main technical novelties of this paper are to push the exponent arbitrarily close to the threshold, allow the ball radius $\epsilon$ to depend on $\delta$ and shrink to zero, and adapt the argument to the interior instability setting.

\subsection*{Acknowledgement}

 The authors are supported by National Key R\&D Program of China (No. 2022YFA1005400) and NSFC (12326602, 12141106).

	\section{Equations}
	
	In this section, we list all equations we need, which are written in double null frames.  The notation and setup in this section are taken largely from our previous work \cite{Li-Liu1}, following the notations in Christodoulou's monograph \cite{Chr08}, which may be slightly different from those in other literature.  We assume the solution $(\mathcal{M},g)$ is foliated by two optical functions, $u$ and $\ub$, that is
 $$g(\nabla u,\nabla u)=g(\nabla\ub,\nabla\ub)=0.$$
 In addition, we require that $u$ and $\ub$  increase towards the future. We use $C_u$ to denote the  null hypersurfaces that are the level sets of $u$ and use ${\Cb}_{\ub}$ to denote the incoming null hypersurfaces that are the level sets of $\ub$. We denote the intersection $S_{\ub,u}=\Cb_{\ub} \cap C_u$, which is a  space-like two-sphere. The space-time metric $g$ induces a Riemannian metric $\gs$ on $S_{\ub,u}$ and $\epsilons$ is the volume form of $\gs$ on $S_{\ub,u}$. We use   $\nablas$ to denote the  covariant derivative (with respect to $\gs$) on $S_{\ub,u}$ and $\Ks$ is the Gauss curvature of $S_{\ub,u}$.

The lapse function $\Omega$ is defined by the formula
$$ \Omega^{-2}=-2g(\nabla\ub,\nabla u).$$
  We then define the normalized null pair $(\Lbh, \Lh)$ by
  $$e_3=\Lbh=-2\Omega\nabla\ub,\ e_4=\Lh=-2\Omega\nabla u.$$
  We also define another null pair
  $$\Lb=\Omega \Lbh,\ L=\Omega \Lh.$$
 The flows generated by $\Lb$ and $L$ preserve the double null foliation. Moreover, we define
 $$\Lb'=-2\nabla\ub=\Omega^{-1}\Lbh,\ \Lb'=-2\nabla u=\Omega^{-1}\Lh$$
 which is geodesic. 

We can introduce the double null coordinate system $(\ub,u,\vartheta^A)$ on $\mathcal{M}$, where $A,B,C,\cdots$ are denoted an index from $1$ to $2$. In such a coordinate system, the Lorentzian metric $g$ takes the following form
\begin{align}\label{doublenullmetric}
g=-2\Omega^2(\D\ub\otimes\D u+\D u\otimes\D \ub)+\gs_{AB}(\D\vartheta^A-b^A\D u)\otimes(\D\vartheta^B-b^B\D u),
\end{align}
where the vectorfield $b=b^A\partial_{\vartheta^A}$ is tangent to $S_{\ub,u}$. The null vectors $\Lb$ and $L$ can be computed as $\Lb=\partial_u+b^A\partial_{\vartheta^A}$ and $L=\partial_{\ub}$.

We call $\psi$ to be a tangential tensorfield if $\psi$ is \textit{a priori} a tensorfield defined on the space-time $\mathcal{M}$ and all the possible contractions of $\psi$ with either $\Lbh$ or $\Lh$ are zeros. By choosing a tangential frame $(e_1,e_2)$, which is tangent to $S_{\ub,u}$, each tangential tensorfield can be express as, for example $\psi_A, \psi_{AB}, \psi^A,\cdots$. We use $D\psi$ and $\Db\psi$ to denote the projection to $S_{\ub,u}$ of usual Lie derivatives $\mathcal{L}_L\psi$ and $\mathcal{L}_{\Lb}\psi$, which are again tangential tensorfields. The notation $\nablas\psi$ also makes sense when $\psi$ is tangential. We also denote $\nablas_X\psi$ to be the projection of the spacetime covariant derivative $\nabla_X\psi$ to $S_{\ub,u}$, where $X$ is not necessarily a tangential vectorfield.

Using the null frame $(e_1,e_2,\Lbh,\Lh)$, the connection coefficients can be decomposed as the following tangential tensorfields:
\begin{align*}
\chi_{AB}&=g(\nabla_A\Lh,e_B),\quad \eta_A=-\frac{1}{2}g(\nabla_{\Lbh}e_A,\Lh),\quad \omega=\frac{1}{2}\Omega g(\nabla_{\Lh}\Lbh,\Lh),\\
\chib_{AB}&=g(\nabla_A\Lbh,e_B), \quad\etab_A=-\frac{1}{2}g(\nabla_{\Lh}e_A,\Lbh), \quad\omegab=\frac{1}{2}\Omega g(\nabla_{\Lbh}\Lh,\Lbh).
\end{align*}
We also define the following normalized quantities:
$$\chi'=\Omega^{-1}\chi,\ \chib'=\Omega^{-1}\chib,\ \zeta=\frac{1}{2}(\eta-\etab).$$
 The ($2$-dimensional) trace of $\chi$ and $\chib$ are denoted by
 $$\tr\chi = \gs^{AB}\chi_{AB},\ \tr\chib = \gs^{AB}\chib_{AB},$$
 and the trace-free parts of $\chi$ and $\chib$ are denoted by
 $$\chih=\chi-\frac{1}{2}\tr\chi\gs,\ \chibh=\chib-\frac{1}{2}\tr\chib\gs.$$ 
  By definition, we can check directly the following useful identities :
  $$\ds\log\Omega=\frac{1}{2}(\eta+\etab),\ D\log\Omega=\omega,\ \Db\log\Omega=\omegab.$$

Recall that the Weyl curvature tensor is
$$\mathbf{W}_{\alpha\beta\gamma\delta}=\mathbf{R}_{\alpha\beta\gamma\delta}+\frac{\mathbf{R}}{6}(g_{\alpha\gamma}g_{\beta\delta}-g_{\alpha\delta}g_{\beta\gamma})+\frac{1}{2}(g_{\alpha\delta}\mathbf{Ric}_{\beta\gamma}+g_{\beta\gamma}\mathbf{Ric}_{\alpha\delta}-g_{\alpha\gamma}\mathbf{Ric}_{\beta\delta}-g_{\beta\delta}\mathbf{Ric}_{\alpha\gamma}).$$
It can be decomposed as
\begin{align*}
\alpha_{AB}&=\mathbf{W}(e_A,\Lh,e_B,\Lh),\quad\beta_A=\frac{1}{2}\mathbf{W}(e_A,\Lh,\Lbh,\Lh),\quad\rho=\frac{1}{4}\mathbf{W}(\Lbh,\Lh,\Lbh,\Lh),\\
\alphab_{AB}&=\mathbf{W}(e_A,\Lbh,e_B,\Lbh),\quad\betab_A=\frac{1}{2}\mathbf{W}(e_A,\Lbh,\Lbh,\Lh),\quad\sigma=\frac{1}{4}\mathbf{W}(\Lbh,\Lh,e_A,e_B)\epsilons^{AB}.
\end{align*}
By the algrebric property of Weyl tensor, $\alpha$ and $\alphab$ are trace-free.

    The followings are the \textit{null structure equations} (where  $\Ks$ is the Gauss curvature of $S_{\ub,u}$):
    \begin{align*}
    \Db(\Omega\tr\chib)&=-\frac{1}{2}(\Omega\tr\chib)^2+2\omegab\Omega\tr\chib-|\Omega\chibh|^2-2(\Lb\phi)^2,\\
    D\tr\chi'&=-\frac{1}{2}(\Omega\tr\chi')^2-|\chih|^2-2(\Lh\phi)^2,\\
    \Dbh(\Omega\chih)&=\Omega^2(\nablas \tensor \eta + \eta \tensor \eta +\frac{1}{2}\tr\chib\chih-\frac{1}{2}\tr\chi \chibh+\ds\phi\tensor\ds\phi),\\
    \Dh(\Omega\chibh)&=\Omega^2(\nablas \tensor \etab + \etab \tensor \etab +\frac{1}{2}\tr\chi\chibh-\frac{1}{2}\tr\chib \chih+\ds\phi\tensor\ds\phi),\\
    \Db(\Omega\tr\chi)&=\Omega^2(2\divs\eta+2|\eta|^2-\tr\chi\tr\chib-2\Ks+2|\ds\phi|^2),\\
    D(\Omega\tr\chib)&=\Omega^2(2\divs\etab+2|\etab|^2-\tr\chi\tr\chib-2\Ks+2|\ds\phi|^2),\\
    D\eta &= (\Omega\chi)\cdot\etab-(\Omega\beta+L\phi\ds\phi),\\
    \Db\etab &= (\Omega\chib) \cdot\eta+(\Omega\betab-\Lb\phi\ds\phi),\\
    D  \omegab &=\Omega^2(2(\eta,\etab)-|\eta|^2-(\rho+\frac{1}{6}\mathbf{R}+\Lh\phi\Lbh\phi)),\\
    \Db  \omega &=\Omega^2(2(\eta,\etab)-|\etab|^2-(\rho+\frac{1}{6}\mathbf{R}+\Lh\phi\Lbh\phi)),\\
    \Ks+\frac{1}{4}\tr \chi\tr\chib-\frac{1}{2}(\chih,\chibh)&=-(\rho+\frac{1}{6}\mathbf{R})+|\ds\phi|^2,\\
    \divs(\Omega\chih)&=\frac{1}{2}\Omega^2\ds\tr \chi'+\Omega\chih\cdot\etab+\frac{1}{2}\Omega\tr \chi\eta-(\Omega\beta-L\phi\ds\phi),\\
    \divs(\Omega\chibh)&=\frac{1}{2}\Omega^2\ds\tr \chib'+\Omega\chibh\cdot\eta+\frac{1}{2}\Omega\tr \chib\etab+(\Omega\betab+\Lb\phi\ds\phi).\\
    \end{align*}

	The contracted second Bianchi identity will accordingly be the following inhomogeneous equation:
	\begin{equation*}
		\nabla^{\alpha}\mathbf{W}_{\alpha\beta\gamma\delta}=\nabla_{[\gamma}\mathbf{R}_{\delta]\beta}+\frac{1}{6}g_{\beta[\gamma}\nabla_{\delta]}\mathbf{R}
	\end{equation*}
	This equation can be decomposed using the null frame $(e_1,e_2,e_3,e_4)$ into components, which we call \textit{null Bianchi equations}:

    \begin{align*}
    &\Dbh\alpha-\frac{1}{2}\Omega\tr\chib \alpha+2\omegab\alpha+\Omega\{-\nablas\tensor\beta -(4\eta+\zeta)\tensor \beta+3\chih (\rho+\frac{1}{6}\mathbf{R})+3{}^*\chih \sigma\}
    \\=&-\Omega\left(\nablas\Lh\phi\tensor\ds\phi-\nablas\tensor\ds\phi\Lh\phi\right.\\
    &\left.-\frac{1}{2}\tr\chi\ds\phi\tensor\ds\phi-\chih\cdot\ds\phi\tensor\ds\phi+\frac{3}{2}\chih\Lh\phi\Lbh\phi-\chih|\ds\phi|^2+\frac{1}{2}\chibh(\Lh\phi)^2+\zeta\tensor\ds\phi\Lh\phi\right),\\
    &D\beta+\frac{3}{2}\Omega\tr\chi\beta-\Omega\chih\cdot\beta-\omega\beta-\Omega\{\divs\alpha+(\etab+2\zeta)\cdot\alpha\}\\
    =&\Omega\left(\ds\phi\Lh\Lh\phi-\Lh\phi\nablas\Lh\phi+\frac{1}{2}\tr\chi\ds\phi\Lh\phi+\chih\cdot\ds\phi\Lh\phi-\Omega^{-1}\omega\Lh\phi\ds\phi-(\Lh\phi)^2\zeta\right),\\
    &\Db\beta+\frac{1}{2}\Omega\tr\chib\beta-\Omega\chibh \cdot \beta+\omegab \beta-\Omega\{\ds (\rho+\frac{1}{6}\mathbf{R})+{}^*\ds \sigma+3\eta(\rho+\frac{1}{6}\mathbf{R})+3{}^*\eta\sigma+2\chih\cdot\betab\}\\
    =&-\Omega\left(\ds\phi\Deltas\phi-\nablas\Lbh\phi\Lh\phi-\frac{1}{2}\tr\chi\Lbh\phi\ds\phi+\chibh\cdot\ds\phi\Lh\phi-(\eta-\zeta)\Lbh\phi\Lh\phi+\eta|\ds\phi|^2\right),\\
    &D(\rho+\frac{1}{6}\mathbf{R})+\frac{3}{2}\Omega\tr\chi (\rho+\frac{1}{6}\mathbf{R})-\Omega\{\divs \beta+(2\etab+\zeta,\beta)-\frac{1}{2}(\chibh,\alpha)\}\\=&-\Omega\left(\Lh\phi\Delta\phi-\nablas\Lh\phi\cdot\ds\phi+\chih\cdot\ds\phi\cdot\ds\phi-\frac{1}{2}\tr\chib(\Lh\phi)^2-\zeta\cdot\ds\phi\Lh\phi\right),\\
    &D\sigma+\frac{3}{2}\Omega\tr\chi\sigma+\Omega\{\curls\beta+(2\etab+\zeta,{}^*\beta)-\frac{1}{2}\chibh\wedge\alpha\}\\
    =&-\Omega\left(\nablas\Lh\phi\wedge\ds\phi-\chih\cdot\ds\phi\wedge\ds\phi+\zeta\wedge\ds\phi\Lh\phi\right),\\
    &D\betab+\frac{1}{2}\Omega\tr\chi\betab-\Omega\chih \cdot \betab+\omega \betab+\Omega\{\ds (\rho+\frac{1}{6}\mathbf{R})-{}^*\ds \sigma+3\etab(\rho+\frac{1}{6}\mathbf{R})-3{}^*\etab\sigma-2\chibh\cdot\beta\}\\
    =&\Omega\left(\ds\phi\Deltas\phi-\nablas\Lh\phi\Lbh\phi-\frac{1}{2}\tr\chib\Lh\phi\ds\phi+\chih\cdot\ds\phi\Lbh\phi-(\etab+\zeta)\Lh\phi\Lbh\phi+\etab|\ds\phi|^2\right),\\
    &\Db(\rho+\frac{1}{6}\mathbf{R})+\frac{3}{2}\Omega\tr\chib (\rho+\frac{1}{6}\mathbf{R})+\Omega\{\divs \betab+(2\eta-\zeta,\betab)+\frac{1}{2}(\chih,\alphab)\}\\
    =&-\Omega\left(\Lbh\phi\Delta\phi-\nablas\Lbh\phi\cdot\ds\phi+\chibh\cdot\ds\phi\cdot\ds\phi-\frac{1}{2}\tr\chi(\Lbh\phi)^2+\zeta\cdot\ds\phi\Lbh\phi\right),\\
    &\Db\sigma+\frac{3}{2}\Omega\tr\chib\sigma+\Omega\{\curls\betab+(2\eta-\zeta,{}^*\betab)+\frac{1}{2}\chih\wedge\alphab\}\\
    =&\Omega\left(\nablas\Lbh\phi\wedge\ds\phi-\chibh\cdot\ds\phi\wedge\ds\phi-\zeta\wedge\ds\phi\Lbh\phi\right),\\
    &\Db\betab+\frac{3}{2}\Omega\tr\chib\betab-\Omega\chibh\cdot\betab-\omegab\betab+\Omega\{\divs\alphab+(\eta-2\zeta)\cdot\alphab\}\\
    =&-\Omega\left(\ds\phi\Lbh\Lbh\phi-\Lbh\phi\nablas\Lbh\phi+\frac{1}{2}\tr\chib\ds\phi\Lbh\phi+\chibh\cdot\ds\phi\Lbh\phi-\Omega^{-1}\omegab\Lbh\phi\ds\phi+(\Lbh\phi)^2\zeta\right),\\
    &\Dh\alphab-\frac{1}{2}\Omega\tr\chi \alphab+2\omega\alphab+\Omega\{\nablas\tensor\betab +(4\etab-\zeta)\tensor \betab+3\chibh (\rho+\frac{1}{6}\mathbf{R})-3{}^*\chibh \sigma\}\\
    =&-\Omega\left(\nablas\Lbh\phi\tensor\ds\phi-\nablas\tensor\ds\phi\Lbh\phi\right.\\
    &\left.-\frac{1}{2}\tr\chib\ds\phi\tensor\ds\phi-\chibh\cdot\ds\phi\tensor\ds\phi+\frac{3}{2}\chibh\Lbh\phi\Lh\phi-\chibh|\ds\phi|^2+\frac{1}{2}\chih(\Lbh\phi)^2-\zeta\tensor\ds\phi\Lbh\phi\right).\\
     \end{align*}

In this paper, we consider instead the following \textit{renormalized null Bianchi equations}:
    \begin{align*}
    D\Ks&+\Omega\tr\chi \Ks+\divs(\Omega\beta-L\phi\nablas\phi)\\
    &-\Omega\chih\cdot\nablas\etab+\frac{1}{2}\Omega\tr\chi\divs\etab+(\Omega\beta-L\phi\nablas\phi)\cdot\etab-\Omega\chih\cdot\etab\cdot\etab+\frac{1}{2}\Omega\tr\chi|\etab|^2=0\\
     D\sigmac&+\frac{3}{2}\Omega\tr\chi\sigmac+\curls(\Omega\beta-L\phi\nablas\phi)+\frac{1}{2}\Omega\chih\wedge(\etab\tensor\etab+\nablas\tensor\etab)\\
     &+\etab\wedge(\Omega\beta-L\phi\nablas\phi)+2\nablas L\phi\wedge\nablas\phi=0\\
     \Db(\Omega\beta-L\phi\nablas\phi)&+\frac{1}{2}\Omega\tr\chib(\Omega\beta-L\phi\nablas\phi)-\Omega\chibh\cdot(\Omega\beta-L\phi\nablas\phi)+\Omega^2\ds \Ks-\Omega^2{}^*\ds\sigmac\\
     &+3\Omega^2(\eta \Ks-{}^*\eta\sigmac)-\frac{1}{2}\Omega^2(\ds(\chih,\chibh)+{}^*\ds(\chih\wedge\chibh))-\frac{3}{2}\Omega^2(\eta(\chih,\chibh)+{}^*\eta(\chih\wedge\chibh))\\
     &+\frac{1}{4}\Omega^2\ds(\tr\chi\tr\chib)+\frac{3}{4}\Omega^2\tr\chi\tr\chib\eta-2\Omega\chih\cdot(\Omega\betab+\Lb\phi\nablas\phi)\\
    =&-2\Omega^2\Deltas\phi\nablas\phi+\Omega^2\ds|\ds\phi|^2-2\Omega\chih\cdot\nablas\phi\Lb\phi+\Omega\tr\chi\Lb\phi\nablas\phi\\
    &-2\Omega^2\eta\cdot\nablas\phi\nablas\phi+2\Omega^2\eta|\ds\phi|^2,\\
    \Db(\Ks-\frac{1}{|u|^2})&+\frac{3}{2}\Omega\tr\chib (\Ks-\frac{1}{|u|^2})+(\Omega\tr\chib+\frac{2}{|u|})\frac{1}{|u|^2}\\
    =&\divs(\Omega\betab+\Lb\phi\nablas\phi)+\Omega\chibh\cdot\nablas\eta+\frac{1}{2}\Omega\tr\chib\mu\\
    &+(\Omega\betab+\Lb\phi\nablas\phi)\cdot\eta+\Omega\chibh\cdot\eta\cdot\eta-\frac{1}{2}\Omega\tr\chib|\eta|^2,\\
    \Db\sigmac&+\frac{3}{2}\Omega\tr\chi\sigmac+\curls(\Omega\betab+\Lb\phi\nablas\phi)\\
    &-\frac{1}{2}\Omega\chibh\wedge(\eta\tensor\eta+\nablas\tensor\eta)+\eta\wedge(\Omega\betab+\Lb\phi\nablas\phi)-2\nablas \Lb\phi\wedge\nablas\phi=0\\
    D(\Omega\betab+\Lb\phi\nablas\phi)&+\frac{1}{2}\Omega\tr\chi(\Omega\betab+\Lb\phi\nablas\phi)-\Omega\chih\cdot(\Omega\betab+\Lb\phi\nablas\phi)-\Omega^2\ds \Ks-\Omega^2{}^*\ds\sigmac\\
    &-3\Omega^2(\etab \Ks+{}^*\etab\sigmac)+\frac{1}{2}\Omega^2(\ds(\chih,\chibh)-{}^*\ds(\chih\wedge\chibh))+\frac{3}{2}\Omega^2(\etab(\chih,\chibh)+{}^*\etab(\chih\wedge\chibh))\\
    &-\frac{1}{4}\Omega^2\ds(\tr\chi\tr\chib)-\frac{3}{4}\Omega^2\tr\chi\tr\chib\etab-2\Omega\chibh\cdot(\Omega\beta-L\phi\nablas\phi)\\
    =&2\Omega^2\Deltas\phi\nablas\phi-\Omega^2\ds|\ds\phi|^2+2\Omega\chibh\cdot\nablas\phi L\phi-\Omega\tr\chib L\phi\nablas\phi\\
    &+2\Omega^2\etab\cdot\nablas\phi\nablas\phi-2\Omega^2\etab|\ds\phi|^2.
   \end{align*}
   where $\check{\sigma}=\sigma-\frac{1}{2}\chih\wedge\chibh$ and $\mu$ is defined through
   $$\divs\eta=\Ks-\frac{1}{|u|^2}-\mu.$$

	We can rewrite the wave equations in the double null coordinate system, which reads
	\begin{align*}
		\Db L\phi+\frac{1}{2}\Omega\tr\chib L\phi&=\Omega^2\Deltas\phi+2\Omega^2(\eta,\ds\phi)-\frac{1}{2}\Omega\tr\chi\Lb\phi,\\
		D\ds\phi&=\nablas L\phi,\\
		D\Lb\phi+\frac{1}{2}\Omega\tr\chi\Lb\phi&=\Omega^2\Deltas\phi+2\Omega^2(\etab,\ds\phi)-\frac{1}{2}\Omega\tr\chib L\phi,\\
		\Db\ds\phi&=\nablas\Lb\phi.
	\end{align*}

	\section{Existence theorem and Geometric lemmas}
	
	\subsection{The existence theorem}
	
	In this section, we establish the main existence theorem. It follows by standard argument once the following a priori estimates are derived. The set up of the theorem is as follows. Consider the initial value problem of \eqref{ES}. Let $k^2\in(0,\frac{1}{3})$ be fixed. Suppose that we have initial data imposed on $\Cb_0\cup C_{u_0}$. The data on $\Cb_0$ is spherically symmetric. The function $u$ is chosen on $\Cb_0$ such that $u=-r$, and suppose that  on $\Cb_0$
\begin{equation}\label{initialboundCb0}|r\Lb\phi|\le 1,\quad |r\Omega^{-2}L\phi|\le K, \quad |r\Omega^{-1}\tr\chi|\le 2\end{equation}
for some $K\ge1$ and
\begin{equation}\label{omegablower}-2\omegab\big|_{\Cb_0}\ge (k^2-\epsilon) |u|^{-1}, u\in [u_0, u_1]\end{equation}
 for some $\epsilon\in[0,k^2)$ and $u_1\in (u_0,0)$. We choose $p\in(0,k^2-\epsilon)$. Consequently, we have the relative scaling control
 \begin{equation}\label{relativecontrol} \frac{\Omega^2(0,u)}{\Omega^2(0,u')}\le \left(\frac{|u|}{|u'|}\right)^{k^2-\epsilon}\le \left(\frac{|u|}{|u'|}\right)^{p}, |u_1|\le |u|\le |u'|\le|u_0|.\end{equation}
 We may assume $\Omega^2_0(u_0)=|u_0|^p$ then we have $\Omega^2_0(u)\le|u|^p, u\in [u_0,u_1]$. On $C_{u_0}$, we choose $\omega\equiv0$ and hence $\Omega$ is constant along $C_{u_0}$, and $\chih$, $L\phi$ obey for $\ub\in[0,\delta]$
\begin{equation}\label{initialbound} |\nablas^i(\Omega\chih)|, |\nablas^i \widetilde{L\phi}|\le  a\delta^{\gamma}\end{equation}
for some $a\ge K$ and for all $i\le N+3$ where $N$ is some large integer and $\gamma\ge\frac{p}{1-p}$. Here $a$ is allowed to depend on $\delta$. 

\begin{remark}
Throughout the paper, only the case $\gamma=\frac{p}{1-p}$ is needed. Nevertheless, we allow $\gamma\geq\frac{p}{1-p}$ for possible future applications. 
\end{remark}
\begin{remark}
The constant $K$ is introduced for application in Section \ref{sec:trappedsurface} for trapped surface formation. In proving existence theorem, we simply need the bound $K\le a$. In view of the $k$-self-similar background, $K$ is almost $\frac{1}{k}$. 
\end{remark}
    
	We introduce the following norms of different quantities. The integral is defined to be
	$$\int_{C_u}:=\int_0^\delta\D\ub\int_{S_{\ub,u}}\D\mu_{\gs}, \quad\int_{\Cb_{\ub}}:=\int_{u_0}^{u_1}\D\ub\int_{S_{\ub,u}}\D\mu_{\gs},$$ $$\int_{\mathcal{M}}:=\int_0^\delta\D\ub\int_{u_0}^{u_1}\D u\int_{S_{\ub,u}}\D\mu_{\gs}.$$
	Note that the integration on $\mathcal{M}$ is not taken with respect to the spacetime volume form.
    \begin{itemize}
    \item\textit{Curvature components:}
    \begin{align*}
     \|\beta\|^2_{\mathcal{A}}\triangleq\sup_u\sup_{0\le i\le N}&\left[ a^{-2}\delta^{-1-2\gamma}\int_{C_u}|u|^{2+2\gamma-p(3+2\gamma)+2i}|\nablas^i(\Omega^2\beta-L\phi\Omega\nablas\phi)|^2\right]
    \end{align*}
    \begin{align*}
    \|\Ks\|^2_{\mathcal{A}}\triangleq\sup_{\ub}\sup_{0\le i\le N}&a^{-2}\delta^{-1-2\gamma}\int_{\Cb_{\ub}} |u|^{2+2\gamma-p(3+2\gamma)+2i}|\nablas^i(\Omega^2\widetilde{\Ks})|^2\\
    +\sup_u\sup_{1\le i\le N}& a^{-3}\delta^{-3-2\gamma}\int_{C_u} |u|^{4+2\gamma-2p(2+\gamma)+2i} |\nablas^i(\Omega^2\widetilde{\Ks})|^2
    \end{align*}
        \begin{align*}
    \|\sigma\|^2_{\mathcal{A}}\triangleq\sup_{0\le i\le N}&\left[\sup_{\ub}a^{-2}\delta^{-1-2\gamma}\int_{\Cb_{\ub}} |u|^{2+2\gamma-p(3+2\gamma)+2i}|\nablas^i(\Omega^2\sigmac)|^2\right.\\
    &\left.+ \sup_ua^{-3}\delta^{-3-2\gamma}\int_{C_u} |u|^{4+2\gamma-2p(2+\gamma)+2i} |\nablas^i(\Omega^2\sigmac)|^2\right]
    \end{align*}
    \begin{align*}
    \|\betab\|^2_{\mathcal{A}}\triangleq \sup_{\ub}\sup_{0\le i\le N} a^{-3}\delta^{-3-2\gamma}\int_{\Cb_{\ub}} |u|^{4+2\gamma-2p(2+\gamma)+2i}|\nablas^i(\Omega^2 \betab+\Lb\phi\Omega\nablas\phi)|^2
    \end{align*}
  We introduce the notation
    $$\mathcal{R}^2\triangleq\sum_{\Psi\in\{\beta,\Ks,\sigma,\betab\}}\|\Psi\|^2_{\mathcal{A}}.$$
    Note that the norm of $\Omega^2\widetilde{K}$ on $C_u$ only defines for $i>0$. $\Omega^2\widetilde{K}$ obeys worse estimate. 
     
    \item\textit{Connection coefficients:}
    \begin{itemize}
    \item [1)]lower order derivatives
     $$\|\chih\|^2_{\mathcal{B}}\triangleq\sup_{\ub,u}\sup_{0\le i\le N} a^{-2}\delta^{-2\gamma}\int_{S_{\ub,u}}|u|^{2\gamma-2p(1+\gamma)+2i}|\nablas^i(\Omega\chih)|^2$$
          $$\|\omega\|^2_{\mathcal{B}}\triangleq\sup_{\ub,u}\sup_{0\le i\le N} a^{-2}\int_{S_{\ub,u}}|u|^{-2p+2i}|\nablas^i(\omega)|^2$$
     $$\|\eta,\etab\|^2_{\mathcal{B}}\triangleq\sup_{\ub,u}\sup_{0\le i\le N} a^{-2}\delta^{-2-2\gamma}\int_{S_{\ub,u}}|u|^{2+2\gamma-p(3+2\gamma)+2i}|\nablas^i(\Omega\eta,\Omega\etab)|^2$$
     $$\|\chibh\|^2_{\mathcal{B}}\triangleq\sup_{\ub,u}\sup_{0\le i\le N} a^{-2}\delta^{-2-2\gamma}\int_{S_{\ub,u}}|u|^{2+2\gamma-2p(1+\gamma)+2i}|\nablas^i(\Omega\chibh)|^2$$
        $$\|\tr\chib,\omegab\|^2_{\mathcal{B}}\triangleq\sup_{\ub,u}\sup_{0\le i\le N} a^{-2}\delta^{-2}\int_{S_{\ub,u}}|u|^{2-2p+2i}|\nablas^i(\widetilde{\Omega\tr\chib},\omegab)|^2$$
     $$\|\tr\chi\|^2_{\mathcal{B}}\triangleq\sup_{\ub,u}\sup_{0\le i\le N} a^{-4}\delta^{-2}\int_{S_{\ub,u}}|u|^{2-4p+2i}|\nablas^i(\widetilde{\Omega\tr\chi})|^2$$
     $$\Os^2\triangleq\sum_{\psi\in\{\chih,\tr\chi,\chibh,\tr\chib,\eta,\etab,\omega,\omegab\}}\|\psi\|^2_{\mathcal{B}}$$
    
     \item [2)]top order derivatives
     $$\|\tr\chi\|^2_{\mathcal{C}}\triangleq\sup_{\ub,u}\sup_{1\le i\le N+1}a^{-4}\delta^{-2-2\gamma}\int_{S_{\ub,u}}|u|^{2+2\gamma-p(5+2\gamma)+2i}|\nablas^i(\Omega^2\tr\chi)|^2$$
     $$\|\tr\chib\|^2_{\mathcal{C}}\triangleq\sup_{\ub,u}\sup_{1\le i\le N+1} a^{-2}\delta^{-2-2\gamma}\int_{S_{\ub,u}}|u|^{2+2\gamma-p(3+2\gamma)+2i}|\nablas^i(\Omega^2\tr\chib)|^2$$
     $$\|\chih,\omega\|^2_{\mathcal{C}}\triangleq\sup_{u}\sup_{1\le i\le N+1} a^{-2}\delta^{-1-2\gamma}\int_{C_u}|u|^{2\gamma-p(3+2\gamma)+2i}|\nablas^i(\Omega^2\chih,\Omega\omega)|^2$$
     $$\|\omegab\|^2_{\mathcal{C}}\triangleq\sup_{\ub}\sup_{1\le i\le N+1}a^{-2} \delta^{-2-2\gamma}\int_{\Cb_{\ub}}|u|^{1+2\gamma-p(3+2\gamma)+2i}|\nablas^i(\Omega\omegab)|^2$$
     $$\|\chibh\|^2_{\mathcal{C}}\triangleq\sup_{\ub}\sup_{1\le i\le N+1} a^{-2}\delta^{-2-2\gamma}\int_{\Cb_{\ub}}|u|^{2\gamma-p(3+2\gamma)+2i}|\nablas^i(\Omega^2\chibh)|^2$$
     \begin{align*}\|\eta,\etab\|^2_{\mathcal{C}}\triangleq\sup_{1\le i\le N+1} &\left[ \sup_{\ub}a^{-2}\delta^{-1-2\gamma}\int_{\Cb_{\ub}}|u|^{2\gamma-p(3+2\gamma)+2i}|\nablas^i(\Omega^2\eta)|^2\right.\\&\left.+\sup_{u}a^{-2}\delta^{-2-2\gamma}\int_{C_u}|u|^{1+2\gamma-p(3+2\gamma)+2i}|\nablas^i(\Omega^2\eta,\Omega^2\etab)|^2\right]\end{align*}
     $$\mathcal{O}^2\triangleq\sum_{\psi\in\{\chih,\tr\chi,\chibh,\tr\chib,\eta,\etab,\omega,\omegab\}}\|\psi\|^2_{\mathcal{C}}$$
     \end{itemize}

     \item\textit{Scalar field:}
     \begin{itemize}
     \item [1)]lower order derivatives
     $$\|L\phi\|^2_{\mathcal{D}}\triangleq\sup_{\ub,u}\sup_{0\le i\le N} a^{-2}\int_{S_{\ub,u}}|u|^{-2p+2i}|\nablas^i(L\phi)|^2$$
     $$\|\Omega\nablas\phi\|^2_{\mathcal{D}}\triangleq\sup_{\ub,u}\sup_{0\le i\le N}a^{-2} \delta^{-2-2\gamma}\int_{S_{\ub,u}}|u|^{2+2\gamma-p(3+2\gamma)+2i}|\nablas^i(\Omega\nablas\phi)|^2$$
     $$\|\Lb\phi\|^2_{\mathcal{D}}\triangleq\sup_{\ub,u}\sup_{0\le i\le N}a^{-2} \delta^{-2}\int_{S_{\ub,u}}|u|^{2-2p+2i}|\nablas^i(\widetilde{\Lb\phi})|^2$$
     $$\Es^2\triangleq\sum_{\Phi\in\{L\phi,\Omega\nablas\phi,\Lb\phi\}}\|\Phi\|^2_{\mathcal{D}}$$
     \item [2)]top order derivatives
     \begin{align*}
     \| L\phi\|^2_{\mathcal{F}}\triangleq\sup_{1\le i\le N} &\left[\sup_{u}a^{-2}\delta^{-1-2\gamma}\int_{C_u}|u|^{2+2\gamma-p(3+2\gamma)+2i}|\nablas^i(\Omega\nablas L\phi)|^2\right.\\
     &\left.+a^{-2}\delta^{-1-2\gamma}\int_{\mathcal{M}}|u'|^{1+2\gamma-p(3+2\gamma)+2i}|\nablas^i(\Omega\nablas L\phi)|^2\right]
     \end{align*}
     \begin{align*}
     \|\Omega\nablas\phi\|^2_{\mathcal{F}}\triangleq\sup_{1\le i\le N} &\left[\sup_{\ub}a^{-2}\delta^{-1-2\gamma}\int_{\Cb_{\ub}}|u'|^{2+2\gamma-p(3+2\gamma)+2i}|\nablas^i(\Omega^2\nablas\nablas\phi)|^2\right.\\
     &\left.+\sup_{u}a^{-2}\delta^{-2-2\gamma}\int_{C_u}|u|^{3+2\gamma-p(3+2\gamma)+2i}|\nablas^i(\Omega^2\nablas\nablas\phi)|^2\right]
     \end{align*}
    $$\|\Lb\phi\|^2_{\mathcal{F}}\triangleq\sup_{\ub}\sup_{1\le i\le N} a^{-2}\delta^{-2-2\gamma}\int_{\Cb_{\ub}}|u'|^{3+2\gamma-p(3+2\gamma)+2i}|\nablas^i(\Omega\nablas\Lb\phi)|^2$$
    $$\mathcal{E}^2\triangleq\sum_{\Phi\in\{L\phi,\Omega\nablas\phi,\Lb\phi\}}\|\Phi\|^2_{\mathcal{F}}$$
     \end{itemize}
\begin{remark}We have introduced the spacetime integral in the definition of $\| L\phi\|^2_{\mathcal{F}}$. \end{remark}

    \end{itemize}

\begin{theorem}\label{main3}
Consider the initial value problem as described above. Then if $\varepsilon$ is sufficiently small depending on $u_0$ and $\delta|u_1|^{p-1}a\le \varepsilon$, then the regular solution of \eqref{ES}  exists in the region $\ub\in[0,\delta], u\in[u_0,u_1]$ and the estimates hold
$$\mathcal{R}, \Os, \mathcal{O}, \mathcal{E}, \Es\lesssim 1.$$
\end{theorem}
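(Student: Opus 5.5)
We will argue by a bootstrap (continuity) argument in the region $\ub\in[0,\delta]$, $u\in[u_0,u']$ for $u'\in[u_0,u_1]$. Local existence supplies a small region where the estimates hold. We assume on $[u_0,u']$ the bootstrap bounds
$$\mathcal{R},\Os,\mathcal{O},\mathcal{E},\Es\le \Delta$$
for a large constant $\Delta$, and we aim to improve them to $\lesssim 1$ with constants independent of $\Delta$. The improvement will use the smallness parameter $\delta|u|^{p-1}a\le\delta|u_1|^{p-1}a\le\varepsilon$, which absorbs every power of $\Delta$ arising from nonlinear terms. Standard continuation of the regular solution together with the improved estimates then gives existence up to $u_1$.

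The weights are chosen so that each quantity scales like its self-similar counterpart with $k^2$ replaced by $p$. The key structural facts are two. First, $\Omega$ on $\Cb_0$ satisfies $\Omega^2\le|u|^p$ together with the relative control \eqref{relativecontrol}. Second, $\omega$ is small since $\omega\equiv0$ on $C_{u_0}$, so $\Omega(\ub,u)\approx\Omega(0,u)$ throughout. The order of the steps is as follows.

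\textbf{Step 1: $L^\infty$ and Sobolev bounds.} From the bootstrap assumption and Sobolev inequalities on $S_{\ub,u}$ we derive $L^\infty$ bounds, after first controlling the isoperimetric constant and the area, which follows from $\tr\chi$, $\chih$ and $\tr\chib$.

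\textbf{Step 2: lower-order connection coefficients ($\Os$).} We integrate the transport equations. The $\Dbh(\Omega\chih)$ equation is integrated in $u$; its linear term $\frac12\Omega\tr\chib\,\Omega\chih$ has the coefficient $\approx -1/|u|$, which is exactly compensated by the weight $|u|$, while the $\omegab$-part of $\Omega$ produces the $\Omega^{-2p}$ weight via \eqref{relativecontrol}. The quantities $\tr\chi$, $\eta$, $\omega$ are integrated in $\ub$ over an interval of length $\delta$, gaining $\delta$. The quantities $\etab$, $\tr\chib$, $\omegab$, $\chibh$ are integrated in $u$, using the initial bounds \eqref{initialboundCb0} on $\Cb_0$. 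In every case the error is bounded by $\Delta^C\delta a|u|^{p-1}\le\Delta^C\varepsilon$.

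\textbf{Step 3: scalar field ($\Es$).} The scalar field is handled analogously through the wave equations in double null form. The $L\phi$ equation is integrated in $u$ with data $\widetilde{L\phi}$ from \eqref{initialbound}, and $\Lb\phi$ and $\nablas\phi$ are integrated in $\ub$.

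\textbf{Step 4: top-order connection coefficients ($\mathcal{O}$, $\mathcal{E}$).} To avoid loss of derivatives, $\nablas^{N+1}$ of $\tr\chi$, $\tr\chib$, $\chih$, $\chibh$, $\eta$, $\etab$, $\omega$, $\omegab$ are estimated by elliptic (Hodge) estimates on $S_{\ub,u}$, coupled with transport equations for the renormalized quantities $\mu$ and $\kappa$-type combinations. This follows the scheme of \cite{Chr08, An-Luk} adapted to \cite{Li-Liu1}, with the curvature fluxes appearing on the right-hand side.

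\textbf{Step 5: energy estimates ($\mathcal{R}$, $\mathcal{E}$).} The curvature and scalar top-order norms come from energy estimates for the renormalized Bianchi pairs $(\Omega^2\beta-L\phi\Omega\nablas\phi,(\Ks,\sigmac))$ and $((\Ks,\sigmac),\Omega^2\betab+\Lb\phi\Omega\nablas\phi)$, and for the commuted wave equation. We multiply by the $|u|$-weights and integrate over $\mathcal{M}$, so that the boundary fluxes on $C_u$ and $\Cb_{\ub}$ are controlled by the data plus spacetime error integrals. The weight exponents are designed so that the $\partial_u$ derivative of the weight generates a term of favorable sign. For $L\phi$ this produces the spacetime integral included in $\|L\phi\|_{\mathcal{F}}$, which is needed because $\nablas L\phi$ has no good flux on $\Cb_{\ub}$.

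\textbf{Main obstacle.} We expect the main difficulty to be the top-order angular derivative of $L\phi$ coupled to $\beta$. In the Bianchi equation for $\Db(\Omega\beta-L\phi\nablas\phi)$ and in the wave equation, the term $\Omega^2\Deltas\phi\nablas\phi$ and the borderline term $\Omega\tr\chib\,\nablas L\phi$ are not small in $\delta$. For these we first rely on the sign of the weight derivative, which yields the spacetime bulk term, and check that the coefficient $1+2\gamma-p(3+2\gamma)$ leaves a positive margin. This margin is where $\gamma\ge\frac{p}{1-p}$ and $p<k^2-\epsilon$ (so that \eqref{omegablower} dominates $p$) are used. The remaining errors are then absorbed via Gronwall in $u$, with all nonlinear contributions carrying a factor $\delta a|u|^{p-1}\le\varepsilon$ (or $\varepsilon^{1/2}$). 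Finally, the smallness of $\varepsilon$ depending on $u_0$ handles the lower-order Gauss curvature term $\Ks-|u|^{-2}$.
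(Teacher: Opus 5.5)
Your architecture (bootstrap, transport at lower order, energy estimates for the renormalized Bianchi pairs and the commuted wave equation, elliptic--transport systems at top order) matches the paper's. The gap is Step 4. By deferring it to the scheme of \cite{Li-Liu1}, you skip the estimate the paper calls the most delicate, the top-order bound for $\tr\chib$, and that scheme actually fails there.

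In the commuted Raychaudhuri equation $\Db\nablas(\Omega\tr\chib)+\Omega\tr\chib\nablas(\Omega\tr\chib)-2\omegab\nablas(\Omega\tr\chib)=2\Omega\tr\chib\nablas\omegab-2\Omega\chibh\cdot\nablas(\Omega\chibh)-4\Lb\phi\nablas\Lb\phi$, the sources $\Omega\tr\chib\nablas\omegab$ and $\Lb\phi\nablas\Lb\phi$ carry no small factor. They are controlled only in $L^2$ along $\Cb_{\ub}$, through $\|\omegab\|_{\mathcal{C}}$ and $\|\Lb\phi\|_{\mathcal{F}}$. Suppose you discard the $\omegab$ term and apply Gronwall. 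At $\gamma=\frac{p}{1-p}$ (the only case needed), Cauchy--Schwarz in $u'$ leaves the factor $\int_{u_0}^u\Omega^{-2}(u')|u'|^{p-1}\D u'$. Under \eqref{relativecontrol} this is only bounded by $\Omega^{-2}(u)|u|^{p}\log(|u_0|/|u|)$, which is the logarithmic loss of \cite{Li-Liu1}. It cannot be absorbed: $\varepsilon$ and the constants must be independent of $u_1$, and $u_1\to0$ in the applications while $\delta|u_1|^{p-1}a=\varepsilon$ stays fixed.

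The missing idea is to keep the blue-shift term by transporting $\Omega^{-2}\nablas(\Omega\tr\chib)$ instead. The integrand then becomes $\Omega^{-6}(u')|u'|^{p-1}$. Applying \eqref{relativecontrol} with exponent $3p>p$, through \eqref{eleinequality}, gives \eqref{trchibtopbound} with a constant of order $1/p$. This bound then feeds the elliptic estimate for $\chibh$, which in turn needs $\|\omegab\|_{\mathcal{C}}\lesssim1$, obtained via the spacetime bound for $\nablas L\phi$. This is the only place \eqref{relativecontrol} is needed. You invoke it in the $\chih$ estimate instead, where it plays no role.

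Three further steps would also fail as written.

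First, in the $(\Ks,\sigmac)$--$\betab$ energy estimate, the terms $\Omega^2(\Omega\tr\chib+\frac{2}{|u|})|u|^{-2}$ and $\frac12\Omega\tr\chib\,\Omega^2\mu$ in the equation for $\Db(\Omega^2(\Ks-|u|^{-2}))$ are borderline. They produce a logarithmic loss that smallness of $\varepsilon$ does not cure. The paper runs this estimate only for $i\ge1$, using improved bounds for $\nablas(\Omega\tr\chib)$ and $\nablas\mu$ derived from their $D$-equations. It recovers the zeroth order of $\Ks$ by integrating $D\widetilde{\Ks}$ in $\ub$.

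Second, some of your transport directions are wrong. There is no $D\omega$ equation, so $\omega$ must come from the $\Db\omega$ equation starting from $\omega=0$ on $C_{u_0}$. Conversely, $\chibh$ and $\omegab$ cannot be integrated in $u$: the $\Db\chibh$ equation contains $\alphab$, and there is no $\Db\omegab$ equation. They are integrated in $\ub$ from $\Cb_0$.

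Third, at $\gamma=\frac{p}{1-p}$ the derivative of the $|u|$-weight has the unfavorable sign, since $2\gamma-p(3+2\gamma)=-p$. The sign needed in the $\beta$ estimate, and the spacetime bulk term for $\nablas L\phi$, both come from the blue-shift term $-2\omegab$. The relevant coefficient is $(2\gamma-p(3+2\gamma))-2|u|\omegab\ge k^2-\epsilon-p$, by \eqref{omegablower}. It is not $1+2\gamma-p(3+2\gamma)$, which belongs to the $\nablas\nablas\phi$--$\nablas\Lb\phi$ pair.
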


Here we  use the notation $A\lesssim B$ to denote $A\le CB$ for some constant $C$ {\bf  independent of $a$} and $\varepsilon$ depends on this $C$. The constant $C$ also  depends on $u_0$,  also increases with $\frac{1}{p}$ and $\frac{1}{k^2-\epsilon-p}$, and tends to $\infty$ as either of these two quantities tends to infinity. These two constants only appear in estimating top order derivatives of $\tr\chib$ and top order derivatives of $L\phi$ respectively.  The dependence of $\frac{1}{p}$ can be removed because we essentially consider the case when $p$ is approaching $k^2$ in this paper. We may remove the dependence on $\frac{1}{k^2-\epsilon-p}$ if we assume $\nablas L\phi$ on $C_{u_0}$ vanishes in a sufficiently fast way as approaching $\ub=0$, which is needed in constructing family of gravitational perturbations that are uniformly bounded at the threshold. We will comment on this when we reach the corresponding estimates (Remarks \eqref{dependenceonk2-epsilon-p} and \eqref{dependenceon1/p}).

First of all, it is direct to check as in \cite{Chr08} that if the initial data satisfies the estimates \eqref{initialbound}, then the above norms restricted on $\ub=0$ or $u=u_0$ is bounded.

We first briefly explain the choice of the weights $\delta$ and $|u|$ in different quantities. As discussed in Introduction, we need to utilize the upper bound of $\Omega_0$: \begin{equation}\label{theupperboundofOmega}\Omega_0^2(u)\le |u|^{p}.\end{equation}

   We first analyze the structure of the equations. To each connection coefficient, curvature component and derivatives of scalar field, which are written in the null pair $e_3,e_4$, we can assign a signature as below, first introduced in \cite{CK93}.

   \begin{table}[H]
   \renewcommand{\arraystretch}{2.0}
   \setlength{\tabcolsep}{1.6\tabcolsep}
   \centering
   \caption{Signature table}
   \begin{tabular}{c*{13}{c}}
   \hline
   & 
   $\beta$ & 
   $\Ks$ & 
   $\sigma$ & 
   $\betab$ & 
   $\chi$ & 
   $\chib$ & 
   $\Omega^{-1}\omega$ & 
   $\Omega^{-1}\omegab$ & 
   $\eta$ & 
   $\etab$ & 
   $\Omega^{-1}L\phi$ & 
   $\nablas\phi$ & 
   $\Omega^{-1}\Lb\phi$ \\
   \hline
   $s$ & 
   $-1$ & 
   $0$ & 
   $0$ &
   $1$ & 
   $-1$ & 
   $1$ & 
   $-1$ & 
   $1$ & 
   $0$ & 
   $0$ &
   $-1$ & 
   $0$ & 
   $1$ \\
   \hline
   \end{tabular}
   \end{table}
Here the signature $s$ is defined to be 
$$s(\psi) = \# (e_3)- \# (e_4),\   s(\Psi) = \# (e_3)- \# (e_4)$$
where $\# (e_3), \# (e_4)$ denotes the number of $e_3$'s and $e_4$’s which show up in the definition of $\psi$ and $\Psi$, where $\psi$ is the connection coefficients ($\chi, \chib,\eta,\etab$ and $\Omega^{-1}\omega,\Omega^{-2}\omegab)$ and $\Psi$ is the curvature components ($\beta,\betab, \Ks, \sigma$). Note here that we assign signatures to $\Omega^{-1}\omega$ and $\Omega^{-1}\omegab$ instead of $\omega$ and $\omegab$, because we have an addition factor $\Omega$ in the definition of $\omega,\omegab$, following \cite{Chr08}.

 The most important fact is that,  among all null structure equations and Bianchi equations, the signatures of both sides of the equations are the same.  But we note that in the null structure equations and null Bianchi equations listed in the above section, which are not written in normalized null frame $(e_3, e_4)$ but in $(\Lb, L)$,  there may be some additional $\Omega$ factors appearing on the right hand sides.  If we rewrite them in terms of 
   $$\Omega\chih, \Omega\tr\chi, \Omega\chibh, \Omega\tr\chib, \Omega\eta, \Omega\etab, \omega, \omegab,$$
   and 
   $$  \Omega^2\beta, \Omega^2\betab, \Omega^2\widetilde{\Ks}, \Omega^2\check{\sigma},$$
   and we assign a factor $\Omega$ to each $\nablas$, then the equations will contain no additional $\Omega$ factors on both sides. For example, we rewrite 
   $$D\eta = (\Omega\chi)\cdot\etab-(\Omega\beta+L\phi\ds\phi)$$
   as
   \begin{equation}\label{exampleofOmegaequation}D(\Omega\eta) = \omega\cdot\Omega\eta+(\Omega\chi)\cdot\Omega\etab-(\Omega^2\beta+L\phi\cdot\Omega\ds\phi).\end{equation}
   We will rewrite the other equations in the course of the proof.  Because of this structure, we choose to write down the estimates in terms of $\Omega\psi$ and $\Omega^2\Psi$ and the estimates are easier to follow. 
   
The function $u$ is chosen such that $u=-r$ on $\Cb_0$, so $u\partial_u$ restricted on $\Cb_0$ is the conformal Killing field of the self-similarity. When we want to construct naked singularity solutions,  the connection coefficients, curvature components and derivatives of scalar field written in $\Lb$, which equals $\partial_u$ on $\Cb_0$, and its normalized conjugate $L'$, should satisfy self-similar bounds in the construction of naked singularity solutions. More precisely, it means that $|u|^2\Omega^{s(\Psi)}\Psi$ and $|u|\Omega^{s(\psi)}\psi$  are bounded as in \cite{R-Sh}. For example, we should have
$$\Omega^{-1}\chih\lesssim |u|^{-1}.$$
 In the current situation, we do not expect such bounds because the initial data is large at the threshold, and on the other hand $\Omega_0^2(u)$ obeys a weaker bound $|u|^{p}$ where $p\le k^2$.  In this case we only expect to prove the bound
 $$\Omega\chih(=\Omega^2\cdot\Omega^{-2}\chih)\lesssim |u|^{-1+p}.$$
 To each component, we will assign additional power of $\frac{\delta}{|u|^{1-p}}$, which will be chosen to be small, in order to close the estimates. For $\Omega\chih$, we assume that it obeys the bound $\delta^{\gamma}a$, then the bound we expect to obtain for $\Omega\chih$ is
  $$\Omega\chih\sim \left(\frac{\delta}{|u|^{1-p}}\right)^\gamma \frac{a}{|u|^{1-p}}.$$
 When $\gamma>\frac{p}{1-p}$, we will have a spacetime integral term with favorable sign in  the energy estimates for the curvature components, which is not needed in this paper

To summarize, the expected bounds can be listed below:

    \textit{Connection coefficients:}

    \begin{align*}
    \Omega\chih\sim\left(\frac{\delta}{|u|^{1-p}}\right)^{\gamma}\frac{a}{|u|^{1-p}},&\quad\omega \sim \frac{a}{|u|^{1-p}},\\
   \Omega\chibh \sim\left(\frac{\delta}{|u|^{1-p}}\right)^{1+\gamma}\frac{a}{|u|},& \quad  \widetilde{\Omega\tr\chib}, \omegab\sim\left(\frac{\delta}{|u|^{1-p}}\right)\frac{a}{|u|},\\
   \widetilde{\Omega\tr\chi}\sim\left(\frac{\delta}{|u|^{1-p}}\right)\frac{a^2}{|u|^{1-p}}, &
  \quad \Omega\eta, \Omega\etab\sim\left(\frac{\delta}{|u|^{1-p}}\right)^{1+\gamma}\frac{a}{|u|^{1-\frac{1}{2}p}}    \end{align*}

Note that if a quantity survives in spherical symmetry, then the quantity itself (not its derivatives) satisfies a weaker estimate with $\gamma=0$.

    \textit{Curvature components:}
    \begin{align*}\Omega^2\beta\sim\left(\frac{\delta}{|u|^{1-p}}\right)^{\gamma}\frac{a}{|u|^{2-\frac{3}{2}p}},&\quad\Omega^2\betab\sim\left(\frac{\delta}{|u|^{1-p}}\right)^{\frac{3}{2}+\gamma}\frac{a^{\frac{3}{2}}}{|u|^{2-\frac{1}{2}p}},\\
    \quad \Omega^2\widetilde{\Ks}\sim\left(\frac{\delta}{|u|^{1-p}}\right)\frac{a}{|u|^{2-p}}, &\quad \Omega^2\sigmac\sim\left(\frac{\delta}{|u|^{1-p}}\right)^{1+\gamma}\frac{a}{|u|^{2-p}}\end{align*}

        \textit{(Top order derivatives of) the scalar field:}
    $$\Omega\nablas L\phi\sim\left(\frac{\delta}{|u|^{1-p}}\right)^{\gamma}\frac{a}{|u|^{2-\frac{3}{2}p}},\quad\Omega^2\nablas\nablas\phi\sim\left(\frac{\delta}{|u|^{1-p}}\right)^{\frac{1}{2}+\gamma}\frac{a}{|u|^{2-p}},
    \quad\Omega\nablas\Lb\phi\sim\left(\frac{\delta}{|u|^{1-p}}\right)^{1+\gamma}\frac{a}{|u|^{2-\frac{1}{2}p}}$$

    \textit{Top order derivatives of the connection coefficients:}
    \begin{align*}
    &\Omega^2\chih,\Omega\omega\sim\left(\frac{\delta}{|u|^{1-p}}\right)^{\gamma}\frac{a}{|u|^{1-\frac{3}{2}p}},\quad\Omega^2\chibh\sim\left(\frac{\delta}{|u|^{1-p}}\right)^{1+\gamma}\frac{a}{|u|^{1-\frac{1}{2}p}},
    \quad\Omega^2\tr\chi\sim\left(\frac{\delta}{|u|^{1-p}}\right)^{1+\gamma}\frac{a^2}{|u|^{1-\frac{3}{2}p}},\\
    &\Omega^2\tr\chib\sim\left(\frac{\delta}{|u|^{1-p}}\right)^{1+\gamma}\frac{a}{|u|^{1-\frac{1}{2}p}}, \quad\Omega^2\eta,\Omega^2\etab\sim\left(\frac{\delta}{|u|^{1-p}}\right)^{\frac{1}{2}+\gamma}\frac{a}{|u|^{1-p}},\quad\Omega\omegab\sim\left(\frac{\delta}{|u|^{1-p}}\right)^{1+\gamma}\frac{a}{|u|^{1-\frac{1}{2}p}},
    \end{align*}

Finally, the norms will also have their scales: 
	$$\int_{S_{\ub,u}}\sim |u|^2, \quad\int_{C_u}\sim\delta|u|^2, \quad\int_{\Cb_u}\sim|u|^3, \quad\int_\mathcal{M}\sim \delta|u|^3,\quad \nablas^i\sim |u|^{-i}$$

One important aspect of this paper is that these expected bounds are compatible with the equations written in the form as \eqref{exampleofOmegaequation}, that is, in terms of $\Omega\psi$ and $\Omega^2\Psi$. Regardless of the dimensionless factor $\delta|u|^{p-1}$, the powers of $|u|$ of both sides of the each equation are equal.  The underlying reason is precisely that the two sides of the equations have the same signature. The (lower order) bounds of the connection coefficients and curvature components are typically 
$$\Omega\psi\lesssim\frac{1}{|u|^{1-\frac{1}{2}p+\frac{1}{2}ps(\psi)}}, \quad\Omega^2\Psi\lesssim\frac{1}{|u|^{2-p+\frac{1}{2}ps(\Psi)}}$$
and the derivatives of scalar field verify similar bounds.

\begin{remark}
The top order estimates will  suffer a loss of $\frac{\Omega}{|u|^{\frac{1}{2}p}}$ because of the elliptic estimates. 
\end{remark}

   \begin{remark}
    Compared to the previous work \cite{Li-Liu1}, we  use a more precise bound of $\Omega$, which is (\ref{theupperboundofOmega}), while in \cite{Li-Liu1} we  only use $\Omega\lesssim1$, i.e., $p=0$, and the estimates in \cite{Li-Liu1} are done for $\gamma=0$. To achieve the optimal instability in the case of non-spherically symmetry, an accurate estimate of $\Omega$ is necessary.
      \end{remark}

The proof is began by assuming 
\begin{equation}\label{bootstrap}\mathcal{R}, \Os, \mathcal{O}, \mathcal{E}, \Es\le \varepsilon^{-\kappa},\end{equation}
where $\kappa$ is a small constant (say $\kappa=0.1$), we will show that if $\varepsilon>0$ is sufficiently small depending on  $u_0$  and $p$,  then
$$\mathcal{R}, \Os, \mathcal{O}, \mathcal{E}, \Es\lesssim 1$$
for $\delta|u_1|^{p-1}a\le\varepsilon$.

\subsection{Geometric lemmas}
We start by collecting some geometric lemmas which will be frequently used in the course of the proof. Their proofs are standard (see \cite{Chr08} for example). Under the bootstrap assumptions \eqref{bootstrap}, if $\varepsilon$ is sufficiently small, we have
\begin{itemize}
	\item {\bf Estimate of lapse and area:}
	\begin{equation}\label{lapse}\frac{1}{2}\Omega_0\le\Omega\le 2\Omega_0, \end{equation}
	\begin{equation}\label{area}\frac{1}{2}|u|^2\le\frac{1}{4\pi}\mathrm{Area}(S_{\ub,u})\le 2|u|^2.\end{equation}
	\item {\bf Sobolev inequalities: }
	Given a tangential tensorfield $\theta$, we have,
	\begin{align}\label{Sobolev}
		\|\theta\|_{L^\infty(S_{\ub,u})}&\lesssim\sum_{i=0}^2|u|^{-1}\|(|u|\nablas)^i\theta\|_{L^2(S_{\ub,u})},
	\end{align}
	\item {\bf H\"older inequality:} Define the scale-invariant form of the Sobolev norm, 
	$$\|\theta\|_{H^k(S_{\ub,u})}:=\sum_{i=0}^k \|(|u|\nablas)^i\theta\|_{L^2(S_{\ub,u})}, $$
	we have, for tangential tensorfields $\theta_1,\cdots,\theta_n$, and $k\ge2$, 
	\begin{equation}\label{Holder}
		|u|^{-1}\|\theta_1\cdots\theta_n\|_{H^k(S_{\ub,u})}\lesssim|u|^{-n}\|\theta_1\|_{H^k(S_{\ub,u})}
		\cdots\|\theta_n\|_{H^k(S_{\ub,u})}
	\end{equation}
	\item {\bf Derivatives of lapse:} For any $s\in\mathbb{R}$, 
	\begin{equation*}|u|^{-1}\|\Omega^s\|_{H^{N}(S_{\ub,u})}\le C_s\Omega^s.\end{equation*}
	This estimate tells us that for any tangential tensorfield $\theta$, $2\le k\le N$,
	\begin{equation}\label{derivativelapse}\|\Omega^s\theta\|_{H^k(S_{\ub,u})}\le C_{s,k}\Omega^s\|\theta\|_{H^k(S_{\ub,u})}.\end{equation}
	So $\Omega$ can be moved freely between outside and inside the $H^k$ norm. Moreover, for top order derivatives, writing $\ds\Omega=\frac{1}{2}\Omega(\eta+\etab)$, we will have
	$$|u|^{-1}\|\Omega^s\|_{H^{N+1}(S_{\ub,u})}\le C_s \Omega^{s-1}|u|^{\frac{1}{2}p},$$
	and for any tangential tensorfield $\theta$, 
	\begin{equation}\label{derivativelapsetop}\|\Omega^s\theta\|_{H^{N+1}(S_{\ub,u})}\le C_{s}(\Omega^s\|\theta\|_{H^{N+1}(S_{\ub,u})}+\Omega^{s-1}\|(\Omega\eta,\Omega\etab)\|_{H^N(S_{\ub,u})}\|\theta\|_{H^{N}(S_{\ub,u})}).\end{equation}

	\item {\bf Gronwall type inequalities: } Given a $s$-tangential tensorfield $\theta$, we have 
	\begin{align}
		\label{Gronwallub}|u|^{-1}\|\theta\|_{L^2(S_{\ub,u})}
		\le &C_s |u|^{-1}\left(\|\theta\|_{L^2(S_{0,u})}+\int_0^{\ub}\|D\theta\|_{L^2(S_{\ub',u})}\D\ub'\right).\\
		\label{Gronwallu}\begin{split}\||u|^{s+\nu-1}\theta\|_{L^2(S_{\ub,u})}\lesssim& C_{s,\nu}\left(\||u|^{s+\nu-1}\theta\|_{L^2(S_{\ub,u_0})}\right.\\&\left.+\int_{u_0}^u\||u'|^{s+\nu-1}(\Db\theta+\frac{\nu}{2}\Omega\tr\chib\theta)\|_{L^2(S_{\ub,u'})}\D u'\right).\end{split}
	\end{align}
	The above estimates also hold for angular derivatives up to order $N$ (replacing $L^2$ by $H^N$), or $N+1$ when $\theta$ is a function. If $\theta$ is a trace-free $2$-tangential tensorfield, then $D\theta$ and $\Db\theta$ can be replaced by their trace-free parts $\Dh\theta$ and $\Dbh\theta$.
	\item {\bf Elliptic estimates:} Provided that
	\begin{align*} |u|^{-1}\|\Ks\|_{H^{2}(S_{\ub,u})}\lesssim|u|^{-2},\end{align*}
	we have, for a tangential tensorfield $\theta$ verifying a Hodge system
	\begin{align*}
		\divs\theta=f,
		\curls\theta=g,
	\end{align*} 
	we have
	\begin{align}\label{elliptic}
		\|\theta\|_{H^{N+1}(S_{\ub,u})}\lesssim \||u|(f,g)\|_{H^N(S_{\ub,u})}+(1+ |u|\|\Ks\|_{H^{N-1}(S_{\ub,u})}) \|\theta\|_{H^N(S_{\ub,u})}.
	\end{align}
\end{itemize}

\section{The estimates for lower-order derivatives of connection coefficients and scalar field}
 In this section, we will prove the following.
\begin{proposition}\label{connection-low} In addition to the bootstrap assumptions \eqref{bootstrap}, assume moreover 
	\begin{equation}\label{connection-low-curvaturebound}
		\mathcal{R}\lesssim 1.
	\end{equation}
	Then if $\varepsilon$ is sufficiently small, and $\delta|u|^{p-1}\le\varepsilon$, we have
	\begin{equation}\label{connection-low-bound}\Os, \Es\lesssim 1. \end{equation}
\end{proposition}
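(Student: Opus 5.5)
The plan follows the standard Christodoulou scheme. Each lower-order norm in $\Os$ and $\Es$ is estimated by integrating the corresponding null structure equation or wave equation along the direction in which it satisfies a transport equation, working throughout with the $\Omega$-renormalized variables $\Omega\psi$, $\Omega^2\Psi$ as in \eqref{exampleofOmegaequation}. Angular derivatives up to order $N$ are carried by commuting $\nablas^i$; the commutators $[D,\nablas^i]$ and $[\Db,\nablas^i]$ produce only lower-order products of $\Omega\chi$, $\Omega\chib$ with the quantity, and these are handled by the Hölder inequality \eqref{Holder} and by \eqref{derivativelapse}. Every nonlinear error term will be bounded by the bootstrap \eqref{bootstrap} times a positive power of $\delta|u|^{p-1}a\le\varepsilon$, so that it can be absorbed as $\varepsilon^{1-C\kappa}\lesssim1$. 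Linear curvature terms are controlled using \eqref{connection-low-curvaturebound}. Since the curvature norms are $L^2$ on null cones while $\Os$ is an $L^2(S)$ norm, passing from one to the other costs one derivative via the Sobolev inequality \eqref{Sobolev}. For this reason I take the curvature at order $\le N$ for the $\Os$ norms at order $\le N-1$, and close the top order $i=N$ using the $\mathcal{C}$, $\mathcal{F}$ bounds from the bootstrap with the appropriate extra smallness.

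The order of estimates follows the hierarchy. First come the $D$-equations, integrated in $\ub$ over $[0,\delta]$ using \eqref{Gronwallub}.
\begin{itemize}
\item $\omega$: use $D\omega$, or equivalently $\Db\omega$, with $\rho$ expressed through $\Ks$; $\omega$ vanishes on $C_{u_0}$.
\item $\Omega\eta$: use $D(\Omega\eta)$ with source $\Omega^2\beta-L\phi\,\Omega\nablas\phi$. Integrating the $L^2(C_u)$ bound on $\beta$ in $\ub$ and applying Cauchy--Schwarz gives the factor $\delta^{1/2}$ that the weight in $\|\eta\|_{\mathcal B}$ requires.
\item $\widetilde{\Omega\tr\chib}$ and $\Omega\chibh$: use their $D$-equations, with $\Omega\chibh$ driven by $\nablas\tensor\Omega\etab$ together with the product $\Omega\tr\chib\cdot\Omega\chih$. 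The latter is why the $\Omega\chibh$ norm carries the factor $\delta^{1+\gamma}$.
\item $\widetilde{\Omega\tr\chi}$: use the Raychaudhuri equation. Its quadratic term $|\Omega\chih|^2$ integrated over $[0,\delta]$ gives exactly $\delta a^2|u|^{-2+2p}(\delta|u|^{p-1})^{2\gamma}$. This is compatible with the $a^{-4}\delta^{-2}$ normalization precisely because $\gamma\ge\frac{p}{1-p}$ ensures $(\delta|u|^{p-1})^{2\gamma}\le(\delta|u|^{p-1})^{2\gamma}$, with the spherically symmetric part already subtracted.
\item $L\phi$: a bound along $C_u$ is obtained from the $\Db L\phi$ equation, together with $\Omega\nablas\phi$ via $D\nablas\phi=\nablas L\phi$.
\end{itemize}
Next come the $\Db$-equations, integrated in $u$ from $u_0$ using \eqref{Gronwallu} with a suitable $\nu$ that cancels the leading $\tr\chib$ term.
\begin{itemize}
\item $\Omega\chih$: use $\Dbh(\Omega\chih)=\tfrac12\Omega\tr\chib\,\Omega\chih+\dots$. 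The choice $\nu$ corresponds to the self-similar decay $|u|^{-1+p}$, and the initial bound \eqref{initialbound} propagates.
\item $\omegab$: the $\Db$ part of the background is subtracted.
\item $\Omega\etab$: use $\Db\etab$.
\item $\Lb\phi$: use $D\Lb\phi$ in $\ub$.
\end{itemize}

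The key analytic point throughout is that when integrating in $u$ one meets integrals of the form $\int_{u_0}^{u}|u'|^{-1-q}\,\D u'$ with $q$ a combination of $p$ and $\gamma$. Here I use \eqref{theupperboundofOmega} and the relative control \eqref{relativecontrol} to replace $\Omega$ factors by powers of $|u|$ in the right direction; only the upper bound $\Omega_0^2\le|u|^p$ is needed. I then check that every resulting power matches the weight in the norm, so that the integral is bounded by its value at $|u|$ up to a constant depending on $p$.

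I expect the main obstacle to be the terms where the exponent is borderline, so that the $u$-integral is only logarithmically or barely convergent. The most delicate cases are $\widetilde{\Omega\tr\chib}$ and $\omegab$, whose sources include $\Omega^2\widetilde{\Ks}$ and $\Lb\phi$-squared background terms. For these I would use the sign of $-2\omegab$ on $\Cb_0$ from \eqref{omegablower}: it is the gap between $p$ and $k^2-\epsilon$ that makes the integrated weight convergent, which produces the $\frac{1}{p}$ and $\frac{1}{k^2-\epsilon-p}$ dependence of constants allowed in Theorem~\ref{main3}.
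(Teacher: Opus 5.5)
\textbf{The gap: borderline terms.} Your blanket claim fails, and the argument does not close as written. You assert that every nonlinear term is the bootstrap bound \eqref{bootstrap} times a positive power of $\delta|u|^{p-1}a$. Consider $D(\widetilde{\Omega\tr\chi})=-\frac12(\Omega\tr\chi)^2+2\omega\,\Omega\tr\chi-|\Omega\chih|^2-2(L\phi)^2$. The term $2(L\phi)^2$ carries no smallness: $L\phi$ survives in spherical symmetry and $\|L\phi\|_{H^N(S_{\ub,u})}\lesssim a|u|^p\,\Es$. Integrating over $[0,\delta]$ gives $\delta a^2|u|^{2p-2}\Es^2$, which is exactly the scale that $\|\tr\chi\|_{\mathcal B}\lesssim1$ requires. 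Inserting $\Es\le\varepsilon^{-\kappa}$ returns only $\|\tr\chi\|_{\mathcal B}\lesssim\varepsilon^{-2\kappa}$, so the bootstrap is not improved. The $|\Omega\chih|^2$ term gains only $(\delta|u|^{p-1})^{2\gamma}$, with no power of $a$, and $\gamma=\frac{p}{1-p}$ may be small; so it is borderline too unless $\kappa$ is tied to $\gamma$.

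Your $\tr\chi$ bullet never mentions $(L\phi)^2$. Moreover, you estimate $\tr\chi$ before $\chih$ and $L\phi$, so the ``exact'' value you quote for the $|\Omega\chih|^2$ contribution presupposes an $\varepsilon$-free bound you have not yet proved. The same mechanism recurs elsewhere: a quantity of size $a$ times a background factor of size $|u|^{-1}$, integrated over $[0,\delta]$, lands exactly on the norm's scale. This happens in $D(\Omega\chibh)$ through $\frac12\Omega\tr\chib\,\Omega\chih$, in $D\widetilde{\Lb\phi}$ through $\frac12\Omega\tr\chib\,L\phi$, and in $D\widetilde{\omegab}$ through $L\phi\,\Lb\phi$.

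The missing idea is the hierarchy the paper uses. First estimate $\Omega\chih$ and $L\phi$ from their $\Db$-equations; for $L\phi$, work with $\widetilde{L\phi}$ and use $\|L\phi\|_{H^N(S_{0,u})}\lesssim|u|^pK\le|u|^pa$. In those equations every source does carry an extra small factor; at order $N$, the term $\Omega^2\nablas\tensor\eta$ is handled by $\|\eta\|_{\mathcal C}$, gaining $(\delta|u|^{p-1})^{1/2}$. Hence \eqref{chihbound} and \eqref{lowerLphibound} hold with constants independent of $\varepsilon$. Only then feed these bounds into the borderline terms of the $D$-equations, together with $\|\Omega\tr\chib\|_{H^N}\lesssim1$, $\|\Lb\phi\|_{H^N}\lesssim1$ and $|u|^{-1}\|\Omega\tr\chi\|_{H^N}\lesssim a|u|^{p-1}$ (background on $\Cb_0$ plus small tilde parts). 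This is also why $\|\tr\chi\|_{\mathcal B}$ carries $a^2$ rather than $a$.

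\textbf{Secondary misattributions.}
\begin{itemize}
\item The hypothesis $\gamma\ge\frac{p}{1-p}$ plays no role in the Raychaudhuri step; the inequality you display there is vacuous. It is what lets the $\chih$ data propagate, via $\delta^\gamma a\le(\delta|u|^{p-1})^\gamma|u|^{\gamma(1-p)}a\lesssim(\delta|u|^{p-1})^\gamma|u|^pa$.
\item Lower-order $\widetilde{\Omega\tr\chib}$ and $\widetilde{\omegab}$ come from integrating $D$-equations over $\ub\in[0,\delta]$. Note that $\omegab$ has only a $D$-equation and $\omega$ only a $\Db$-equation. So there is no convergence issue in $u$ here, and the dependence on $\frac1p$ and $\frac{1}{k^2-\epsilon-p}$ does not arise in this proposition at all; it enters only in the top-order $\tr\chib$ and $L\phi$ estimates.
\item The sign of $\omegab$ is genuinely needed here for $\etab$. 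In $\Db(\Omega\etab)=\omegab\,\Omega\etab+\cdots$ the linear term is $\omegab\,\Omega\etab$, of size at least $(k^2-\epsilon)/(2|u|)$ by \eqref{omegablower}. It is not a $\tr\chib$ term that a choice of $\nu$ removes, and Gronwall would lose a factor $(|u_0|/|u|)^{c}$ unless you drop it using $\omegab<0$.
\item Transferring $L^2(C_u)$ or $L^2(\Cb_{\ub})$ curvature bounds to $L^2(S_{\ub,u})$ costs no derivative. Cauchy--Schwarz in the transport variable suffices, as in your own $\eta$ bullet. The top-order norms $\mathcal O$, $\mathcal E$ are needed at order $N$ only for sources that already contain an extra angular derivative, such as $\nablas\tensor\eta$ or $\Deltas\phi$.
\end{itemize}
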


\begin{proof}
     For the estimates of the connection coefficients of lower-order derivatives, we need to select a suitable null structure equations. For the equation in the $\Db$ direction, by choosing $s$ from the type of this tensor and selecting the appropriate $\nu$ based on the linear term of the equation, we apply \eqref{Gronwallu} to this equation. For the equation in the $D$ direction, we directly apply \eqref{Gronwallub} to this equation. During this process, the H\"older inequality \eqref{Holder} and the bootstrap assumption will be utilized. To make this statement clearer, we will give an example. Since $\chih$ will appear as a borderline term in some equations, we choose to estimate this first. Considering that the curvature component $\alpha$ does not appear in our paper, the equation of $\chih$ can only be $\Db\chih$ instead of $D\chih$.

	 For $\chih$, we write the equation for $\Db\chih$ as
    $$\Dbh(\Omega\chih)-\frac{1}{2}\Omega\tr\chib\Omega\chih=\Omega^2(\nablas \tensor \eta + \eta \tensor \eta -\frac{1}{2}\tr\chi \chibh+\ds\phi\tensor\ds\phi).$$
	By applying  H\"older inequality \eqref{Holder} and Gronwall \eqref{Gronwallu} for $s=2, \nu=-1$, using \eqref{bootstrap}, we have
    \begin{align*}&|u|^{-1}\|\Omega\eta\widehat{\otimes}\Omega\eta-\frac{1}{2}\Omega\tr\chi\Omega\chibh+\Omega\nablas\phi\widehat{\otimes}\Omega\nablas\phi\|_{H^N(S_{\ub,u})}\\
    \lesssim&(\delta|u|^{p-1})^{2+2\gamma}|u|^{p-2}a^2\varepsilon^{-2\kappa}+|u|^{p-1} a\cdot(\delta|u|^{p-1})^{1+\gamma}a|u|^{-1}\varepsilon^{-2\kappa}+(\delta|u|^{p-1})^{2+2\gamma}|u|^{p-2}a^2\varepsilon^{-2\kappa}\\
    \lesssim&(\delta|u|^{p-1})^{1+\gamma}|u|^{p-2}a\varepsilon^{-2\kappa}.
    \end{align*}
    where we have used
    $$|u|^{-1}\|\Omega\tr\chi\|_{H^N(S_{\ub,u})}\lesssim a|u|^{p-1}$$
    by bootstrap assumption $\|\tr\chi\|_{\mathcal{B}}\le \varepsilon^{-\kappa}$ and the initial data on $\Cb_0$.
   \begin{remark}
   It is instructive to see how the signature argument comes into play. Although not necessary for the proof, it is helpful for understanding the underlying structure of the argument. The second line of the above estimate can be written as
        \begin{align*}&(\delta|u|^{p-1})^{2+2\gamma}(|u|^{\frac{1}{2}p-\frac{1}{2}ps(\eta)-1})^2a^2\varepsilon^{-2\kappa}+|u|^{\frac{1}{2}p-\frac{1}{2}ps(\tr\chi)-1} a\cdot(\delta|u|^{p-1})^{1+\gamma}a|u|^{\frac{1}{2}p-\frac{1}{2}s(\chibh)-1}\varepsilon^{-2\kappa}\\
    &+(\delta|u|^{p-1})^{2+2\gamma}(|u|^{\frac{1}{2}p-\frac{1}{2}ps(\nablas\phi)-1})^2a^2\varepsilon^{-2\kappa}.
    \end{align*}
    The signatures of both sides of the equations equal, so we have
   $$s(\chih)+1= 2s(\eta)=s(\tr\chi)+s(\chibh)=2s(\nablas\phi).$$
   Then the above estimate is bounded above by
      \begin{align*}(\delta|u|^{p-1})^{1+\gamma}\cdot|u|^{p-\frac{1}{2}p(s(\chih)+1)-2}a\varepsilon^{-2\kappa}=(\delta|u|^{p-1})^{1+\gamma}|u|^{p-2}a\varepsilon^{-2\kappa}.
    \end{align*}
   \end{remark} 
        The initial data satisfies, for any $u\in[u_0,u_1]$, 
  $$  \|\Omega\chih\|_{H^N(S_{\ub,u_0})}\lesssim\delta^{\gamma}a\le (\delta|u|^{p-1})^{\gamma}|u|^{\gamma(1-p)}a\lesssim (\delta|u|^{p-1})^{\gamma}|u|^pa.$$
  Then applying Gronwall \eqref{Gronwallu} for $s=2,\nu=-1$,
    \begin{align}\nonumber\|\Omega\chih\|_{H^N(S_{\ub,u})}\lesssim&\|\Omega\chih\|_{H^N(S_{\ub,u_0})}+\int_{u_0}^{u} \|\Dbh(\Omega\chih)-\frac{1}{2}\Omega\tr\chib\Omega\chih\|_{H^N(S_{\ub,u'})}\D u'\\
\nonumber\lesssim&\|\Omega\chih\|_{H^N(S_{\ub,u_0})}+\int_{u_0}^{u}\|\Omega^2\nablas\widehat{\otimes}\eta\|_{H^N(S_{\ub,u'})}\D u'+\int_{u_0}^{u}(\delta|u'|^{p-1})^{1+\gamma}a|u'|^{p-1}\varepsilon^{-2\kappa}\D u'\\
        \label{chihbound}\lesssim&\|\Omega\chih\|_{H^N(S_{\ub,u_0})}+(\delta|u|^{p-1})^{\frac{1}{2}+\gamma}|u|^pa\|\eta\|_{\mathcal{C}}+(\delta|u|^{p-1})^{\gamma}|u|^p\le (\delta|u|^{p-1})^{\gamma}|u|^p a
      \end{align}
	when $\delta|u_1|^{p-1}a\le\varepsilon$ is sufficiently small (and recall that $a\ge1$). This gives the desired estimate for $\chih$. 

    For $L\phi$, from the perspective of the equation in the $\Db$ direction, it is almost identical to $\Omega\chih$. However, the difference in the estimates lies in the fact that $\Omega\tr\chi\Lb\phi$ is worse than $\Omega\tr\chi\Omega\chibh$. Therefore, we choose to estimate the following equation
    $$\Db \widetilde{L\phi}+\frac{1}{2}\Omega\tr\chib \widetilde{L\phi}=\Omega^2\Deltas\phi+2\Omega^2(\eta,\ds\phi)-\frac{1}{2}L\phi|_{\Cb_0}\widetilde{\Omega\tr\chib}-\frac{1}{2}\Omega\tr\chi\widetilde{\Lb\phi}-\frac{1}{2}\Lb\phi|_{\Cb_0}\widetilde{\Omega\tr\chi}.$$
    Use the fact $\|L\phi\|_{H^N(S_{0,u})}\lesssim|u|^pK$ (and recall that $a\ge K$) and the initial data $\|\widetilde{L\phi}\|_{H^N(S_{\ub,u_0})}\lesssim|u|^pa$, we have 
    \begin{equation}\label{lowerLphibound}
    \|L\phi\|_{H^N(S_{\ub,u})}\le\|\widetilde{L\phi}\|_{H^N(S_{\ub,u})}+\|L\phi\|_{H^N(S_{0,u})}\lesssim|u|^pa.
    \end{equation}

	For $\tr\chi$, $\eta$ and $\Omega\nablas\phi$, we rewrite the equation for $D\tr\chi'$, $ D\eta$ and $D\nablas\phi$ as
    \begin{align*}
	&D(\widetilde{\Omega\tr\chi})=-\frac{1}{2}(\Omega\tr\chi)^2+2\omega\Omega\tr\chi-\boxed{|\Omega\chih|^2}-\boxed{2(L\phi)^2},\\
	&D(\Omega\eta) = \omega\cdot\Omega\eta+(\Omega\chi)\cdot(\Omega\etab)-(\Omega^2\beta+L\phi\Omega\nablas\phi),\\
    &D(\Omega\nablas\phi)=\omega\Omega\nablas\phi+\Omega\nablas L\phi.
    \end{align*}
	  By applying \eqref{Gronwallub} to the above equation, we can obtain the desired bound. One should note that in estimating $|\Omega\chih|^2$ and $(L\phi)^2$, so-called \emph{borderline term},  on the right hand side of $D(\widetilde{\Omega\tr\chi})$, we should use the bound \eqref{chihbound}, \eqref{lowerLphibound} instead of \eqref{bootstrap}, since we will not have extra $\varepsilon$ after integrating that term. Note also that this is the reason why we have $a^2$ instead of $a$ in the definitions of $\|\tr\chi\|_{\mathcal{B}}$ and $\|\tr\chi\|_{\mathcal{C}}$.

	For $\chibh$, $\tr\chib$ and $\Lb\phi$, we use the following equations of $D$ direction, in which the right hand sides contain top order derivatives and we should use $\mathcal{O}$ or $\mathcal{E}$. The estimates for $\widetilde{\Omega\tr\chib}$ is obtained by integrating the equation 
    $$D(\widetilde{\Omega\tr\chib})=2\Omega^2\divs\etab+2|\Omega\etab|^2-\Omega\tr\chi\Omega\tr\chib-2\Omega^2\Ks+2|\Omega\nablas\phi|^2.$$
   And we also obtain 
   $$\|\Omega\tr\chib\|_{H^N(S_{\ub,u})}\lesssim1.$$
  $\Omega\chibh$ and $\Lb\phi$ are estimated using
    \begin{align*}
    &\Dh(\Omega\chibh)=\Omega^2\nablas \tensor \etab + (\Omega\etab) \tensor (\Omega\etab) +\frac{1}{2}\Omega\tr\chi\Omega\chibh- \frac{1}{2}\Omega\tr\chib \Omega\chih+(\Omega\nablas\phi)\widehat{\otimes}(\Omega\nablas\phi),\\	
     &D\widetilde{\Lb\phi}+\frac{1}{2}\Omega\tr\chi\Lb\phi=\Omega^2\Deltas\phi+2\Omega^2(\etab,\ds\phi)- \frac{1}{2}\Omega\tr\chib L\phi.
    \end{align*}
	 Moreover, we have 
	 $$\|\Lb\phi\|_{H^N(S_{\ub,u})}\lesssim1.$$

     For $\omegab$, we rewrite the equation for $D\omegab$ as
     $$D  \widetilde{\omegab} =2(\Omega\eta,\Omega\etab)-|\Omega\eta|^2+\Omega^2\Ks+ \frac{1}{4}\Omega\tr\chi\Omega\tr\chib-\frac{1}{2}(\Omega\chih,\Omega\chibh)-|\Omega\nablas\phi|^2-L\phi\Lb\phi.$$

	For $\etab$, we rewrite the equation for $\Db\etab$ in the form
	$$\Db(\Omega\etab) = \omegab\cdot\Omega\etab+ \Omega\chibh\cdot\Omega\eta+\frac{1}{2}\Omega\tr\chib\Omega\eta+(\Omega^2\betab+\Lb\phi\Omega\nablas\phi)-2\Lb\phi\Omega\nablas\phi.$$
	Based on the bootstrap assumption \eqref{bootstrap} (or the bound for $\widetilde{\omegab}$ obtained above), the Sobolev inequality \eqref{Sobolev}, and the initial positive lower bound \eqref{omegablower} of $-\omegab$, we can conclude that $\omegab<0$ throughout the spacetime. Therefore, the first term on the right hand side can be dropped.
	By Gronwall \eqref{Gronwallu}, for $\etab$, $s=1,\nu=0$, so
	\begin{align*}\|\Omega\etab\|_{H^N(S_{\ub,u})}\lesssim& \|\Omega\etab\|_{H^N(S_{\ub,u_0})}+\delta^{1+\gamma}|u|^{p(\frac{3}{2}+\gamma)-1-\gamma}a\lesssim\delta^{1+\gamma}|u|^{p(\frac{3}{2}+\gamma)-1-\gamma}a
	\end{align*}
	when $\varepsilon$ is small enough. This gives the desired bound for $\etab$. For $\omega$, we use the equation
    $$\Db \omega =2(\Omega\eta,\Omega\etab)-|\Omega\etab|^2+\Omega^2\Ks+\frac{1}{4}\Omega\tr\chi\Omega\tr\chib-\frac{1}{2}(\Omega\chih,\Omega\chibh)-|\Omega\nablas\phi|^2-L\phi\Lb\phi,$$
    and apply (3.9) to $\omega$ with $s=\nu=0$.

\end{proof}

  \section{The estimates for curvature components}
	We are going to prove the estimates for curvature components.

	\begin{proposition}\label{curvature}
		Under the assumptions of \eqref{bootstrap}, we have
		\begin{equation}\label{curvaturebound}\mathcal{R}\lesssim1,\end{equation}
	\end{proposition}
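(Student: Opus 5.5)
Under the bootstrap assumptions \eqref{bootstrap}, and with Proposition \ref{connection-low} available once $\mathcal{R}$ is controlled, I would prove \eqref{curvaturebound} by the standard energy estimates for the renormalized null Bianchi equations. The unknowns are $\Omega^2\beta-L\phi\,\Omega\nablas\phi$, $\Omega^2\widetilde{\Ks}$, $\Omega^2\sigmac$ and $\Omega^2\betab+\Lb\phi\,\Omega\nablas\phi$, weighted by the powers of $|u|$ and $\delta$ in the $\mathcal{A}$-norms.

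The Bianchi system splits into two Bianchi pairs. The first is $(\Omega^2\beta-L\phi\Omega\nablas\phi)$ coupled with $(\Omega^2\widetilde{\Ks},\Omega^2\sigmac)$, using the $\Db$ equation for the former and the $D$ equations for the latter. The second is $(\Omega^2\widetilde{\Ks},\Omega^2\sigmac)$ coupled with $(\Omega^2\betab+\Lb\phi\Omega\nablas\phi)$, using the $\Db$ equations for $\Ks,\sigmac$ and the $D$ equation for $\betab$.

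For each pair and each $0\le i\le N$, I commute $\nablas^i$, multiply by the weight $|u|^{2+2\gamma-p(3+2\gamma)+2i}$ (respectively $|u|^{4+2\gamma-2p(2+\gamma)+2i}$ for the second pair), and integrate over $\{0\le\ub'\le\ub,\ u_0\le u'\le u\}$. The angular terms $\nablas\Ks$, ${}^*\nablas\sigmac$ and $\divs,\curls$ cancel after integration by parts on $S_{\ub,u}$. What remains is flux on $C_u$ and $\Cb_{\ub}$, controlled by data on $C_{u_0}$ and $\Cb_0$, plus spacetime error integrals.

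The weights are designed so that the $\Db$-transport term $\frac{1}{2}\Omega\tr\chib\cdot$ combines with $\Db$ of the $|u|$-weight. Using $\Omega\tr\chib\approx -2/|u|$ together with $-\omegab\ge0$, which follows from \eqref{omegablower}, this produces either a favourable sign or a harmless $|u|^{-1}$ term absorbable by Gronwall in $u$. The factors of $\Omega$ are handled through the relative scaling control \eqref{relativecontrol}: only the upper bound $\Omega_0^2\le |u|^p$ is needed, and \eqref{lapse} and \eqref{derivativelapse} let $\Omega$ move in and out of norms.

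The nonlinear error terms are bounded by Hölder \eqref{Holder} and Sobolev \eqref{Sobolev}. The signature bookkeeping shows each error carries the same $|u|$-power as the main terms times an extra factor $(\delta|u|^{p-1})^{\theta}a^{\theta'}\varepsilon^{-C\kappa}$ with $\theta>0$. These are therefore $\lesssim \varepsilon^{c}$, using $\delta|u|^{p-1}a\le\varepsilon$. Terms involving top-order connection coefficients, such as $\nablas\eta$ inside $\Ks$ and $\nablas\tensor\etab$ in the $\sigmac$ equation, are bounded using $\mathcal{O},\mathcal{E}$ from \eqref{bootstrap}. The loss of $\Omega|u|^{-p/2}$ noted for top-order quantities is compensated by a spare half-power of $\delta|u|^{p-1}$.

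The main obstacle is the borderline terms that carry no spare small factor. These are $\Omega\chih\cdot(\Omega^2\betab+\dots)$, the quadratic $\chih\cdot\chibh$ terms in the $\Db\beta$ equation, and the scalar-field sources involving $L\phi$ and $\nablas L\phi\wedge\nablas\phi$. Here $\Omega\chih$ and $L\phi$ are only of size $a$ at $\gamma=p/(1-p)$.

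For $\Omega\chih\cdot\betab$, I would integrate in $\ub$ first. Then $\int_0^\delta|\Omega\chih|\,d\ub\lesssim\delta^{1+\gamma}a|u|^{\dots}$ supplies a factor $\delta a|u|^{p-1}\le\varepsilon$. The asymmetric $\delta^{-3}a^{-3}$ normalization of the $\betab$-norm is what makes this estimate close.

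The terms with $\nablas L\phi$ are estimated through the spacetime part of $\|L\phi\|_{\mathcal{F}}$. This step produces the $\frac{1}{k^2-\epsilon-p}$ constant, because the $u$-integral of $|u|^{-1+(k^2-\epsilon-p)}$-type weights from \eqref{relativecontrol} converges only with that loss.

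Finally, the $\widetilde{\Ks}$ term subtracts $|u|^{-2}$. Its inhomogeneity $(\Omega\tr\chib+2/|u|)|u|^{-2}$ is controlled by the $\mathcal{B}$-bound on $\widetilde{\Omega\tr\chib}$, which gives the weaker, $\gamma$-free scaling built into the $\Cb_{\ub}$-part of $\|\Ks\|_{\mathcal{A}}$. This closes $\mathcal{R}\lesssim1$.
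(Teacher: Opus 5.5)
Your overall architecture matches the paper's: two Bianchi pairs with the $\mathcal{A}$-weights, integration by parts on $S_{\ub,u}$, H\"older/Sobolev with signature counting, and bootstrap bounds for $\mathcal{O},\mathcal{E}$. However, the one genuinely delicate point is mishandled. In the second pair, the $\Db\widetilde{\Ks}$ equation contains $\Omega^2(\Omega\tr\chib+\frac{2}{|u|})|u|^{-2}$ and $\frac{1}{2}\Omega\tr\chib\,\Omega^2\mu$; you do not mention the $\mu$ term at all. Both $\widetilde{\Omega\tr\chib}$ and $\mu=\widetilde{\Ks}-\divs\eta$ survive in spherical symmetry. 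At zeroth order they carry no $(\delta|u|^{p-1})^{\gamma}$ gain; indeed $\Omega^2\widetilde{\Ks}\sim\delta|u|^{2p-3}$ already from the background $\tr\chi$. These terms feed the $C_u$-flux of $\Omega^2\widetilde{\Ks}$, which is normalised by $a^{-3}\delta^{-3-2\gamma}$. There is no ``$\gamma$-free'' part of $\|\Ks\|_{\mathcal{A}}$: its $\Cb_{\ub}$-part carries $\delta^{-1-2\gamma}$ and comes from the first pair, where these terms do not occur.

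Inserting the $\mathcal{B}$-bound $|\nablas^i\widetilde{\Omega\tr\chib}|\lesssim a\delta|u|^{p-2-i}$ gives an error of order $a^{5/2}(\delta|u_0|^{p-1})^{-\gamma}$ against the target $a^3$, which diverges as $\delta\to0$. Even the bootstrap $\|\tr\chib\|_{\mathcal{C}}$ (and likewise the bootstrap for $\nablas^i\mu$) only gives an integrand $\varepsilon^{-\kappa}a^{5/2}|u'|^{-1}$. That is a loss of $\log(|u_0|/|u_1|)$ with no small factor.

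The missing ideas are the following.
\begin{enumerate}
\item[(a)] Run the second pair only for $1\le i\le N$; this is why the $C_u$-part of $\|\Ks\|_{\mathcal{A}}$ starts at $i=1$. Recover the zeroth-order $\sigmac$ and $\betab$ by integrating the Bianchi equations directly. Your plan to include $i=0$ there cannot close.
\item[(b)] For $i\ge1$, prove the improved bound $|u|\|\nablas(\Omega\tr\chib)\|_{H^{N-1}(S_{\ub,u})}\lesssim\varepsilon^{-\kappa}(\delta|u|^{p-1})^{\frac{3}{2}+\gamma}a^{\frac{3}{2}}$. You get it by commuting $\nablas$ with the $D(\Omega\tr\chib)$ equation and integrating in $\ub$, using $\mathcal{O}$. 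Prove the analogous bound for $\nablas(\Omega^2\mu)$ from the $D\mu$ equation. The extra $(\delta|u|^{p-1})^{1/2}$ makes the $u$-integral converge, and these terms then contribute $(\delta|u|^{p-1})^{1/2}\varepsilon^{-2\kappa}a^3$.
\item[(c)] You also omit the preliminary bounds \eqref{Gausscurvaturebound} and \eqref{rhodaggeritself}, obtained from the $D\widetilde{\Ks}$ equation. Undifferentiated $\Ks$ lies in no bootstrap norm, yet these bounds are needed for terms such as $\Omega\tr\chi\,\Omega^2\Ks$ and for the commutator terms.
\end{enumerate}

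The sign of the weight term is also misattributed. With $\gamma=\frac{p}{1-p}$ the coefficient of the weighted energy is $p|u|^{-1}+2\omegab$, and $2p|u|^{-1}+4\omegab$ in the second pair. So $-\omegab\ge0$ does not suffice; nonpositivity requires the quantitative bound $-2|u|\omegab\ge k^2-\epsilon\ge p$ from \eqref{omegablower}. Your fallback, Gronwall in $u$ with a $|u|^{-1}$ coefficient, is not harmless: it costs $(|u_0|/|u_1|)^{c}$, whereas all constants must be uniform as $u_1\to0$.

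Finally, the terms you call borderline are not. $\Omega\chih\cdot\betab$, the $\chih\cdot\chibh$ terms and $\nablas L\phi\wedge\nablas\phi$ all carry a positive power of $\delta|u_1|^{p-1}a$ in the bootstrap norms. For example, the $\chih\,\betab\,\beta$ term gives $\varepsilon^{-3\kappa}(\delta|u_1|^{p-1})^{3/2}a^{7/2}$, thanks to the $\delta^{3/2}a^{3/2}$ normalisation of $\betab$. The $\frac{1}{k^2-\epsilon-p}$ dependence does not arise in the curvature estimates but in Section~\ref{sec:toporderscalar}.
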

		Together with Proposition \ref{connection-low}, the estimate \eqref{connection-low-bound} holds without the additional assumption \eqref{connection-low-curvaturebound}.
	\begin{proof}

		First of all, we derive lower-order estimates for $\Ks$. This can be done from the equation
		$$D\widetilde{\Ks}+\Omega\tr\chi\Ks=\divs\divs(\Omega\chih)-\frac{1}{2}\Deltas(\Omega\tr\chi).$$
		Rewrite this equation as $D(\Omega \widetilde{\Ks})=\omega\Omega\widetilde{\Ks}+\Omega D\widetilde{\Ks}$ and apply \eqref{Gronwallub}, using the bootstrap assumption \eqref{bootstrap}  for (top order derivatives of) $\chih$ and $\tr\chi$, we have
		
		\begin{align*}
		\|\Omega\divs\divs(\Omega\chih)\|_{H^{N-1}(S_{\ub,u})}\lesssim&|u|^{-2}\|\Omega^2\chih\|_{H^{N+1}(S_{\ub,u})}+|u|^{-2}\|\Omega(\eta+\etab)\|_{H^{N}(S_{\ub,u})}\|\Omega\chih\|_{H^{N-1}(S_{\ub,u})}\\
		+&|u|^{-2}\|\Omega(\eta+\etab)\|_{H^{N-1}(S_{\ub,u})}\|\Omega\chih\|_{H^{N}(S_{\ub,u})}\\
		\lesssim&|u|^{-2}\|\Omega^2\chih\|_{H^{N+1}(S_{\ub,u})}+\delta|u|^{\frac{5}{2}p-3}a\varepsilon^{-2\kappa}
	   \end{align*}
		So we have 
		\begin{align*}
		|u|^{-1}\int_{0}^{\delta}\|\Omega\divs\divs(\Omega\chih)\|_{H^{N-1}(S_{\ub,u})}\lesssim&(\delta|u|^{p-1})^{1+\gamma}|u|^{\frac{1}{2}p-2}a\varepsilon^{-\kappa}+(\delta|u|^{p-1})^2|u|^{\frac{1}{2}p-2}a\varepsilon^{-2\kappa}\\
		\lesssim&|u|^{\frac{1}{2}p-2}
	   \end{align*}
		for $\delta|u|^{p-1}a\leq\varepsilon$ small enough.
		Similarly, we can obtain
		\begin{align*}
		|u|^{-1}\int_{0}^{\delta}\|\Omega\Deltas(\Omega\tr\chi)\|_{H^{N-1}(S_{\ub,u})}\lesssim&(\delta|u|^{p-1})^2|u|^{\frac{1}{2}p-2}a^2\varepsilon^{-2\kappa}+(\delta|u|^{p-1})^{2+\gamma}|u|^{\frac{1}{2}p-2}a^2\varepsilon^{-\kappa}\\
		\lesssim&|u|^{\frac{1}{2}p-2}
		\end{align*}
		Therefore, we have
		\begin{equation}\label{Gausscurvaturebound}|u|^{-1}\|\Omega\widetilde{\Ks}\|_{H^{N-1}(S_{\ub,u})}\lesssim|u|^{\frac{1}{2}p-2},\end{equation}
		for $\varepsilon$ small enough. Note if we only consider up to $N-2$ order derivatives of $\Ks$, then we can use $\Os$ alone, then 
		\begin{align*}
		|u|^{-1}\int_{0}^{\delta}\|\divs\divs(\Omega\chih)\|_{H^{N-2}(S_{\ub,u})}+|u|^{-1}\int_{0}^{\delta}\|\Deltas(\Omega\tr\chi)\|_{H^{N-2}(S_{\ub,u})}\lesssim|u|^{-2}
	    \end{align*}
		that is,
		$$|u|^{-1}\|\widetilde{\Ks}\|_{H^{N-2}(S_{\ub,u})}\lesssim|u|^{-2}$$
		and in particular  $|\Ks|\lesssim|u|^{-2}$. So we have
        \begin{equation}\label{rhodaggeritself}
        |\Omega^2\Ks|\lesssim|u|^{p-2}
        \end{equation}
        which is exactly the zeroth-order estimate for $\Ks$ that is missing from the definition of $\|\Ks\|_{\mathcal{A}}$. 
		
		Now we begin the energy estimates for the curvature components. For $\beta$-$(\Ks,\check{\sigma})$ pair, we use the equations
		{\small
        \begin{align*}
        &\Db(\Omega\beta-L\phi\nablas\phi)+\frac{1}{2}\Omega\tr\chib(\Omega\beta-L\phi\nablas\phi)-\Omega\chibh\cdot(\Omega\beta-L\phi\nablas\phi)+\Omega^2\ds \Ks-\Omega^2{}^*\ds\sigmac+3\Omega^2(\eta \Ks-{}^*\eta\sigmac)\\
     &-2\Omega\chih\cdot(\Omega\betab+\Lb\phi\nablas\phi)-\frac{1}{2}\Omega^2(\ds(\chih,\chibh)+{}^*\ds(\chih\wedge\chibh))-\frac{3}{2}\Omega^2(\eta(\chih,\chibh)+{}^*\eta(\chih\wedge\chibh))+\frac{1}{4}\Omega^2\ds(\tr\chi\tr\chib)\\
     &+\frac{3}{4}\Omega^2\tr\chi\tr\chib\eta=-2\Omega^2\Deltas\phi\nablas\phi+\Omega^2\ds|\ds\phi|^2-2\Omega\chih\cdot\nablas\phi\Lb\phi+\Omega\tr\chi\Lb\phi\nablas\phi-2\Omega^2\eta\cdot\nablas\phi\nablas\phi+2\Omega^2\eta|\ds\phi|^2,\\
        &D\Ks+\Omega\tr\chi \Ks+\divs(\Omega\beta-L\phi\nablas\phi)+(\Omega\beta-L\phi\nablas\phi)\cdot\etab-\Omega\chih\cdot\nablas\etab+\frac{1}{2}\Omega\tr\chi\divs\etab-\Omega\chih\cdot\etab\cdot\etab+\frac{1}{2}\Omega\tr\chi|\etab|^2=0\\
     &D\sigmac+\frac{3}{2}\Omega\tr\chi\sigmac+\curls(\Omega\beta-L\phi\nablas\phi)+\etab\wedge(\Omega\beta-L\phi\nablas\phi)+\frac{1}{2}\Omega\chih\wedge(\etab\tensor\etab+\nablas\tensor\etab)=-2\nablas L\phi\wedge\nablas\phi\\
    \end{align*}}
     Then we will reorganize the above equation in the following form
     {\small
    \begin{align*}   
    &\Db(\Omega^2\beta-L\phi\Omega\nablas\phi)+\frac{1}{2}\Omega\tr\chib(\Omega^2\beta-L\phi\Omega\nablas\phi)-\omegab(\Omega^2\beta-L\phi\Omega\nablas\phi)-\Omega\chibh\cdot(\Omega^2\beta-L\phi\Omega\nablas\phi)+\Omega\ds( \Omega^2\Ks)\\
    &+\Omega(2\eta-\etab)\Omega^2\Ks-\Omega{}^*\ds(\Omega^2\sigmac)+{}^*\Omega(-2\eta+\etab)\Omega^2\sigmac-2\Omega\chih\cdot(\Omega^2\betab+\Lb\phi\Omega\nablas\phi)-\frac{1}{2}\Omega\ds(\Omega\chih,\Omega\chibh)\\
    &+\Omega(-\eta+\frac{1}{2}\etab)(\Omega\chih,\Omega\chibh)-\frac{1}{2}\Omega{}^*\ds(\Omega\chih\wedge\Omega\chibh)+{}^*\Omega(-\eta+\frac{1}{2}\etab)(\Omega\chih\wedge\Omega\chibh)+\frac{1}{4}\Omega\ds(\Omega\tr\chi\Omega\tr\chib)\\
    &+\Omega\tr\chi\Omega\tr\chib\Omega(\frac{1}{2}\eta-\frac{1}{4}\etab)=-2\Omega^2\Deltas\phi\Omega\nablas\phi+\Omega^3\ds|\ds\phi|^2-2\Omega\chih\cdot\Omega\nablas\phi\Lb\phi+\Omega\tr\chi\Lb\phi\Omega\nablas\phi-2\Omega\eta\cdot\Omega\nablas\phi\Omega\nablas\phi\\
    &+2\Omega\eta|\Omega\ds\phi|^2,\\
    &D(\Omega^2(\Ks-|u|^{-2}))+\Omega\tr\chi \Omega^2\Ks-2\omega\Omega^2(\Ks-|u|^{-2})+\Omega\divs(\Omega^2\beta-L\phi\Omega\nablas\phi)+\frac{1}{2}(\Omega^2\beta-L\phi\Omega\nablas\phi)\cdot\Omega\etab\\
    &-\frac{1}{2}(\Omega^2\beta-L\phi\Omega\nablas\phi)\cdot\Omega\eta-\Omega\chih\cdot\Omega\nablas(\Omega\etab)-\frac{1}{2}\Omega\chih\cdot\Omega\etab\cdot\Omega\etab+\frac{1}{2}\Omega\chih\cdot\Omega\eta\cdot\Omega\etab+\frac{1}{2}\Omega\tr\chi\Omega\divs(\Omega\etab)\\
    &+\frac{1}{4}\Omega\tr\chi|\Omega\etab|^2-\frac{1}{4}\Omega\tr\chi\Omega\eta\cdot\Omega\etab=0\\
    &D(\Omega^2\sigmac)+\frac{3}{2}\Omega\tr\chi\Omega^2\sigmac-2\omega\Omega^2\sigmac+\Omega\curls(\Omega^2\beta-L\phi\Omega\nablas\phi)+(-\frac{1}{2}\Omega\eta+\frac{1}{2}\Omega\etab)\wedge(\Omega^2\beta-L\phi\Omega\nablas\phi)\\
    &+\frac{1}{2}\Omega\chih\wedge\left((-\frac{1}{2}\Omega\eta+\frac{1}{2}\Omega\etab)\tensor\Omega\etab+\Omega\nablas\tensor(\Omega\etab)\right)=-2\Omega\nablas L\phi\wedge\Omega\nablas\phi\\
    \end{align*}}
    we compute for $0\le i\le N$,
		\begin{align*}
			&\Db\left(\delta^{-1-2\gamma}|u|^{2+2\gamma-p(3+2\gamma)+2i}|\nablas^i(\Omega^2\beta-L\phi\Omega\nablas\phi)|^2\D\mu_{\gs}\right)\\
            &+D\left(\delta^{-1-2\gamma}|u|^{2+2\gamma-p(3+2\gamma)+2i}\left(|\nablas^i(\Omega^2(\Ks-|u|^{-2}))|^2+|\nablas^i(\Omega^2\sigmac)|^2\right)\D\mu_{\gs}\right)\\
			&=\delta^{-1-2\gamma}|u|^{2+2\gamma-p(3+2\gamma)+2i}\divs(\nablas^i(\Omega^2\beta-L\phi\Omega\nablas\phi)\cdot (\nablas^i(\Omega^2(\Ks-|u|^{-2})),\nablas^i(\Omega^2\sigmac)) )\D\mu_{\gs}\\
			&\underline{-((2\gamma-p(3+2\gamma))|u|^{-1}-2\omegab|_{\Cb_0})\left(\delta^{-1-2\gamma}|u|^{2+2\gamma-p(3+2\gamma)+2i}|\nablas^i(\Omega^2\beta-L\phi\Omega\nablas\phi)|^2\D\mu_{\gs}\right)}\\
			&+\delta^{-1-2\gamma}|u|^{2+2\gamma-p(3+2\gamma)+2i}\tau_{1}\D\mu_{\gs}.
		\end{align*}
    The choice of the power $|u|$ is made such that $\tau_1$ does not contain terms like $$(\omegab,\Omega\tr\chib)|\nablas^i(\Omega^2\beta-L\phi\Omega\nablas\phi)|^2.$$ The coefficient of the underlined term is worthy of attention. Recall from \eqref{omegablower} that $-2\omegab\big|_{\Cb_0}\ge p|u|^{-1}$, we have
    $$-2|u|\omegab\big|_{\Cb_0}\ge p\ge 3p-2\gamma(1-p).$$
    So the coefficient of the underlined term is nonpositive. Integrating over $(\mathcal{M}, \D\ub\D u\D\mu_{\gs})$ and drop the underlined term, we have
    \begin{align*}
			&\delta^{-1-2\gamma}\int_{C_u}|u|^{2+2\gamma-p(3+2\gamma)+2i}|\nablas^i(\Omega^2\beta-L\phi\Omega\nablas\phi
            )|^2\\
            &+\delta^{-1-2\gamma}\int_{\Cb_{\ub}}|u|^{2+2\gamma-p(3+2\gamma)+2i}\left(|\nablas^i(\Omega^2(\Ks-|u|^{-2}))|^2+|\nablas^i(\Omega^2\sigmac)|^2\right)\\
			\lesssim&a^2+\delta^{-1-2\gamma}\int_{\mathcal{M}}|u|^{2+2\gamma-p(3+2\gamma)+2i}|\tau_{1}|.
	\end{align*}
    Due to the large number of terms involved in the equations, we will not write out the specific expressions in detail. Instead, we will analyze and explain the components of $\tau$ below. We then group these terms into the following categories:
    \begin{itemize}
        \item (1) In this case, there are three factors involved, among which there are two curvature components. It can be written in the following form (in a schematic form):
        $$\nablas^i(\Omega\psi\cdot(\Omega^2\Psi_1))\cdot\nablas^i(\Omega^2\Psi_2)$$
     where 
    $$\Omega\psi\in \{\Omega\chih, \Omega\chibh, \Omega\tr\chi, \widetilde{\Omega\tr\chib},\Omega\eta, \Omega\etab, \omega, \widetilde{\omegab}\}$$
    $$\Omega^2\Psi_{1,2}\in\{\Omega^2\beta-L\phi\Omega\nablas\phi, \Omega^2\betab+\Lb\phi\Omega\nablas\phi, \Omega^2\widetilde{\Ks}, \Omega^2\sigmac\},$$
 and both  the signatures  and additional  powers of $\Omega$ match on both sides. For example, the subcase $\Omega\psi\in\{\Omega\chih,\Omega\tr\chi, \omega\}$ and $\Omega^2\Psi_{1,2}=\Omega^2\beta-L\phi\Omega\nablas\phi$ is not included. The same applies to the other cases, and we will not repeat this structure below. In this case, we put $\Omega^2\psi$ in $H^j(S_{\ub,u})$, derivatives of $\Omega^2\beta-L\phi\Omega\nablas\phi$ in $L^2(C_u)$ and derivatives of $\Omega^2\betab+\Lb\phi\Omega\nablas\phi, \Omega^2\widetilde{\Ks}, \Omega^2\sigmac$ in $L^2(\Cb_{\ub})$, using the bootstrap assumptions \eqref{bootstrap}.  We consider the following term as an example
    $$\Omega\chih\cdot(\Omega^2\betab+\Lb\phi\Omega\nablas\phi)\cdot (\Omega^2\beta-L\phi\Omega\nablas\phi).$$
    After placing them in suitable norms and taking their signatures into account, they are estimated by
    $$\delta|u|^3\cdot\varepsilon^{-\kappa}\frac{a}{|u|^{1-\frac{1}{2}p+\frac{1}{2}ps(\chih)}}\cdot\varepsilon^{-\kappa}\left(\frac{\delta}{|u|^{1-p}}\right)^{\frac{3}{2}+\gamma}\frac{a^{\frac{3}{2}}}{|u|^{2-p+\frac{1}{2}ps(\betab)}}\cdot\varepsilon^{-\kappa} \left(\frac{\delta}{|u|^{1-p}}\right)^{\gamma}\frac{a}{|u|^{2-p+\frac{1}{2}ps(\beta)}},$$
      where the first factor comes from integration over $\mathcal{M}$,  which is
    $$\varepsilon^{-3\kappa}\frac{\delta}{|u|^{2-\frac{5}{2}p-\frac{1}{2}p(s(\chih)+s(\betab)+s(\beta))}}\left(\frac{\delta}{|u|^{1-p}}\right)^{\frac{3}{2}+2\gamma}a^{\frac{7}{2}}=\varepsilon^{-3\kappa}\frac{\delta}{|u|^{2-\frac{5}{2}p-\frac{1}{2}p\cdot2(s(\beta)+1)}}\left(\frac{\delta}{|u|^{1-p}}\right)^{\frac{3}{2}+2\gamma}a^{\frac{7}{2}},$$
    where $s(\chih)+s(\betab)=s(\beta)+1$ is the relation of the signatures on both sides of the equations. 
   Recall that the weight 
    $$\delta^{-1-2\gamma}|u|^{2+2\gamma-p(3+2\gamma)+2i}$$
    of the spacetime integral is designed as
    $$\delta^{-1}|u|^{-2}\cdot \left((\delta|u|^{p-1})^{-\gamma} \cdot |u|^{2-p+\frac{1}{2}ps(\beta)}\cdot|u|^i\right)^2,$$
    where the first factor corresponds to the integral over $C_u$.     Therefore the spacetime integral of this term
   \begin{align*} &\delta^{-1-2\gamma}\int_{\mathcal{M}}|u|^{2+2\gamma-p(3+2\gamma)+2i}|\nablas^i(\Omega\chih\cdot(\Omega^2\betab+\Lb\phi\Omega\nablas\phi))\cdot \nablas^i(\Omega^2\beta-L\phi\Omega\nablas\phi)|\\
   \lesssim&\varepsilon^{-3\kappa}(\delta|u_1|^{p-1})^{\frac{3}{2}}a^{\frac{7}{2}}\lesssim a^2
   \end{align*}
   if $\varepsilon$ is sufficiently small. The other terms are treated similarly, and the key fact is that we will eventually have an extra $(\delta|u_1|^{p-1}a)$ to some positive power. This fact can be checked term by term without invoking the signature, while the signature structure provides a natural explanation for why it should hold.
     \item (2) In this case, the $0$th-order derivative of $\Ks$ exists.
     $$\nablas^i(\Omega\psi)\cdot\Omega^2\Ks\cdot\nablas^i(\Omega^2\Psi)$$
    where
    $$\Omega\psi\in \{\Omega\chih, \Omega\tr\chi, \omega, \Omega\eta, \Omega\etab\}$$
    $$\Omega^2\Psi\in\{\Omega^2\beta-L\phi\Omega\nablas\phi, \Omega^2\widetilde{\Ks}, \Omega^2\sigmac\}.$$
     $\Omega^2\Ks$ is estimated using \eqref{rhodaggeritself} and the others are the same to Case (1).
     \item (3) This case comes from commutators of $\nablas^i$ with the Hodge operators.
     $$\nablas^{i_1}(\Omega\Ks)\cdot\nablas^{i_2}(\Omega^2\widetilde{\Ks},\Omega^2\sigmac)\cdot\nablas^{i_3}(\Omega^2\beta-L\phi\Omega\nablas\phi)$$
     where $i_1+i_2+i_3=2i-1$ and $i_1+1,i_2,i_3\le i$. The first factor is estimated using \eqref{Gausscurvaturebound}. 
      \item (4) There are four factors involved, among which there are three connection coefficients and one curvature component.
      $$\nablas^i(\Omega\psi_1\cdot\Omega\psi_2\cdot\Omega\psi_3)\cdot\nablas^i(\Omega^2\Psi)$$
      where
       $$\Omega\psi_{1,2,3}\in \{\Omega\chih, \Omega\chibh, \Omega\tr\chi, \widetilde{\Omega\tr\chib},\Omega\eta, \Omega\etab\}$$
    $$\Omega^2\Psi\in\{\Omega^2\beta-L\phi\Omega\nablas\phi, \Omega^2\widetilde{\Ks}, \Omega^2\sigmac\}.$$
    We may place each connection coefficient in $H^j(S_{\ub,u})$ norm and the curvature component in $L^2(C_u)$ and $L^2(\Cb_{\ub})$. 
      \item (5) This case may involve the top-order derivatives of the connection coefficients.
         $$\nablas^i(\Omega\psi_1\cdot\nablas(\Omega^2\psi_2))\cdot\nablas^i(\Omega^2\Psi)$$
      where
      $$\Omega\psi_{1}\in \{\Omega\chih, \Omega\chibh, \Omega\tr\chi, \widetilde{\Omega\tr\chib}\}$$
      $$\Omega\psi_{2}\in \{\Omega\chih, \Omega\chibh, \Omega\tr\chi, \Omega\tr\chib, \Omega\etab\}$$
    $$\Omega^2\Psi\in\{\Omega^2\beta-L\phi\Omega\nablas\phi, \Omega^2\widetilde{\Ks}, \Omega^2\sigmac\}.$$
    The top-order derivatives of the connection coefficients can be viewed as a curvature component and they are estimated in $L^2(C_u)$ or $L^2(\Cb_{\ub})$ using $\mathcal{O}$ in \eqref{bootstrap}.
     \item (6) The following cases include the matter field. This case may involve the top-order derivatives of the matter field.
     $$\nablas^i(\Omega\nablas\Phi\cdot\Omega\nablas\phi)\cdot\nablas^i(\Omega^2\Psi)$$
     where
     $$\Phi\in\{\Omega\nablas\phi, L\phi\}$$
     $$\Omega^2\Psi\in\{\Omega^2\beta-L\phi\Omega\nablas\phi, \Omega^2\sigmac\}.$$
     We put $\Omega\nablas\phi$ in $H^j(S_{\ub,u})$, top order derivatives of scalar field in $L^2(C_u)$. 
     \item (7) There are four factors involved, among which there are two scalar field components, one connection coefficient and one curvature component.
      $$\nablas^i(\Omega\psi\cdot\Phi_1\cdot\Phi_2)\cdot\nablas^i(\Omega^2\beta-L\phi\Omega\nablas\phi)$$
       where $$\Omega\psi\in\{\Omega\chih, \Omega\tr\chi, \Omega\eta\}$$
       $$\Phi_{1,2}\in\{\Lb\phi, \Omega\nablas\phi\}.$$
       Lower order terms can be placed in $H^j(S_{\ub,u})$. 
    \end{itemize}

    Combining all estimates above, when $\varepsilon$ is sufficiently small, we will have
    \begin{equation}\label{curvatureestimate1}
    \|\beta\|_{\mathcal{A}}\lesssim 1.
    \end{equation}

    For $(\Ks,\sigmac)$-$\betab$ pair, we use the equations
    {\small
    \begin{align*}
        &\Db(\Ks-\frac{1}{|u|^2})+\frac{3}{2}\Omega\tr\chib (\Ks-\frac{1}{|u|^2})-\frac{1}{2}\Omega\tr\chib\mu-\divs(\Omega\betab+\Lb\phi\nablas\phi)-(\Omega\betab+\Lb\phi\nablas\phi)\cdot\eta+(\Omega\tr\chib+\frac{2}{|u|})\frac{1}{|u|^2}\\
        &-\Omega\chibh\cdot\nablas\eta-\Omega\chibh\cdot\eta\cdot\eta+\frac{1}{2}\Omega\tr\chib|\eta|^2=0\\
    &\Db\sigmac+\frac{3}{2}\Omega\tr\chib\sigmac+\curls(\Omega\betab+\Lb\phi\nablas\phi)+\eta\wedge(\Omega\betab+\Lb\phi\nablas\phi)-\frac{1}{2}\Omega\chibh\wedge(\eta\tensor\eta+\nablas\tensor\eta)=2\nablas \Lb\phi\wedge\nablas\phi\\
    &D(\Omega\betab+\Lb\phi\nablas\phi)+\frac{1}{2}\Omega\tr\chi(\Omega\betab+\Lb\phi\nablas\phi)-\Omega\chih\cdot(\Omega\betab+\Lb\phi\nablas\phi)-\Omega^2\ds \Ks-\Omega^2{}^*\ds\sigmac-3\Omega^2(\etab \Ks+{}^*\etab\sigmac)\\
    &-2\Omega\chibh\cdot(\Omega\beta-L\phi\nablas\phi)+\frac{1}{2}\Omega^2(\ds(\chih,\chibh)-{}^*\ds(\chih\wedge\chibh))+\frac{3}{2}\Omega^2(\etab(\chih,\chibh)+{}^*\etab(\chih\wedge\chibh))-\frac{1}{4}\Omega^2\ds(\tr\chi\tr\chib)\\
    &-\frac{3}{4}\Omega^2\tr\chi\tr\chib\etab=2\Omega^2\Deltas\phi\nablas\phi-\Omega^2\ds|\ds\phi|^2+2\Omega\chibh\cdot\nablas\phi L\phi-\Omega\tr\chib L\phi\nablas\phi+2\Omega^2\etab\cdot\nablas\phi\nablas\phi-2\Omega^2\etab|\ds\phi|^2.
    \end{align*}}
    Then we will reorganize the above equation in the following form
    {\small
    \begin{align*}
    &\Db(\Omega^2(\Ks-\frac{1}{|u|^2}))+\frac{3}{2}\Omega\tr\chib\Omega^2(\Ks-\frac{1}{|u|^2})-2\omegab\Omega^2(\Ks-\frac{1}{|u|^2})-\Omega\divs(\Omega^2\betab+\Lb\phi\Omega\nablas\phi)\\
    &+(\Omega^2\betab+\Lb\phi\Omega\nablas\phi)\cdot(-\frac{1}{2}\Omega\eta+\frac{1}{2}\Omega\etab)+\Omega^2(\Omega\tr\chib+\frac{2}{|u|})\frac{1}{|u|^2}-\frac{1}{2}\Omega\tr\chib\Omega^2\mu-\Omega\chibh\cdot\Omega\nablas(\Omega\eta)-\frac{1}{2}\Omega\chibh\cdot(\Omega\eta-\Omega\etab)\cdot\Omega\eta\\
    &+\frac{1}{2}\Omega\tr\chib|\Omega\eta|^2=0\\
    &\Db(\Omega^2\sigmac)+\frac{3}{2}\Omega\tr\chib\Omega^2\sigmac-2\omegab\Omega^2\sigmac+\Omega\curls(\Omega^2\betab+\Lb\phi\Omega\nablas\phi)+(\frac{1}{2}\Omega\eta-\frac{1}{2}\Omega\etab)\wedge(\Omega^2\betab+\Lb\phi\Omega\nablas\phi)\\
    &-\frac{1}{2}\Omega\chibh\wedge\left((\frac{1}{2}\Omega\eta-\frac{1}{2}\Omega\etab)\tensor\Omega\eta+\Omega\nablas\tensor(\Omega\eta)\right)=2\Omega\nablas \Lb\phi\wedge\Omega\nablas\phi\\
    &D(\Omega^2\betab+\Lb\phi\Omega\nablas\phi)+\frac{1}{2}\Omega\tr\chi(\Omega^2\betab+\Lb\phi\Omega\nablas\phi)-\omega(\Omega^2\betab+\Lb\phi\Omega\nablas\phi)-\Omega\chih\cdot(\Omega^2\betab+\Lb\phi\Omega\nablas\phi)-\Omega\ds(\Omega^2\Ks)\\
    &+\Omega(\eta-2\etab)\Omega^2\Ks-\Omega{}^*\ds(\Omega^2\sigmac)+{}^*\Omega(\eta-2\etab)\Omega^2\sigmac-2\Omega\chibh\cdot(\Omega^2\beta-L\phi\Omega\nablas\phi)+\frac{1}{2}\Omega\ds(\Omega\chih,\Omega\chibh)\\
    &+\Omega(-\frac{1}{2}\eta+\etab)(\Omega\chih,\Omega\chibh)-\frac{1}{2}\Omega{}^*\ds(\Omega\chih\wedge\Omega\chibh)+\Omega(\frac{1}{2}\eta+2\etab)(\Omega\chih\wedge\Omega\chibh)-\frac{1}{4}\Omega\ds(\Omega\tr\chi\Omega\tr\chib)+\Omega(\frac{1}{4}\eta-\frac{1}{2}\etab)\Omega\tr\chi\Omega\tr\chib\\
    &=2\Omega^2\Deltas\phi\Omega\nablas\phi-\Omega^3\ds|\ds\phi|^2+2\Omega\chibh\cdot\Omega\nablas\phi L\phi-\Omega\tr\chib L\phi\Omega\nablas\phi+2\Omega\etab\cdot\Omega\nablas\phi\Omega\nablas\phi-2\Omega\etab|\Omega\ds\phi|^2.
    \end{align*}}

    Note that $\Omega^2(\Omega\tr\chib+\frac{2}{|u|})\frac{1}{|u|^2}$ and $\Omega\tr\chib\Omega^2\mu$ are two bad terms in the equation of $\Db(\Omega^2\widetilde{\Ks})$, that is, they will cause a logarithmic loss in the above integral. The cause of this problem and the solution are similar to \cite{An-Luk, Li-Liu1}. So we first consider the estimates of the derivatives of $\Omega^2\Ks, \Omega\sigmac, \Omega^2\betab+\Lb\phi\Omega\nablas\phi$, and the zero-order estimates will be obtained by directly integrating the null Bianchi equations, as in deriving \eqref{Gausscurvaturebound}. The derivative of $\widetilde{\Omega\tr\chib}$ and $\mu$ will have better estimates. Here the estimates of the derivatives of $\Omega\tr\chib$ in \eqref{bootstrap} is insufficient. We need to derive a better estimate through the equation $D\nablas(\Omega\tr\chib)$. 

    We compute for $1\le i\le N$, 
    {\small
		\begin{align*}
			&\Db\left(\delta^{-3-2\gamma}|u|^{4+2\gamma-2p(2+\gamma)+2i}\left(|\nablas^i(\Omega^2(\Ks-|u|^{-2}))|^2+|\nablas^i(\Omega^2\sigmac)|^2\right)\D\mu_{\gs}\right)\\
            &+D\left(\ub^{-3-2\gamma}|u|^{4+2\gamma-2p(2+\gamma)+2i}|\nablas^i(\Omega^2\betab+\Lb\phi\Omega\nablas\phi)|^2\D\mu_{\gs}\right)\\
			=&\delta^{-3-2\gamma}|u|^{4+2\gamma-2p(2+\gamma)+2i}\divs((\nablas^i(\Omega^2(\Ks-|u|^{-2})),\nablas^i(\Omega^2\sigmac
            ))\cdot\nablas^i(\Omega^2\betab+\Lb\phi\Omega\nablas\phi))\D\mu_{\gs}\\
			&\underline{-((2\gamma-2p(2+\gamma)|u|^{-1}-4\omegab|_{\Cb_0}))\left(\delta^{-3-2\gamma}|u|^{4+2\gamma-2p(2+\gamma)+2i}\left(|\nablas^i(\Omega^2(\Ks-|u|^{-2}))|^2+|\nablas^i(\Omega^2\sigmac)|^2\right)\D\mu_{\gs}\right)}\\
			&+\delta^{-3-2\gamma}|u|^{4+2\gamma-2p(2+\gamma)+2i}\tau_{2}\D\mu_{\gs}.
		\end{align*}}
   The coefficient of the underlined term is also nonpositive.  We can obtain the desired estimates through a similar method as above. Here we only carefully discuss the estimates of the derivatives of two bad terms. We only need to consider  the case when all $\nablas^i$  apply to $\Omega\tr\chib$ or $\mu$.
    \begin{itemize}
        \item (1) $$\int_{\mathcal{M}}\delta^{-3-2\gamma}|u|^{4+2\gamma-2p(2+\gamma)+2i}\Omega^2\nablas^i(\widetilde{\Omega\tr\chib})\frac{1}{|u|^2}\cdot\nablas^i(\Omega^2\widetilde{\Ks})\D\mu_{\gs},\quad i\geq1$$
    From the following equation, 
    $$D(\Omega\tr\chib)=2\Omega^2\divs\etab+2|\Omega\etab|^2-\Omega\tr\chi\Omega\tr\chib-2\Omega^2\Ks+2|\Omega\nablas\phi|^2$$
    by taking derivative and applying \eqref{Gronwallub}, and in particular we use $\mathcal{O}$ in \eqref{bootstrap}, we can obtain
    $$|u|\|\nablas(\Omega\tr\chib)\|_{H^{N-1}(S_{\ub,u})}\lesssim\epsilon^{-\kappa}(\delta|u|^{-1+p})^{\frac{3}{2}+\gamma}a^{\frac{3}{2}},$$    
    so we have
    \begin{align*}
    &\int_{\mathcal{M}}\delta^{-3-2\gamma}|u|^{4+2\gamma-2p(2+\gamma)+2i}\Omega^2\nablas^i(\widetilde{\Omega\tr\chib})\frac{1}{|u|^2}\cdot\nablas^i(\Omega^2\widetilde{\Ks})\D\mu_{\gs}\\
    \lesssim&\int_{\mathcal{M}}|u|^{-\frac{3}{2}+\frac{1}{2}p}\cdot(\delta|u|^{-1+p})^{-\frac{3}{2}-\gamma}|u|^i\nablas^i(\Omega\tr\chib)\cdot\delta^{-\frac{3}{2}-\gamma}|u|^{2+\gamma-p(2+\gamma)+i}\nablas^i(\Omega^2\Ks)\\
    \lesssim&(\delta|u|^{-1+p})^{\frac{1}{2}}\varepsilon^{-2\kappa}a^3\lesssim a^3
    \end{align*}
    \item (2)
    $$\int_{\mathcal{M}}\delta^{-3-2\gamma}|u|^{4+2\gamma-2p(2+\gamma)+2i}\Omega\tr\chib\nablas^i(\Omega^2\mu)\cdot\nablas^i(\Omega^2\widetilde{\Ks})\D\mu_{\gs},\quad i\geq1$$
    $\mu$ satisfies the following equation:
    $$D\mu+\Omega\tr\chi\mu=-\Omega\tr\chi\frac{1}{|u|^2}+\divs(2\Omega\chih\cdot\eta-\Omega\tr\chi\etab)+2\nablas L\phi\cdot\ds\phi+2L\phi\Deltas\phi$$
    so we have
    $$|u|\|\nablas(\Omega^2\mu)\|_{H^{N-1}(S_{\ub,u})}\lesssim\epsilon^{-\kappa}(\ub|u|^{-1+p})^{\frac{3}{2}+\gamma}|u|^{-1+p}a^{\frac{3}{2}}$$
    and
    \begin{align*}
    &\int_{\mathcal{M}}\delta^{-3-2\gamma}|u|^{4+2\gamma-2p(2+\gamma)+2i}\Omega\tr\chib\nablas^i(\Omega^2\mu)\cdot\nablas^i(\Omega^2\widetilde{\Ks})\D\mu_{\gs}\\
    \lesssim&\int_{\mathcal{M}}|u|^{-\frac{3}{2}+\frac{1}{2}p}\cdot|u|\Omega\tr\chib\cdot(\delta|u|^{-1+p})^{-\frac{3}{2}-\gamma}|u|^{1-p+i}\nablas^i(\Omega^2\mu)\cdot\delta^{-\frac{3}{2}-\gamma}|u|^{2+\gamma-p(2+\gamma)+i}\nablas^i(\Omega^2\Ks)\\
    \lesssim&(\delta|u|^{-1+p})^{\frac{1}{2}}a^3\varepsilon^{-2\kappa}\lesssim a^3.
    \end{align*}
    \item  (3) We may also look at the term in $\tau_2$
    $$\nablas^i(\Omega(\eta,\etab))\Omega\tr\chi\Omega\tr\chib\cdot\nablas^i(\Omega\betab+\Lb\phi\Omega\nablas\phi)$$
    which comes from the equation $D( (\Omega\betab+\Lb\phi\Omega\nablas\phi))$. If we only look at the power of $\delta$ and $a$, this term looks like $\delta\cdot(\delta a\cdot a\cdot 1\cdot \delta^{\frac{3}{2}}a^{\frac{3}{2}})$ where the first $\delta$ represents the region of integration. By absorbing $\delta^{\frac{1}{2}}$, we will have $(\delta a)^3$, which is the expected bound of $\Omega\betab+\Lb\phi\Omega\nablas\phi$. So the power of $a$ in the definition of $\|\betab\|_{\mathcal{A}}$ is $\frac{3}{2}$ instead of $1$ as in most definitions of norms of different quantities. 
    \end{itemize}
   
    Combining the previous Bianchi pair's estimate, we then have
	$$\|\Ks,\sigma\|_{\mathcal{A}}+\|\betab\|_{\mathcal{A}}\lesssim 1,$$
	except 
    where the estimates of $\sigmac$ and $\betab$ themselves, we directly integrate the Bianchi equations to obtain what we want.

	\end{proof}

   \section{The estimates for top-order derivatives of scalar field}\label{sec:toporderscalar}
   \begin{proposition}
    Under the assumptions of \eqref{bootstrap},  if $\varepsilon$ is small enough, for $\delta|u|^{-1+p}\le\varepsilon$, we have
    $$\mathcal{E}\lesssim 1$$
   \end{proposition}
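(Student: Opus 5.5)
We estimate the three top-order scalar quantities $\Omega\nablas L\phi$, $\Omega^2\nablas\nablas\phi$ and $\Omega\nablas\Lb\phi$ by an energy estimate for the commuted wave equation, organized as a "Bianchi pair" in the same way as the curvature estimates in Proposition \ref{curvature}.

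\emph{Commuted system.} Applying $\nablas$ to the wave equations in double null form and multiplying by the appropriate powers of $\Omega$ gives
\begin{align*}
\Db(\Omega\nablas L\phi)+\tfrac12\Omega\tr\chib\,\Omega\nablas L\phi-\omegab\,\Omega\nablas L\phi&=\Omega\,\nablas(\Omega^2\Deltas\phi)+\cdots,\\
D(\Omega^2\nablas\nablas\phi)-2\omega\,\Omega^2\nablas\nablas\phi&=\Omega\nablas(\Omega\nablas L\phi)+\cdots .
\end{align*}
The dots denote commutator terms $[\Db,\nablas]$ and $[D,\nablas]$, which are schematically $\Omega\chib\cdot\nablas$ and $\Omega\chi\cdot\nablas$, together with lower-order products such as $\Omega\nablas(\Omega\eta\cdot\Omega\nablas\phi)$ and $\nablas(\Omega\tr\chi\,\widetilde{\Lb\phi})$. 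The principal terms have the divergence/gradient structure of a hyperbolic pair: $\Omega\nablas L\phi$ plays the role of $\beta$ and $\Omega^2\nablas\nablas\phi$ plays the role of $(\Ks,\sigmac)$.

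\emph{Energy identity.} For $1\le i\le N$ we form the analogue of the $\beta$–$(\Ks,\sigmac)$ identity. We compute $\Db$ of
\[
\delta^{-1-2\gamma}|u|^{2+2\gamma-p(3+2\gamma)+2i}|\nablas^i(\Omega\nablas L\phi)|^2\,\D\mu_{\gs}
\]
plus $D$ of the same weight times $|\nablas^i(\Omega^2\nablas\nablas\phi)|^2$. After integrating by parts on the spheres, the principal terms cancel up to a divergence. This yields control of $\nablas^i(\Omega\nablas L\phi)$ on $C_u$ and of $\nablas^i(\Omega^2\nablas\nablas\phi)$ on $\Cb_{\ub}$.

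The weight produces an underlined term whose coefficient is $-\bigl((2\gamma-p(3+2\gamma))|u|^{-1}-2\omegab|_{\Cb_0}\bigr)$. Here we do \emph{not} drop it. By \eqref{omegablower},
\[
-2|u|\omegab\ge k^2-\epsilon,
\]
so the coefficient is bounded above by $-(k^2-\epsilon-p)|u|^{-1}$, with $k^2-\epsilon-p>0$. This coercive bulk term is exactly the spacetime integral $\int_{\mathcal{M}}|u'|^{1+2\gamma-p(3+2\gamma)+2i}|\nablas^i(\Omega\nablas L\phi)|^2$ in $\|L\phi\|_{\mathcal{F}}$, gained with constant $(k^2-\epsilon-p)^{-1}$. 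This explains the dependence on $\frac{1}{k^2-\epsilon-p}$ announced after Theorem \ref{main3}.

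\emph{Main obstacle.} The coercive bulk term is needed to absorb the borderline error terms, which carry no smallness factor $\delta|u|^{p-1}a$. These are terms like
\[
\tfrac12\Omega\tr\chib\,\nablas^i(\Omega\nablas L\phi)\cdot(\text{deviation})
\]
and $\Omega\tr\chi\,\nablas\Lb\phi$-type products paired against $\nablas^i(\Omega\nablas L\phi)$. Such an error is bounded by $\eta^{-1}$ times a $\Cb$-flux term plus $\eta$ times the bulk term. The bulk part is absorbed by the coercive term; the remainder is handled by Gronwall in $u$ or by smallness. The same bulk integral also controls the $\nablas L\phi$ factors appearing in $\tau_1$ for the curvature estimates.

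\emph{Remaining quantities.} The $C_u$ part of $\|\Omega\nablas\phi\|_{\mathcal{F}}$ follows by integrating the $D$-equation in $\ub$ via \eqref{Gronwallub}, using the $C_u$ flux of $\Omega\nablas L\phi$ and Cauchy–Schwarz in $\ub$, which gains $\delta^{1/2}$. For $\Omega\nablas\Lb\phi$, we commute the $D\Lb\phi$ equation with $\nablas$. Its right side contains $\Omega^2\nablas\Deltas\phi$, controlled by the $\Cb_{\ub}$ flux just obtained, and $\Omega\tr\chib\,\nablas L\phi$, controlled by the bulk term. We then integrate in $\ub$ and take $L^2(\Cb_{\ub})$ with weight $|u'|^{3+2\gamma-p(3+2\gamma)+2i}$.

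All cubic and higher error terms are estimated as in Cases (1)–(7) of Proposition \ref{curvature}: lower-order factors in $H^j(S)$ via \eqref{Holder}, top-order factors in the flux norms, with $\mathcal{O}$ used for top-order connection coefficients. Each such term gains a positive power of $\delta|u_1|^{p-1}a\le\varepsilon$, which beats $\varepsilon^{-3\kappa}$. The initial data terms are bounded by \eqref{initialbound} and \eqref{initialboundCb0}. This closes $\mathcal{E}\lesssim1$.
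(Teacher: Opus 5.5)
\textbf{There is a genuine gap, and it is in your ``Remaining quantities'' step.} Your treatment of the first pair $(\Omega\nablas L\phi,\Omega^2\nablas\nablas\phi)$ matches the paper: same weight $\delta^{-1-2\gamma}|u|^{2+2\gamma-p(3+2\gamma)+2i}$, and the damping term is kept, with coefficient at least $k^2-\epsilon-p>0$ by \eqref{omegablower}, which yields \eqref{Lphispacetime}.

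The problem is that both transport estimates you use afterwards lose one angular derivative at top order $i=N$.
\begin{itemize}
\item For the $C_u$ part of $\|\Omega\nablas\phi\|_{\mathcal{F}}$, you integrate $D(\Omega^2\nablas\nablas\phi)=\Omega\nablas(\Omega\nablas L\phi)+\cdots$ in $\ub$. After applying $\nablas^N$, the right side contains $\nablas^{N+1}(\Omega\nablas L\phi)$. The norm $\|L\phi\|_{\mathcal{F}}$ only controls $i\le N$.
\item For $\Omega\nablas\Lb\phi$, you integrate $D(\Omega\nablas\Lb\phi)=\Omega\nablas(\Omega^2\Deltas\phi)+\cdots$ in $\ub$. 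At $i=N$ this requires $\nablas^{N+1}(\Omega^2\nablas\nablas\phi)$ on $\Cb_{\ub}$. That is again one derivative beyond the flux supplied by the first pair.
\end{itemize}
The principal right-hand side of each $D$-equation is an angular derivative of the partner quantity. This loss can only be avoided in an energy identity where the two principal terms cancel up to a sphere divergence.

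The paper therefore runs a second hyperbolic pair $(\Omega^2\nablas\nablas\phi,\Omega\nablas\Lb\phi)$. It takes $\Db$ of $\delta^{-2-2\gamma}|u|^{3+2\gamma-p(3+2\gamma)+2i}|\nablas^i(\Omega^2\nablas\nablas\phi)|^2\D\mu_{\gs}$, plus $D$ of the same weight times $|\nablas^i(\Omega\nablas\Lb\phi)|^2$. This uses the $\Db$-equation for $\Omega^2\nablas\nablas\phi$, whose principal term is $\Omega\nablas(\Omega\nablas\Lb\phi)$; your proposal never uses this equation. The identity gives the $C_u$ flux of $\Omega^2\nablas\nablas\phi$ and the $\Cb_{\ub}$ flux of $\Omega\nablas\Lb\phi$ simultaneously, without loss. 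Its damping coefficient is at least $1+p>0$.

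\textbf{Your ``main obstacle'' is also misplaced.} The error terms of the first pair are not borderline. By Remark \ref{dependenceonk2-epsilon-p}, $\tau_3$ is bounded by $a^2(\delta|u|^{p-1}a)^{1/2}$. For example, $\Omega\tr\chi\,\nablas^i(\Omega\nablas\Lb\phi)$ paired against $\nablas^i(\Omega\nablas L\phi)$ gains a factor $\delta|u_1|^{p-1}a$ after Cauchy--Schwarz.

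The term that genuinely lacks smallness is $\Omega\tr\chib\,\nablas^i(\Omega\nablas L\phi)\cdot\nablas^i(\Omega\nablas\Lb\phi)$ in the second pair. It is bounded by Cauchy--Schwarz between \eqref{Lphispacetime} and the $\Cb_{\ub}$ flux of $\Omega\nablas\Lb\phi$. This gives $a^2\|\Lb\phi\|_{\mathcal{F}}$, which is sublinear and is absorbed into the left-hand side. You correctly saw that the bulk term controls $\Omega\tr\chib\,\nablas L\phi$, but it has to be used inside this second energy identity rather than in a derivative-losing transport estimate.
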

   \begin{proof}
    We rewrite the wave equation in the following form:
    \begin{align*}
    &\Db(\Omega\nablas L\phi)+\frac{1}{2}\Omega\tr\chib\Omega\nablas L\phi-\omegab\Omega\nablas L\phi\\=&\Omega\nablas(\Omega^2\Deltas\phi)-\frac{1}{2}\Omega\nablas(\Omega\tr\chib)L\phi+2\Omega\nablas\left((\Omega\eta,\Omega\nablas\phi)\right)-\frac{1}{2}\Omega\nablas(\Omega\tr\chi\Lb\phi)\\
    &D(\Omega^2\nablas\nablas\phi)-2\omega\Omega^2\nablas\nablas\phi=\Omega\nablas(\Omega\nablas L\phi)-\frac{1}{2}\Omega(\eta+\etab)\Omega\nablas L\phi+\Omega\nablas(\Omega\chi)\Omega\nablas\phi\\
    &\Db(\Omega^2\nablas\nablas\phi)-2\omegab\Omega^2\nablas\nablas\phi=\Omega\nablas(\Omega\nablas\Lb\phi)-\frac{1}{2}\Omega(\eta+\etab)\Omega\nablas\Lb\phi+\Omega\nablas(\Omega\chib)\Omega\nablas\phi\\
    &D(\Omega\nablas\Lb\phi)+\frac{1}{2}\Omega\tr\chi\Omega\nablas\Lb\phi-\omega\Omega\nablas\Lb\phi\\=&\Omega\nablas(\Omega^2\Deltas\phi)-\frac{1}{2}\Omega\nablas(\Omega\tr\chi)\Lb\phi+2\Omega\nablas((\Omega\etab,\Omega\nablas\phi))
    -\frac{1}{2}\Omega\nablas(\Omega\tr\chib L\phi)
    \end{align*}

    For the pair $\Omega\nablas L\phi$--$\Omega^2\nablas\nablas\phi$, we can compute
     \begin{align*}
    &\Db\left(\delta^{-1-2\gamma}|u|^{2+2\gamma-p(3+2\gamma)+2i}|\nablas^i(\Omega\nablas L\phi)|^2\D\mu_{\gs}\right)\\
    &+D\left(\delta^{-1-2\gamma}|u|^{2+2\gamma-p(3+2\gamma)+2i}|\nablas^i(\Omega^2\nablas\nablas\phi)|^2\D\mu_{\gs}\right)\\
    =&\delta^{-1-2\gamma}|u|^{2+2\gamma-p(3+2\gamma)+2i}\divs(\nablas^i(\Omega\nablas L\phi)\cdot\nablas^i(\Omega^2\nablas\nablas\phi))\D\mu_{\gs}\\
    &-\left((2\gamma-p(3+2\gamma))-2|u|\omegab|_{\Cb_{0}}\right)|u|^{-1}\left(\delta^{-1-2\gamma}|u|^{2+2\gamma-p(3+2\gamma)+2i}|\nablas^i(\Omega\nablas L\phi)|^2\D\mu_{\gs}\right)\\
    &+\delta^{-1-2\gamma}|u|^{2+2\gamma-p(3+2\gamma)+2i}\tau_{3}\D\mu_{\gs}.
     \end{align*}
      In view of \eqref{omegablower}, the coefficient 
     $$(2\gamma-p(3+2\gamma))-2|u|\omegab\big|_{\Cb_{0}}\ge k^2-\epsilon-p>0,$$
     which is positive. Integrating over $(\mathcal{M},\D\ub\D u\D\mu_{\gs})$, we have
     \begin{align*}
    &\delta^{-1-2\gamma}\int_{C_u}|u|^{2+2\gamma-p(3+2\gamma)+2i}|\nablas^i(\Omega\nablas L\phi)|^2+\delta^{-1-2\gamma}\int_{\Cb_{\ub}}|u|^{2+2\gamma-p(3+2\gamma)+2i}|\nablas^i(\Omega^2\nablas\nablas\phi)|^2\\
    &+\delta^{-1-2\gamma}\int_{\mathcal{M}}|u|^{1+2\gamma-p(3+2\gamma)+2i}|\nablas^i(\Omega\nablas L\phi)|^2\\
    \lesssim&a^2+\delta^{-1-2\gamma}\int_{\mathcal{M}}|u|^{2+2\gamma-p(3+2\gamma)+2i}|\tau_{3}|.
     \end{align*}

   By conducting a similar analysis on the terms composed of $\tau_3$, just as was done for the previous curvature estimates, we have
   $$\delta^{-1-2\gamma}\int_{\mathcal{M}}|u|^{2+2\gamma-p(3+2\gamma)+2i}|\tau_{3}|\lesssim a^2.$$
   So we can obtain 
   $$\|L\phi\|_{\mathcal{F}}\lesssim 1.$$
 and in particular
   \begin{equation}\label{Lphispacetime}\delta^{-1-2\gamma}\int_{\mathcal{M}}|u'|^{1+2\gamma-p(3+2\gamma)+2i}|\nablas^i(\Omega\nablas L\phi)|^2\lesssim a^2\end{equation}
which will be used later.

\begin{remark}\label{dependenceonk2-epsilon-p}
This is where the dependence on $k^2-\epsilon-p$ arises. So the estimates blows up when $k^2-\epsilon-p\to0$. The estimate of $\tau_3$ is in fact better
   $$\delta^{-1-2\gamma}\int_{[0,\delta]\times[u_0,u]\times S^2}|u|^{2+2\gamma-p(3+2\gamma)+2i}|\tau_{3}|\lesssim a^2\cdot(\delta|u|^{p-1}a)^{\frac{1}{2}}.$$
   For convenience, if we assume $\nablas L\phi=0$ for $\ub\in[0,\delta]$ on $C_{u_0}$, which is allowed in our Theorem \ref{main1}, then we have
  $$ \delta^{-1-2\gamma}\int_{C_u}|u|^{2+2\gamma-p(3+2\gamma)+2i}|\nablas^i(\Omega\nablas L\phi)|^2\lesssim a^2\cdot (\delta|u|^{p-1}a)^{\frac{1}{2}} $$
  and hence the spacetime integral \eqref{Lphispacetime} still holds true without invoking the lower bound from the damping term. This removes the dependence on $k^2-\epsilon-p$.
\end{remark}
   
    For the other pair $\Omega^2\nablas\nablas\phi$--$\Omega\nablas\Lb\phi$, we can compute
    \begin{align*}
    &\Db\left(\delta^{-2-2\gamma}|u|^{3+2\gamma-p(3+2\gamma)+2i}|\nablas^i(\Omega^2\nablas\nablas\phi)|^2\D\mu_{\gs}\right)\\&+D\left(\delta^{-2-2\gamma}|u|^{3+2\gamma-p(3+2\gamma)+2i}|\nablas^i(\Omega\nablas\Lb\phi)|^2\D\mu_{\gs}\right)\\
    =&\delta^{-2-2\gamma}|u|^{3+2\gamma-p(3+2\gamma)+2i}\divs(\nablas^i(\Omega^2\nablas\nablas\phi)\cdot\nablas^i(\Omega\nablas\Lb\phi))\D\mu_{\gs}\\
    &-\left((1+2\gamma-p(3+2\gamma))-4|u|\omegab|_{\Cb_{0}}\right)|u|^{-1}\left(\delta^{-2-2\gamma}|u|^{3+2\gamma-p(3+2\gamma)+2i}|\nablas^i(\Omega^2\nablas\nablas\phi)|^2\D\mu_{\gs}\right)\\
    &+\delta^{-2-2\gamma}|u|^{3+2\gamma-p(3+2\gamma)+2i}\tau_{4}\D\mu_{\gs}
     \end{align*}
    and
    \begin{align*}
    &\delta^{-2-2\gamma}\int_{C_u}|u|^{3+2\gamma-p(3+2\gamma)+2i}|\nablas^i(\Omega^2\nablas\nablas\phi)|^2+\delta^{-2-2\gamma}\int_{\Cb_{\ub}}|u|^{3+2\gamma-p(3+2\gamma)+2i}|\nablas^i(\Omega\nablas\Lb\phi)|^2\\
    &+\left((1+2\gamma-p(3+2\gamma))-4\omegab|_{\Cb_{0}}|u|\right)\delta^{-2-2\gamma}\int_{\mathcal{M}}|u|^{2+2\gamma-p(3+2\gamma)+2i}|\nablas^i(\Omega^2\nablas\nablas\phi)|^2\\
    \lesssim&a^2+\delta^{-2-2\gamma}\int_{\mathcal{M}}|u|^{3+2\gamma-p(3+2\gamma)+2i}|\tau_{4}|
     \end{align*}
     where 
     $$\left((1+2\gamma-p(3+2\gamma))-4\omegab|_{\Cb_{0}}|u|\right)\ge 1-p+2(k^2-\epsilon)>1+p>0.$$
     Among the spacetime integrant $\delta^{-2-2\gamma}|u|^{3+2\gamma-2p(1+\gamma)+2i}|\tau_{4}|$, the following term is the one that  requires the most attention:
     \begin{align*}
    & \int_{\mathcal{M}}\delta^{-2-2\gamma}|u|^{3+2\gamma-p(3+2\gamma)+2i}|\underbrace{\Omega\tr\chib}_{\lesssim\frac{1}{|u|}}\cdot\underbrace{\nablas^i(\Omega\nablas L\phi)}_{\eqref{Lphispacetime}}\cdot \underbrace{\nablas^i(\Omega\Lb\phi)}_{\|\Lb\phi\|_{\mathcal{F}}}|\\
    \lesssim&\left(\delta^{-1-2\gamma}\int_{\mathcal{M}}|u'|^{1+2\gamma-p(3+2\gamma)+2i}|\nablas^i(\Omega\nablas L\phi)|^2\right)^{\frac{1}{2}}\\
    &\times\left(\int_0^\delta\delta^{-1}\D\ub\int_{\Cb_{\ub}}\delta^{-2-2\gamma}|u'|^{3+2\gamma-p(3+2\gamma)+2i}|\nablas^i(\Omega\nablas\Lb\phi)|^2\right)^{\frac{1}{2}}\\
    \lesssim& a^2\|\Lb\phi\|_{\mathcal{F}}.
     \end{align*}
   where we have used  the crucial spacetime estimate \eqref{Lphispacetime} for $\Omega\nablas L\phi$. The above term is sublinear and can be absorbed by the left hand side. We omit the remaining terms that won't cause problems, and thus we can obtain the estimates we want, that is:
     $$\|\Omega \nablas\phi\|_{\mathcal{F}}+\|\Lb\phi\|_{\mathcal{F}}\lesssim 1.$$
     So we are done.
     
    \end{proof}

\section{The estimates for top-order derivatives of connections}\label{sec:toporderconnection}

	\begin{proposition}Under the assumptions \eqref{bootstrap},  if $\varepsilon$ is sufficiently small, we have
		\begin{equation}\label{connection-top-bound}\mathcal{O}\lesssim 1. \end{equation}
	\end{proposition}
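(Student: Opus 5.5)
The plan follows the standard scheme of \cite{Chr08, An-Luk, Li-Liu1}: estimate the top order ($N+1$) angular derivatives of the connection coefficients by combining transport equations for suitably renormalized quantities with the elliptic estimates \eqref{elliptic}. Lower order bounds come from Proposition \ref{connection-low}, curvature from Proposition \ref{curvature}, and top order scalar field from the preceding section. Throughout, $\nablas$ carries a factor $\Omega$, and all quantities are written as $\Omega\psi$, $\Omega^2\Psi$, so the signature bookkeeping again shows that every error term carries a positive power of $\delta|u_1|^{p-1}a\le\varepsilon$. Whenever $\Omega$ is moved through $N+1$ derivatives, \eqref{derivativelapsetop} is used, and it produces exactly the loss $\Omega^{-1}|u|^{\frac12 p}$ built into the $\mathcal{C}$-norms.

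\textbf{Step 1: the traces.} Commute $D(\widetilde{\Omega\tr\chi})$ with $\nablas^{i}$, $i\le N+1$, and integrate with \eqref{Gronwallub}. The borderline terms $\nablas^{N+1}|\Omega\chih|^2$ and $\nablas^{N+1}(L\phi)^2$ are controlled in $L^2(C_u)$ by the bootstrap for $\|\chih\|_{\mathcal C}$ and by $\|L\phi\|_{\mathcal F}$. Cauchy–Schwarz in $\ub$ then yields the factor $\delta^{1/2}$ and the $a^2$ normalization. For $\Omega\tr\chib$ I will use $D(\Omega\tr\chib)$. Its source $\Omega^2\divs\etab-2\Omega^2\Ks$ at top order is handled by the elliptic equation for $\etab$ in Step 2 together with $\|\Ks\|_{\mathcal A}$. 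The spurious $|u|$-growth is absorbed by integrating in $\ub$ only, which costs the $1/p$ dependence mentioned in Remark \eqref{dependenceon1/p}.

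\textbf{Step 2: Hodge systems.} Use $\divs(\Omega\chih)=\frac12\Omega^2\nablas\tr\chi'+\dots-(\Omega\beta-L\phi\nablas\phi)$ and apply \eqref{elliptic} on each $S_{\ub,u}$. This bounds $\nablas^{N+1}(\Omega^2\chih)$ by $\nablas^{N}(\Omega^2\beta-L\phi\Omega\nablas\phi)$, the top order $\tr\chi$ from Step 1, and lower order terms. Integrating over $C_u$ and using $\|\beta\|_{\mathcal A}$ gives $\|\chih\|_{\mathcal C}$, and $\chibh$ is treated symmetrically through $\betab$ and $\tr\chib$. For $\eta,\etab$, I will use $\divs\eta=\Ks-|u|^{-2}-\mu$, $\curls\eta=\sigmac$, and the transport equation for $\mu$ written in the proof of Proposition \ref{curvature} (and analogously $\underline{\mu}$ along $\Db$). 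The $\mu$ bound plus $\|\Ks,\sigma\|_{\mathcal A}$ then gives both the $\Cb_{\ub}$ and $C_u$ parts of $\|\eta,\etab\|_{\mathcal C}$.

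\textbf{Step 3: the lapse derivatives.} For $\omega$ and $\omegab$, introduce auxiliary $\omega^\dagger$ and $\omegab^\dagger$ as in \cite{Chr08}, with $\Db\omega^\dagger=\frac12\Omega^2\rho$-type sources. This lets the Hodge system $\nablas\omega=\dots$, $\nablas^\perp\omega=\dots$ be solved elliptically: $\Deltas\omega$ is expressed through $\divs(\Omega\beta)$-type curvature terms and transported quantities.

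\textbf{Main obstacle.} The hardest part will be the $\mu$ transport. Its source $2L\phi\Deltas\phi+2\nablas L\phi\cdot\nablas\phi$ involves $\nablas^{N+1}\nablas\phi$ multiplied by $L\phi$, which is only of size $|u|^pa$ and gains no $\delta$. It must be bounded using the $\mathcal F$-norm of $\Omega^2\nablas\nablas\phi$ along $C_u$, together with the spacetime estimate \eqref{Lphispacetime}. I expect to close this by Cauchy–Schwarz in $\ub$, exactly as in the term singled out in Section \ref{sec:toporderscalar}. If the powers of $|u|$ do not match there, I would instead renormalize $\mu$ by subtracting $2L\phi\Deltas\phi$-type terms, absorbing them into a modified mass aspect.
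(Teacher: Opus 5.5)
\textbf{Verdict.} Your Steps 2 and 3 broadly follow the paper (Codazzi plus \eqref{elliptic} for $\chih,\chibh$; the $\mu$, $\mub$ transport equations plus the Hodge systems for $\eta,\etab$; an elliptic--transport system for $\omega,\omegab$). However, the top-order estimate for $\tr\chib$ in Step 1 does not work, and $\chibh$ inherits the problem.

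\textbf{The gap: top-order $\tr\chib$.} The norm $\|\tr\chib\|_{\mathcal C}$ requires $\nablas^{N+1}(\Omega^2\tr\chib)$. Commuting $N+1$ angular derivatives with $D(\Omega\tr\chib)=2\Omega^2\divs\etab-2\Omega^2\Ks+\dots$ produces $N+2$ derivatives of $\etab$ and $N+1$ derivatives of $\Ks$. But $\|\etab\|_{\mathcal C}$ controls only $N+1$ derivatives and $\|\Ks\|_{\mathcal A}$ only $N$. Substituting $\divs\etab=\Ks-|u|^{-2}-\mub$ cancels $\Ks$, but then you need $\nablas^{N+1}\mub$, and the $\Db\mub$ equation at that order again needs $N+2$ derivatives of $\chibh,\tr\chib,\eta$. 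The paper uses the $D$ equation only for $\nablas(\Omega\tr\chib)$ in $H^{N-1}$, inside the curvature estimates.

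\textbf{The missing idea.} Commute the $\Db$ Raychaudhuri equation instead; at top order its right side involves only $\nablas\omegab$, $\nablas(\Omega\chibh)$, $\nablas\Lb\phi$. Then \emph{keep} the linear term $-2\omegab\nablas(\Omega\tr\chib)$, which has a favorable sign. The paper writes the equation for $\Omega^{-2}\nablas(\Omega\tr\chib)$, applies \eqref{Gronwallu} with $s=1,\nu=2$, and bounds $\Omega\tr\chib\nablas\omegab$ and $\Lb\phi\nablas\Lb\phi$ by Cauchy--Schwarz in $u$ against $\|\omegab\|_{\mathcal C}$ and $\|\Lb\phi\|_{\mathcal F}$. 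The resulting integral $\int_{u_0}^u\Omega^{-6}|u'|^{-1-2\gamma+p(3+2\gamma)}\D u'$ is dominated by its endpoint value only because of the relative scaling control \eqref{relativecontrol}, via \eqref{eleinequality} with $f=\Omega^6$, $\alpha=3p$. That is the origin of the $1/p$ dependence in Remark \ref{dependenceon1/p}, not an integration in $\ub$.

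\textbf{Consequence for $\omegab$ and $\omega$.} The $\tr\chib$ and $\chibh$ bounds now depend on $\|\omegab\|_{\mathcal C}$, which must be closed separately. There, and likewise for $\omega$, the worst term is $\Lb\phi\Deltas L\phi$. It is controlled, together with the improved lower-order bound for $\nablas\omegab$, only through the spacetime estimate \eqref{Lphispacetime}. Your Step 3 does not address this.

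\textbf{Smaller point: order of the $\chih$ and $\tr\chi$ estimates.} In Step 1 for $\tr\chi$, plugging the bootstrap bound for $\|\chih\|_{\mathcal C}$ into the borderline term $\Omega\chih\cdot\nablas^{N+1}(\Omega\chih)$ does not close. After integration in $\ub$ its only gain is $(\delta|u|^{p-1})^{\gamma}$, which need not beat $\varepsilon^{-\kappa}$, so you would only get $\|\tr\chi\|_{\mathcal C}\lesssim\varepsilon^{-\kappa}$. The fix is the paper's order:
\begin{enumerate}
\item First do the Codazzi/elliptic estimate for $\chih$. There the top-order $\tr\chi$ enters with an extra factor $\delta|u|^{p-1}a$, so the bootstrap suffices, and you get \eqref{chihtopbound} without $\varepsilon^{-\kappa}$.
\item Then integrate the commuted equation for $\tr\chi'$. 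Using $\tr\chi'$ rather than $\widetilde{\Omega\tr\chi}$ also avoids a top-order $\omega\,\Omega\tr\chi$ term. With the bootstrap bound for $\omega$, that term is of relative size $\varepsilon^{-\kappa}a^{-1}$, hence also borderline.
\end{enumerate}

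\textbf{Smaller point: $\mu$ is not the obstacle.} The contribution of $L\phi\Deltas\phi$ to the $\mu$ transport, after $\ub$-integration, is bounded directly by $\|\Omega\nablas\phi\|_{\mathcal F}$. It already carries an extra factor $\delta|u|^{p-1}a$ relative to what $\|\eta\|_{\mathcal C}$ requires, so no renormalization of $\mu$ is needed.
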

	\begin{proof}
    We have established 
    $$\Os,\Es,\mathcal{R},\mathcal{E}\lesssim1.$$
    The only remaining bootstrap assumption is
    \begin{equation}\label{bootstrapON+1Ob}
    \mathcal{O}\le\varepsilon^{-\kappa}.
    \end{equation}
    This proof is mainly carried out in six groups. The estimates are obtained by using the elliptic-transport systems. Since we have \eqref{Gausscurvaturebound}, elliptic estimate \eqref{elliptic} holds in the following form
		\begin{equation}\label{elliptic1}
			\|\Omega\theta\|_{H^{N+1}(S_{\ub,u})}\lesssim \||u|(\Omega \divs\theta, \Omega \curls\theta)\|_{H^N(S_{\ub,u})}+|u|^{\frac{1}{2}p}\|\theta\|_{H^N(S_{\ub,u})},
		\end{equation}
	Among these six groups, what we need to pay the most attention to are the estimates of $\tr\chib$ and $\omegab$. It should also be noted that for the top order derivative of connection, there will be a loss of $\frac{\Omega}{|u|^{\frac{1}{2}p}}$ as compared to the corresponding lower order estimates.

	For $\chih, \tr\chi$, we use
	\begin{align*}
		&\begin{dcases}\divs(\Omega\chih)&=\frac{1}{2}\Omega^2\ds\tr \chi'+\Omega\chih\cdot\etab+\frac{1}{2}\Omega\tr \chi\eta-(\Omega\beta-L\phi\nablas\phi),\\
			D\tr\chi'&=-\frac{1}{2}(\Omega\tr\chi')^2-|\chih|^2-2(\widehat{L}\phi)^2.
		\end{dcases}
	\end{align*}
	From $\Os<\infty$, we have 
	$$|u|^{-1}\|\Omega\chih\cdot\Omega\etab+\frac{1}{2}\Omega\tr \chi\cdot\Omega\eta\|_{H^N(S_{\ub,u})}\lesssim(\delta|u|^{p-1})^{1+\gamma}|u|^{-2+\frac{3}{2}p}a^2.$$
	The assumption \eqref{bootstrapON+1Ob} for $\tr\chi$ gives
	$$|u|^{-1}\|\Omega\cdot\Omega^2\ds\tr \chi'\|_{H^N(S_{\ub,u})}\lesssim\varepsilon^{-\kappa}(\delta|u|^{p-1})^{1+\gamma}|u|^{-1+\frac{3}{2}p}a^2$$
	together with the $\mathcal{R}$ bound for $\beta$, when $\varepsilon$ is small enough,  we have 
	$$\delta^{-1-2\gamma}\int_0^{\delta}|u|^{2+2\gamma-p(3+2\gamma)}\|\Omega\divs(\Omega\chih)\|^2_{H^N(S_{\ub,u})}\D\ub\lesssim a^2.$$
	We will apply \eqref{elliptic1} for $\chih$, and then integrate over $\ub$ with suitable weight. We need to utilize the lower order estimates of $\chih$, and it does not contain $\varepsilon^{-\kappa}$.
	By applying \eqref{elliptic1}, we have the desired bound
	\begin{equation}\label{chihtopbound}\delta^{-1-2\gamma}\int_0^{\delta}|u|^{2\gamma-p(3+2\gamma)}\|\Omega^2\chih\|^2_{H^{N+1}(S_{\ub,u})}\D\ub\lesssim a^2.\end{equation}
	Commute $\nablas$ with the equation for $\tr\chi'$, we have
	$$D\nablas(\tr\chi')=-\Omega\tr\chi'\nablas(\Omega\tr\chi')- 2\chih\cdot\nablas\chih-4\widehat{L}\phi\nablas\widehat{L}\phi.$$
	The assumption \eqref{bootstrapON+1Ob} for $\tr\chi$ (together with its lower order estimates and \eqref{derivativelapse}) gives
	\begin{align*}
&|u|^{-1}\|\Omega\tr\chi'\nablas(\Omega\tr\chi')\|_{H^N(S_{\ub,u})}\\\lesssim&|u|^{-1}\|\Omega^{-3}\Omega(\eta+\etab)\cdot\Omega\tr\chi\cdot\Omega\tr\chi\|_{H^N(S_{\ub,u})}+|u|^{-1}\|\Omega^{-3}\Omega\tr\chi\cdot\nablas(\Omega^2\tr\chi)\|_{H^N(S_{\ub,u})}\\
	\lesssim&(\Omega|u|^{-\frac{1}{2}p})^{-3}\delta^{-1}(\delta|u|^{-1+p})^{2+\gamma}|u|^{-2}a^3\varepsilon^{-\kappa},
	\end{align*}
	By \eqref{chihtopbound}, we have
    $$\int_0^{\delta}|u|^{-1}\|\chih\cdot\nablas\chih\|_{H^N(S_{\ub,u})}\D\ub\lesssim(\Omega|u|^{-\frac{1}{2}p})^{-3}(\delta|u|^{-1+p})^{1+\gamma}|u|^{-2}a^2.$$
	Using the estimates of $\Omega\nablas L\phi$, in a similar way, we have
	\begin{align*}&\int_0^{\delta}|u|^{-1}\|\widehat{L}\phi\nablas\widehat{L}\phi\|_{H^N(S_{\ub,u})}\D\ub\lesssim(\Omega|u|^{-\frac{1}{2}p})^{-3}(\delta|u|^{-1+p})^{1+\gamma}|u|^{-2}a^2.\end{align*}
    Applying \eqref{Gronwallub} for $\nablas\tr\chi'$, for $\varepsilon$ sufficiently small, we have
    $$|u|^{-1}\|\nablas\tr\chi'\|_{H^N(S_{\ub,u})}\lesssim(\Omega|u|^{-\frac{1}{2}p})^{-3}(\delta|u|^{-1+p})^{1+\gamma}|u|^{-2}a^2,$$
    and hence
    $$|u|^{-1}\|\nablas(\Omega^2\tr\chi)\|_{H^N(S_{\ub,u})}\lesssim(\delta|u|^{-1+p})^{1+\gamma}|u|^{-2}|u|^{\frac{3}{2}p}a^2,$$
    which is the desired bound.
	
	We turn to $\chibh$ and $\tr\chib$, which are based on the equations:
	\begin{align*}
		&\begin{dcases}\divs(\Omega\chibh)&=\frac{1}{2}\Omega^2\ds\tr \chib'+\Omega\chibh\cdot\eta+\frac{1}{2}\Omega\tr \chib\etab+(\Omega\betab+\Lb\phi\nablas\phi),\\
			\Db\tr\chib'&=-\frac{1}{2}(\Omega\tr\chib')^2-|\chibh|^2-2(\widehat{\Lb}\phi)^2,
		\end{dcases}
	\end{align*}
    The estimate for top order derivatives of $\tr\chib$ is the most delicate estimate in the whole proof. Before starting the proof of this pair, we first analyze the estimates of the top order derivative of $\tr\chib$. Commute $\nablas$ with the second equation, we rewrite it as
    $$\Db\nablas(\Omega\tr\chib)+\Omega\tr\chib\nablas(\Omega\tr\chib)-2\omegab\nablas(\Omega\tr\chib)=2\Omega\tr\chib\nablas\omegab-2\Omega\chibh\cdot\nablas(\Omega\chibh)-4\Lb\phi\nablas\Lb\phi.$$
     The worst terms on the right hand side of the above equation are $\Omega\tr\chib\nablas\omegab$ and $\Lb\phi\nablas\Lb\phi$. If we have more $\nablas$, then we only need to pay attention to the terms in which all $\nablas$'s apply to $\omegab$ and one $\Lb\phi$. To handle this, we first note that the linear term $-2\omegab\nablas(\Omega\tr\chib)$ has a favorable sign, and we choose not to simply drop it, but instead exploit its positivity. We rewrite this equation further as
        $$\Db(\Omega^{-2}\nablas(\Omega\tr\chib))+\Omega\tr\chib\Omega^{-2}\nablas(\Omega\tr\chib)=2\Omega^{-2}\Omega\tr\chib\nablas\omegab-2\Omega^{-2}\Omega\chibh\cdot\nablas(\Omega\chibh)-4\Omega^{-2}\Lb\phi\nablas\Lb\phi,$$
     We apply Gronwall \eqref{Gronwallu} to $\Omega^{-2}\nablas(\Omega\tr\chib)$ with $s=1, \nu=2$. The first term on the right hand side is estimated as
    \begin{align*}
    &\int_{u_0}^u\||u'|^2\Omega^{-2}\Omega\tr\chib\nablas\omegab\|_{H^N(S_{\ub,u'})}\D u'\\
    \lesssim&\left(\int_{u_0}^{u}\Omega^{-6}\delta^{2+2\gamma}|u'|^{-1-2\gamma+p(3+2\gamma)}\D u'\right)^{\frac{1}{2}}\left(\int_{u_0}^{u}\delta^{-2-2\gamma}|u'|^{3+2\gamma-p(3+2\gamma)}\|\nablas(\Omega\omegab)\|^2_{H^N(S_{\ub,u'})}\D u'\right)^{\frac{1}{2}}\\
    \lesssim&\Omega^{-2}(\Omega|u|^{-\frac{p}{2}})^{-1}(\delta|u|^{-1+p})^{1+\gamma}|u|\|\omegab\|_{\mathcal{C}}
    \end{align*}
    where we have used the following elementary inequality for a positive function $f$
 \begin{equation}\label{eleinequality} \int_{u_0}^u \frac{1}{f(u')|u'|^{1-\sigma}}\D u'\le\frac{1}{\alpha-\sigma}\frac{1}{f(u)|u|^{-\sigma}}, u\in [u_0,u_1]\end{equation}
     provided that
    $$\frac{f(u')}{f(u'')}\le\left(\frac{|u'|}{|u''|}\right)^\alpha, u_0\le u''\le u'\le u_1<0$$
   where $\alpha>\sigma$ are two numbers.   In the above estimate we have set
   $$f(u)=\Omega^6(u), \quad \alpha=3p, \quad\sigma=-2\gamma+p(3+2\gamma)\le p.$$
    \begin{remark}\label{dependenceon1/p}
    This is where the dependence on $\frac{1}{p}$ arises.  In the work \cite{Li-Liu1}, where the case $p=\gamma=0$ is considered, there is a logarithmic loss in this estimate. No quantitative assumptions about $\Omega_0$ are imposed there.  This is the only place where the stronger condition \eqref{relativecontrol} is needed; elsewhere, the weaker point-wise bound $\Omega^2(u)\le |u|^p$ suffices.
    \end{remark}
   The third term of the right hand side is estimated similarly
    \begin{align*}\int_{u_0}^u\||u'|^2\Omega^{-2}\Lb\phi\nablas\Lb\phi\|_{H^N(S_{\ub,u'})}\D u'  \lesssim&\Omega^{-2}(\Omega|u|^{-\frac{p}{2}})^{-1}(\delta|u|^{-1+p})^{1+\gamma}|u|a
\end{align*}
   and the term $\Omega^{-2}\Omega\chibh\cdot\nablas(\Omega\chibh)$ is easier to treat.    Combining all the estimates above, and if $\varepsilon$ is small enough, we have the  bound,
   \begin{equation}\label{trchibtopbound}|u|^{-1}\|(|u|\nablas)(\Omega^2\tr\chib)\|_{H^N(S_{\ub,u})}\lesssim (\delta|u|^{-1+p})^{1+\gamma}|u|^{-1+\frac{1}{2}p}a(1+\|\omegab\|_{\mathcal{C}}).\end{equation}
Below we will show that $\|\omegab\|_{\mathcal{C}}\lesssim a$ so that this is the desired bound. Applying \eqref{elliptic1} to the equation $\divs(\Omega\chibh)$, using the estimate \eqref{trchibtopbound} above (together with the lower order bounds) we have the desired estimate
\begin{equation*}\delta^{-1-2\gamma}\int_{u_0}^u|u'|^{2\gamma-2p(1+\gamma)}\|\Omega^2\chibh\|_{H^{N+1}(S_{\ub,u'})}^2\D u'\lesssim a^2\end{equation*}
provided that $\|\omegab\|_{\mathcal{C}}\lesssim 1$.

We then turn to $\eta$. We use the equations
\begin{align*}
	&\begin{dcases}\divs\eta=\Ks-\frac{1}{|u|^2}-\mu,\\
		\curls\eta=\sigma-\frac{1}{2}\chih\wedge\chibh,\\
		D\mu+\Omega\tr\chi\mu=-\Omega\tr\chi\frac{1}{|u|^2}+\divs(2\Omega\chih\cdot\eta-\Omega\tr\chi\etab)+2\nablas L\phi\cdot\nablas\phi+2L\phi\Deltas\phi\end{dcases}
\end{align*}
By applying \eqref{Gronwallub}, the bootstrap assumptions \eqref{bootstrapON+1Ob} for $\chih$, $\tr\chi$, $\eta$ and $\etab$, and the lower order bounds, we can estimate each term on the right hand side. We obtain the following estimates when $\varepsilon$ is sufficiently small 
    $$|u|^{-1}\|\Omega^2\mu\|_{H^N(S_{\ub,u})}\lesssim\delta|u|^{p-1}|u|^{-2+p}a+(\delta|u|^{-1+p})^{\frac{3}{2}+\gamma}|u|^{-2+p}a^2\varepsilon^{-\kappa}.$$
Applying \eqref{elliptic1}, using in particular the curvature bounds, we have
\begin{equation}\label{etatopbound}\delta^{-2-2\gamma}\int_0^{\delta}|u|^{1+2\gamma-p(3+2\gamma)}\|\Omega^2\eta\|^2_{H^{N+1}(S_{\ub,u})}\D\ub\lesssim a^2\end{equation}
and 
\begin{equation*}\delta^{-1-2\gamma}\int_{u_0}^{u}|u'|^{2\gamma-p(3+2\gamma)}\|\Omega^2\eta\|^2_{H^{N+1}(S_{\ub,u'})}\D u'\lesssim a^2.\end{equation*}

For $\etab$, we use the equations
\begin{align*}
	&\begin{dcases}
		\divs\etab=\Ks-\frac{1}{|u|^2}-\mub,\\
		\curls\etab=-\sigma+\frac{1}{2}\chih\wedge\chibh,\\
		\Db\mub+\Omega\tr\chib\mub=-(\Omega\tr\chib+\frac{2}{|u|})\frac{1}{|u|^2}+\divs(2\Omega\chibh\cdot\etab-\Omega\tr\chib\eta)+2\nablas\Lb\phi\cdot\nablas\phi+2\Lb\phi\Deltas\phi\end{dcases}
\end{align*}

We apply \eqref{Gronwallu} to $\mub$ with $s=0, \nu=2$. For top order derivatives of $\chibh, \tr\chib$, we use the bootstrap assumption \eqref{bootstrapON+1Ob}, together with the lower order bounds, when $\varepsilon$ is small enough, we have
\begin{align*}\||u|\Omega^2\mub\|_{H^N(S_{\ub,u})}\lesssim&\||u_0|\Omega^2\mub\|_{H^N(S_{\ub,u_0})}+(\delta|u|^{-1+p})^{1+\gamma}|u|^pa+\varepsilon^{-\kappa}(\delta|u|^{-1+p})^{\frac{3}{2}+2\gamma}|u|^pa^2\\
	&+\int_{u_0}^u\|\nablas(\Omega^2\eta)\|_{H^{N}(\ub,u')}\D u'+\int_{u_0}^u\delta|u'|^{-1+p}a\|\nablas(\Omega^2\etab)\|_{H^{N}(\ub,u')}\D u'\\
    &+(\delta|u|^{-1+p})^{\frac{1}{2}+\gamma}|u|^pa.\end{align*}

For $\varepsilon$ sufficiently small, we have
\begin{align*}&\delta^{-2-2\gamma}\int_{0}^{\delta}|u|^{1+2\gamma-p(3+2\gamma)}\||u|\Omega^2\mub\|^2_{H^N(S_{\ub,u})}\D\ub\\
	\lesssim&a^2+\delta^{-2-2\gamma}\int_{0}^{\delta}|u|^{1+2\gamma-p(3+2\gamma)}\left(\int_{u_0}^u\delta|u'|^{-1+p}a\|\nablas(\Omega^2\etab)\|_{H^{N}(S_{\ub,u'})}\D u'\right)^2\D\ub\\
    &+\delta^{-2-2\gamma}\int_{0}^{\delta}|u|^{1+2\gamma-p(3+2\gamma)}\left(\int_{u_0}^u\|\nablas(\Omega^2\eta)\|_{H^{N}(S_{\ub,u'})}\D u'\right)^2\D\ub\\
	\lesssim&a^2,
\end{align*}
where we have used the estimate of $\eta$ that has already been proven above, the bootstrap assumption \eqref{bootstrapON+1Ob} for $\etab$. Applying \eqref{elliptic1} to $\etab$, we obtain the desired estimate
\begin{equation}\label{etabtopbound}\delta^{-2-2\gamma}\int_0^{\delta}|u|^{1+2\gamma-p(3+2\gamma)}\|\Omega^2\etab\|^2_{H^{N+1}(S_{\ub,u})}\D\ub\lesssim a^2.\end{equation}

We turn to $\omegab$, for which we use
\begin{align*}
&\begin{dcases}
  \Deltas\omegab&=\omegabs+\divs(\Omega\betab+\Lb\phi\nablas\phi),\\
D\omegabs&+\Omega\tr\chi\omegabs+2\Omega\chih\cdot\nablas\nablas\omegab+2\divs(\Omega\chih)\cdot\nablas\omegab-\frac{1}{2}\divs(\Omega\tr\chi(\Omega\betab+\Lb\phi\nablas\phi))\\
&+\nablas(\Omega^2)\cdot(\ds(\rho+\frac{1}{6}\mathbf{R})+{}^*\ds\sigma)+\Deltas(\Omega^2)(\rho+\frac{1}{6}\mathbf{R})-\Deltas(\Omega^2(2\eta\cdot\etab-|\eta|^2))\\
&-\divs(\Omega\chih\cdot(\Omega\betab+\Lb\phi\nablas\phi)-2\Omega\chibh\cdot(\Omega\beta-L\phi\nablas\phi)+3\Omega^2\etab(\rho+\frac{1}{6}\mathbf{R})-3\Omega^2{}^*\etab\sigma)\\
\phantom{\Delta}=&-\divs\{2\Omega^2\ds\phi\Deltas\phi+\Lb\phi\nablas L\phi+L\phi\nablas\Lb\phi-\Omega\tr\chib L\phi\ds\phi\\
&+2\Omega\chibh\cdot\ds\phi L\phi+2\Omega^2\etab\cdot\nablas\phi\ds\phi+\Omega^2\etab|\ds\phi|^2)\}  \end{dcases}
\end{align*}

It is direct to check that, from the second equation, the following spacetime estimate
$$\delta^{-1-2\gamma}|u|^{-p}\int_{[0,\delta]\times[u_0,u]}|u'|^{5+2\gamma-2p(1+\gamma)}\|\Omega D\omegabs+\Omega\tr\chi\Omega\omegabs\|_{H^{N-1}(\ub,u')}^2\D\ub \D u'\lesssim a^2$$
holds for $\varepsilon$ small enough. We have used the bootstrap assumptions \eqref{bootstrapON+1Ob} for $\omegab$, $\eta$ and $\etab$, the lower order estimates, the curvature estimates \eqref{curvaturebound} and the estimates of the scalar field. The term $\Lb\phi\Deltas L\phi$ is the worst term and the spacetime estimate \eqref{Lphispacetime} is utilized. By applying \eqref{Gronwallub} for $\omegabs$, squaring and integrating over $[u_0,u]$, we have
\begin{align*}&\delta^{-2-2\gamma}\int_{u_0}^u|u'|^{5+2\gamma-p(3+2\gamma)}\|\Omega\omegabs\|_{H^{N-1}(\ub,u')}^2\D u'\\
	\lesssim&\delta^{-2-2\gamma}\int_{u_0}^u|u'|^{5+2\gamma-p(3+2\gamma)}\left(\int_{0}^{\delta}\|\Omega D\omegabs+\Omega\tr\chi\Omega\omegabs\|_{H^{N-1}(\ub',u')}\D\ub'\right)^2\D u'\\
	\lesssim&\delta^{-1-2\gamma}\int_{[0,\delta]\times[u_0,u]} |u'|^{5+2\gamma-2p(3+2\gamma)}\|\Omega D\omegabs+\Omega\tr\chi\Omega\omegabs\|_{H^{N-1}(\ub',u')}^2\D\ub' \D u'\\
	\lesssim&a^2.
\end{align*}
The desired estimate for $\omegab$ follows from applying elliptic estimate \eqref{elliptic1}. But note that the lower order estimate for $\omegab$, which is established in $H^j(S_{\ub,u})$, is not sufficient. We must again utilize the spacetime estimate \eqref{Lphispacetime}. One can check the equations for $D\omegab$ to see that $\nablas\omegab$ (in order to apply \eqref{elliptic1}, $\omegab$ itself is not needed) obeys the estimate 
$$|u|^{-1}\|(|u|\nablas)\omegab\|_{H^{N-1}(S_{\ub,u})}\lesssim(\delta|u|^{-1+p})^{2+\gamma}|u|^{-1}a^2+|u|^{-1}\int_0^{\delta}\|(|u|\nablas)(L\phi\Lb\phi)\|_{H^{N-1}(S_{\ub,u})}\D \ub.$$
Squaring and integrating along $[u_0,u_1]$ gives the desired estimate for lower order derivatives of $\omegab$, and the top order estimate for $\omegab$ follows. 

The last thing is $\omega$, for which we use the equations:
\begin{align*}
&\begin{dcases}\Deltas\omega=\omegas-\divs(\Omega\beta)\\
\Db\omegas+\Omega\tr\chib\omegas+2\Omega\chibh\cdot\nablas\nablas\omega+2\divs(\Omega\chibh)\cdot\nablas\omega+\frac{1}{2}\divs(\Omega^2\tr\chib\beta)+\nablas(\Omega^2)\cdot(\ds(\rho+\frac{1}{6}R)+{}^*\ds\sigma)\\
+\Deltas(\Omega^2)(\rho+\frac{1}{6}R)-\Deltas(\Omega^2(2\eta\cdot\etab-|\etab|^2))-\divs(\Omega^2(-\chibh\cdot\beta+2\chih\cdot\betab+3\eta(\rho+\frac{1}{6}R)+3{}^*\eta\sigma))\\
=\divs(\Omega^2\ds\phi\Deltas\phi-2L\phi\nablas\Lb\phi-\Lb\phi\nablas L\phi-\frac{1}{2}\Omega\tr\chi\Lb\phi\ds\phi+\Omega\chibh\cdot\ds\phi L\phi+(\eta+\etab)\Lb\phi L\phi+\Omega^2\eta|\ds\phi|^2).\end{dcases}\\
\end{align*}
The worst term is still $\Lb\phi\Deltas L\phi$. We can compute that 
\begin{align*}
    &\delta^{-1-2\gamma}\int_{0}^{\delta}|u|^{2+2\gamma-p(3+2\gamma)}\left(\int_{u_0}^{u}\||u'|\Omega\Lb\phi\Deltas L\phi\|_{H^{N-1}(\ub,u')}\D u'\right)^2\D \ub\\
    \lesssim&\delta^{-1-2\gamma}\int_{0}^{\delta}|u|^{2+2\gamma-p(3+2\gamma)}\left(\int_{u_0}^{u}|u'|^{-3-2\gamma+2p(1+\gamma)}\D u'\right)\left(\int_{u_0}^{u}|u'|^{1+2\gamma-2p(1+\gamma)}\|\Omega\nablas L\phi\|^2_{H^{N-1}(\ub,u')}\D u'\right)\D \ub\\
    \lesssim&\delta^{-1-2\gamma}\int_{0}^{\delta}\int_{u_0}^{u}|u'|^{1+2\gamma-p(3+2\gamma)}\|\Omega\nablas L\phi\|^2_{H^{N-1}(\ub,u')}\D u'\D\ub\\
    \lesssim&a^2
\end{align*}
 Applying \eqref{Gronwallu} to $\omegas$  with $s=0, \nu=2$, squaring and integrating over $[u_0,u]$, we have
\begin{align*}&\delta^{-1-2\gamma}\int_{0}^{\delta}|u|^{2+2\gamma-p(3+2\gamma)}\||u|\Omega\omegas\|_{H^{N-1}(\ub,u)}^2\D \ub\\
	\lesssim&a^2+\delta^{-1-2\gamma}\int_{0}^{\delta}|u|^{2+2\gamma-p(3+2\gamma)}\left(\int_{u_0}^{u}\||u'|\Omega(\Db\omegas+\Omega\tr\chib\omegas)\|_{H^{N-1}(\ub,u')}\D u'\right)^2\D \ub\\
	\lesssim &a^2+\delta^{-1-2\gamma}\int_{[0,\delta]\times[u_0,u]}|u'|^{5+2\gamma-p(3+2\gamma)}\|\Omega(\Db\omegas+\Omega\tr\chib\omegas)\|_{H^{N-1}(\ub,u')}^2\D\ub \D u'\\
	\lesssim&a^2.
\end{align*}
The desired estimate for $\omega$ can be obtained by applying \eqref{elliptic1} and improved lower order estimates for $\omega$, as done in the case of $\omegab$.

	\end{proof}

\section{The trapped surface formation}\label{sec:trappedsurface}

In this section, we begin to prove Theorems \ref{main1} and \ref{main2}. Besides this, we will also state an exterior trapped surface formation for data large at the threshold. 

 Consider the problem considered in Theorem \ref{main3}. All assumptions and conclusions in Theorem \ref{main3} hold. Consider the equation below rewritten from the null structure equation of $\Db(\Omega\chih)$ and $\Db(L\phi)$:
\begin{align*}
\frac{\partial}{\partial u}\left(|u|^2|\Omega\chih|^2+2|u|^2|\widetilde{L\phi}|^2\right)=&2|u|^2\langle\Omega^2\left(\nablas\widehat{\otimes}\eta+\eta\widehat{\otimes}\eta+\nablas\phi\widehat{\otimes}\nablas\phi\right)-\nablas_b(\Omega\chih),\Omega\chih\rangle\\
&+4|u|\left(\Omega^2\Deltas\phi+2\Omega^2(\eta,\nablas\phi)\right)\cdot(|u|L\phi)\\
&-\widetilde{\Omega\tr\chib}\left(|u|^2|\Omega\chih|^2+2|u|^2|L\phi|^2\right)-\Omega\tr\chi|u|^2\langle\Omega\chibh,\Omega\chih\rangle\\
&-4\left(|u|\nablas_bL\phi\cdot|u|L\phi-|u|\nablas_b\left(\frac{\varphi}{|u|}\right)\cdot\varphi\right)\\
&-2\left(\widetilde{\Omega\tr\chi|u|\Lb\phi\cdot|u|L\phi}\right)
\end{align*}
where $b$ satisfies $Db=-4\Omega^2\zeta^{\#}$ and $b|_{\Cb_0}=0$. The vector $b$ satisfies $\partial_u+b=\Lb.$ We then have
\begin{align*}&\left|\partial_u\left(|u|^2|\Omega\chih|^2+2|u|^2|\widetilde{L\phi}|^2\right)(\ub,u,\vartheta)\right|\\
\lesssim& \delta|u|^{-2+3p}a^2+|\Omega\tr\chi|\delta|u|^{2p-1}a^2+\Omega_0^2|\tr\chi'(\ub,u,\vartheta)-\tr\chi'(0,u,\vartheta)| |u|^{p}a. \end{align*}
Let us denote
$$M(u,\vartheta)=\int_0^\delta\left(|u|^2|\Omega\chih|^2+2|u|^2|\widetilde{L\phi}|^2\right)(\ub,u,\vartheta)\D\ub.$$
By  Raychaudhuri equation
 $$D\tr\chi'=-\frac{1}{2}(\Omega\tr\chi')^2-|\chih|^2-2(\Lh\phi)^2,$$
 by plugging in the initial bound $|u||\Omega\tr\chi\big|_{\Cb_0}|\le 2$, and absorbing the first term on the right hand side, and we have
 $$|\Omega\tr\chi|\lesssim |u|^{p-1}+\delta|u|^{2p-2}K^2+|u|^{-2}M(u,\vartheta).$$
 Plugging this in the Raychaudhuri equation we have 
 $$\Omega_0^2|\tr\chi'(\ub,u,\vartheta)-\tr\chi'(0,u,\vartheta)|\lesssim |u|^{-2}M(u,\vartheta)+\delta|u|^{2p-2}K^2.$$
Recall that $K$ is introduced in \eqref{initialboundCb0}. It is important that $a$ does not appear in the above two estimates. So we have
\begin{align*}\left|\partial_u\left(|u|^2|\Omega\chih|^2+2|u|^2|\widetilde{L\phi}|^2\right)(\ub,u,\vartheta)\right|\lesssim \delta|u|^{3p-2}a^2+\delta|u|^{3p-2}K^2a+|u|^{p-2}a M\end{align*}
and (after integrating over $\ub\in[0,\delta]$, and noting that $K\ge1$)
\begin{equation*}|\partial_u M(u,\vartheta)|\lesssim K^2\delta^2|u|^{3p-2}a^2+\delta|u|^{p-2}a M(u,\vartheta).\end{equation*}
 Integrating over $u\in[u_0,u_1]$, for $\varepsilon$ sufficiently small, we have
\begin{equation}\label{equationofM}M(u_1,\vartheta)\ge \frac{1}{2}M(u_0,\vartheta)-cK^2\delta^2|u_1|^{-1+3p}a^2\end{equation}
for some $c$ depending on $u_0$, $\frac{1}{p}$ and $\frac{1}{k^2-\epsilon-p}$. 

Now let us assume
\begin{equation}\label{lowerbound1}M(u_0,\vartheta)=\int_{0}^{\delta}\left(|u|^2|\Omega\chih|^2+2|u|^2|\widetilde{L\phi}|^2\right)(\ub,u_0,\vartheta)\D\ub\geq c_1|u_1|^{1+p},\ \forall\vartheta\in S^2\end{equation}
for some $c_1$ which we will pick it up later. By \eqref{equationofM} we have
\begin{align*}
\int_{0}^{\delta}\left(|u|^2|\Omega\chih|^2+2|u|^2|\widetilde{L\phi}|^2\right)(\ub,u_1,\vartheta)\D\ub&\geq \frac{1}{2}c_1|u_1|^{1+p}-cK^2(\delta|u_1|^{-1+p}a)^2|u_1|^{1+p}\\
&\ge \frac{1}{2}c_1|u_1|^{1+p}-cK^2\varepsilon^2|u_1|^{1+p}\
\end{align*}
and hence
\begin{align*}
\int_{0}^{\delta}\left(|u|^2|\Omega\chih|^2+2|u|^2|L\phi|^2\right)(\ub,u_1,\vartheta)\D\ub\geq &\frac{1}{4}c_1|u_1|^{1+p}-\frac{1}{2}cK^2\varepsilon^2|u_1|^{1+p}-2\delta|u|^{2p}K^2\\
\ge&\frac{1}{4}(c_1-2cK^2\varepsilon^2-2\varepsilon K)|u_1|^{1+p}.
\end{align*}
We use again the Raychaudhuri equation, we have
\begin{align*}\tr\chi'(\delta,u_1,\theta)\leq&\frac{2}{|u_1|}-\frac{1}{4|u_1|^2}\Omega_0^{-2}(u_1)\int_{0}^{\delta}\left(|u|^2|\Omega\chih|^2+2|u|^2|L\phi|^2\right)(\ub,u_1,\vartheta)\D\ub\\
\leq&\frac{2}{|u_1|}-\frac{3}{|u_1|}\leq-\frac{1}{|u_1|}<0
\end{align*}
provided that
\begin{equation}\label{choiceofc1}
\frac{1}{4}(c_1-2cK^2\varepsilon^2-2\varepsilon K)\ge 12.
\end{equation}

Instead of \eqref{lowerbound1}, if we assume
\begin{equation}\label{lowerbound2}M(u_0,\vartheta)=\int_{0}^{\delta}\left(|u|^2|\Omega\chih|^2+2|u|^2|\widetilde{L\phi}|^2\right)(\ub,u_0,\vartheta)\D\ub\geq c_1(a\varepsilon^{-1})^{\frac{1+p}{1-p}}\delta^{\frac{1+p}{1-p}},\ \forall\vartheta\in S^2\end{equation}
and moreover $\delta|u_1|^{p-1}a=\varepsilon$, then \eqref{lowerbound1} holds. To ensure that data satisfying \eqref{lowerbound2} exist, in view of \eqref{initialbound} we must require that
\begin{equation}\label{acondition}c_1(a\varepsilon^{-1})^{\frac{1+p}{1-p}}\le a^2\end{equation}
which is possible since $c_1$ is independent of $a$ and $\frac{1+p}{1-p}\le 2$. 
Therefore, we have proved the following.
\begin{theorem}\label{main4}
Consider the characteristic initial value problem of \eqref{ES} as described in Theorem \ref{main3}. In addition to the assumptions of Theorem \ref{main3}, if we assume moreover \eqref{lowerbound2} with $c_1$ given by  \eqref{choiceofc1}, then $S_{\delta,u_1}$ is a closed trapped surface.  

Based on Remark \ref{dependenceonk2-epsilon-p}, the theorem also holds true for $p=k^2-\epsilon$ where $\epsilon\in [0,k^2)$ if we assume that $\nablas L\phi=0$ for $\ub\in[0,\delta]$ on $C_{u_0}$, or more generally that it vanishes sufficiently fast as $\ub\to0$.
\end{theorem}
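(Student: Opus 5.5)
The proof assembles the pieces already established before the statement, so the plan is to check each link in order. First, Theorem \ref{main3} applies under its hypotheses with $\delta|u_1|^{p-1}a\le\varepsilon$. Here I fix $u_1$ by the equality $\delta|u_1|^{p-1}a=\varepsilon$, which means $|u_1|=(a\varepsilon^{-1}\delta)^{\frac{1}{1-p}}$. This gives existence on $[0,\delta]\times[u_0,u_1]$ together with the bounds $\mathcal{R},\Os,\mathcal{O},\mathcal{E},\Es\lesssim1$. For this $u_1$ to lie in $(u_0,0)$ and in the range where \eqref{omegablower} holds, $\delta$ must be small, which is allowed.

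Second, I verify that \eqref{lowerbound2} implies \eqref{lowerbound1}. With the above choice of $u_1$,
$$c_1(a\varepsilon^{-1})^{\frac{1+p}{1-p}}\delta^{\frac{1+p}{1-p}}=c_1|u_1|^{1+p}.$$
Next I use the transport inequality for $M$. The key bound is
$$|\partial_uM|\lesssim K^2\delta^2|u|^{3p-2}a^2+\delta|u|^{p-2}aM.$$
It follows from the $\Db\chih$ and $\Db L\phi$ equations, the bounds of Theorem \ref{main3}, and the Raychaudhuri estimate controlling $\Omega\tr\chi$ by $|u|^{-2}M$ without any factor of $a$. Gronwall then yields \eqref{equationofM}, because $\int_{u_0}^{u_1}\delta|u|^{p-2}a\,\D u\lesssim\delta|u_1|^{p-1}a=\varepsilon$. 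The error term is $cK^2\delta^2|u_1|^{3p-1}a^2=cK^2\varepsilon^2|u_1|^{1+p}$. Passing from $\widetilde{L\phi}$ to $L\phi$ costs at most $\delta|u_1|^{2p}K^2\lesssim\varepsilon K|u_1|^{1+p}$, using $K\le a$.

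Third, I integrate the Raychaudhuri equation for $\tr\chi'$ along $C_{u_1}$ from $\ub=0$, dropping the favorable $-\frac12(\Omega\tr\chi')^2$ term. The initial value satisfies $\tr\chi'\le 2|u_1|^{-1}\Omega_0^{-2}$, and the integral contributes $-|u_1|^{-2}\Omega_0^{-2}(u_1)\cdot\frac14c_1'|u_1|^{1+p}$, where $c_1'=c_1-2cK^2\varepsilon^2-2\varepsilon K$. Since $\Omega_0^2(u_1)\le|u_1|^p$, the negative term dominates once $c_1'\ge48$, which is \eqref{choiceofc1}. Then $\tr\chi'<0$ at every point of $S_{\delta,u_1}$.

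Fourth, $\tr\chib<0$ on $S_{\delta,u_1}$ follows from $\|\widetilde{\Omega\tr\chib}\|\lesssim\delta|u|^{p-2}a\ll|u|^{-1}$ and the spherical background value $\approx-2/|u|$. So $S_{\delta,u_1}$ is closed trapped. Compatibility of the data is \eqref{acondition}: the initial data must both satisfy the upper bound \eqref{initialbound} and achieve the lower bound \eqref{lowerbound2}. This holds by taking $a$ large depending on $c_1,\varepsilon$, because $\frac{1+p}{1-p}<2$ when $p<\frac13$.

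For the endpoint $p=k^2-\epsilon$, the only obstruction was the damping coefficient $k^2-\epsilon-p$ in Section \ref{sec:toporderscalar}. By Remark \ref{dependenceonk2-epsilon-p}, if $\nablas L\phi$ vanishes on $C_{u_0}$ (or decays fast enough), the spacetime bound \eqref{Lphispacetime} is recovered from the extra factor $(\delta|u|^{p-1}a)^{1/2}$ without the damping term. All other constants are then uniform.

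The main delicate point is keeping the constant $c$ in \eqref{equationofM} independent of $a$. This requires the borderline Raychaudhuri bound on $\Omega\tr\chi$ to be expressed through $M$ and $K$ only, not through the bootstrap norms carrying $a$.
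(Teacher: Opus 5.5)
Your argument follows the paper's proof. It uses the same transport inequality for $M$, with the Raychaudhuri control of $\Omega\tr\chi$ kept free of $a$. It has the same Gronwall step to \eqref{equationofM}, the same $\widetilde{L\phi}\to L\phi$ correction using $K\le a$, the same Raychaudhuri conclusion, and the same appeal to Remark \ref{dependenceonk2-epsilon-p} at $p=k^2-\epsilon$. Your explicit check that $\Omega\tr\chib\approx-2|u_1|^{-1}$ on $S_{\delta,u_1}$ is a welcome addition that the paper leaves implicit.

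\textbf{The gap is in your third step.} Hypothesis \eqref{initialboundCb0} reads $|r\Omega^{-1}\tr\chi|\le 2$, and $\Omega^{-1}\tr\chi=\tr\chi'$. So the initial value is simply $\tr\chi'(0,u_1,\vartheta)\le 2|u_1|^{-1}$, with no factor $\Omega_0^{-2}(u_1)$.

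With your bound $2|u_1|^{-1}\Omega_0^{-2}(u_1)$, both terms carry $\Omega_0^{-2}(u_1)$, and the inequality $\Omega_0^2(u_1)\le|u_1|^p$ no longer helps. Negativity would then require $c_1'\gtrsim|u_1|^{-p}$. Only an upper bound on $\Omega_0^2$ is assumed, so this cannot be avoided, and it blows up as $|u_1|\to0$. That would destroy \eqref{acondition} and the threshold scaling.

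The mechanism is exactly the asymmetry you lost. The blue-shift factor $\Omega^{-2}\ge\frac14\Omega_0^{-2}(u_1)\ge\frac14|u_1|^{-p}$, from \eqref{lapse} and the upper bound on $\Omega_0$, multiplies only the focusing term, giving
$$\tr\chi'(\delta,u_1,\vartheta)\le\frac{2}{|u_1|}-\frac{\Omega_0^{-2}(u_1)}{4|u_1|^{2}}\cdot\frac{c_1'}{4}|u_1|^{1+p}\le\frac{2}{|u_1|}-\frac{c_1'}{16|u_1|}<0$$
once $c_1'\ge 48$. Note that you also dropped the factor $\frac14$ from \eqref{lapse}; with it, $c_1'\ge48$ gives exactly the $3|u_1|^{-1}$ in the paper. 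Meanwhile the background expansion stays $2|u_1|^{-1}$, which is the self-similar normalization $\tr\chi'=2\Omega^{-2}Lr/r$ with $\Omega^{-2}Lr\le1$.

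\textbf{A smaller bookkeeping point.} \eqref{omegablower}, and hence \eqref{relativecontrol} and Theorem \ref{main3}, are only available on $[u_0,u_1]$ for a given $u_1$. So you should treat $u_1$ as given and set $\delta=\varepsilon|u_1|^{1-p}a^{-1}$, as the paper does via \eqref{deltau1} in the proof of Theorem \ref{main1}. If you define $u_1$ from $\delta$ and then shrink $\delta$, $|u_1|=(a\delta/\varepsilon)^{1/(1-p)}$ moves toward $0$, out of the interval where \eqref{omegablower} is known, not into it.
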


This is a theorem of trapped surface formation of \eqref{ES} for data large at the threshold without any symmetries, a nonzero-$k$ analogue of the scale-critical trapped surface formation in vacuum of An--Luk \cite{An-Luk}. To see this, Let $\Cb_0$  be the past null cone $\mathcal{N}$ of the singularity of the $k$-self-similar naked singularity solution. Then $\Omega_0(u)=|u|^{k^2}$ for all $u\in[u_0,0)$ and we can pick $p=k^2$ (with $\epsilon=0$).  If on $C_{u_0}$ we choose 
$$\chih(\ub,u_0,\vartheta)\approx\sqrt{c_1(a\varepsilon^{-1})^{\frac{1+k^2}{1-k^2}}} \delta^{\frac{k^2}{1-k^2}}\chih_0(\frac{\ub}{\delta},\vartheta)$$
 for $\ub\in[0,\delta]$, where $\chih_0$ is some fixed seed data. It can be seen the $H^{\frac{1}{2}+\frac{k^2}{1-k^2}}$ norm  in $[0,\delta]\times S^2$ (or $C^{\frac{k^2}{1-k^2}}_{\ub}$) of $\chih$ is large ($\approx \sqrt{c_1(a\varepsilon^{-1})^{\frac{1+k^2}{1-k^2}}} $) and its $H^s$ (or $C^\alpha_{\ub}$) norm tends to $0$ when $s<\frac{1}{2}+\frac{k^2}{1-k^2}$ ($\alpha<\frac{k^2}{1-k^2}$) and tends to $\infty$ when $s>\frac{1}{2}+\frac{k^2}{1-k^2}$  ($\alpha>\frac{k^2}{1-k^2}$) as $\delta\to0$. This simultaneously  gives the instability to trapped surface formation in all regularities below the threshold under exterior (gravitational) perturbations.

\begin{proof}[Proof of Theorem \ref{main1}]
Once the trapped surface formation theorem is established, the remainder of the proof proceeds as in \cite{Li25}, with a modification in order that the family we constructed is uniformly bounded at the threshold. 
Let us first review some basic facts about the $k$-self-similar solutions.  A $k$-self-similar solution ($k\ne0$) considered in \cite{Chr94} is the following. Let $(\mathcal{M},g)$ be a spherically symmetric solution to \eqref{ES}, $h$ be the quotient metric on $\mathcal{Q}=\mathcal{M}/SO(3)$ and $r$ be the area radius of the orbits. Then $(\mathcal{Q},h,r,\phi)$ is called $k$-self-similar if there exists a one-parameter group of diffeomorphism  $f_a$ for $a>0$ on $\mathcal{Q}$ such that
$$f_a^*h=a^2h, \ f_a^*r=ar,\  f_a^*\phi=\phi-k\log a.$$
The vector field $S$ generated by $f_a$ obeys
$$\mathcal{L}_Sh=2h,\ Sr=r,\ S\phi=-k$$
and is then conformal Killing
 $$\mathcal{L}_Sg=2g.$$
 Written in self-similar Bondi coordinates $(u,r), u<0, r>0$ (This $u$ should cause no confusion with the $u$ used above), in which the quotient metric reads
\begin{equation}\label{Bondimetric}h=-\mathrm{e}^{2\nu}\D u^2-2\mathrm{e}^{\nu+\lambda}\D u\D r,\end{equation}
where $\nu,\lambda$ and $\phi+k\log(-u)$ are functions of $s=\log(-\frac{r}{u})$ only. After a suitable rescaling on $u$, the conformal vector field $S$ generated by $f_a$ has the form
$$S=u\frac{\partial}{\partial u}+r\frac{\partial}{\partial r}.$$
Let $l,n$ be outgoing and incoming future directed null vector fields so that $l\cdot n=-2$, then the following dimensionless quantities
$$r\frac{l\phi}{lr},\quad r\frac{n\phi}{n r},\quad lr\cdot nr$$
are functions of $s$. In such a coordinate system, the equation \eqref{ES} can be written as a $2\times2$ autonomous system. For $k^2\in(0,\frac{1}{3})$, Christodoulou was able to show the existence of $k$-self-similar naked singularity solutions.  In this case, the autonomous system can be solved for all $s\in(-\infty,+\infty)$ where $s=-\infty$ represents the center and $s=+\infty$ represents the future Cauchy horizon. There is some $s_*\in\mathbb{R}$ such that $s=s_*$ represents the past null cone $\mathcal{N}$ of the singularity.  We may choose the future null cone $C_o=C_{-1}^+: u=-1$ as initial data of this $k$-self-similar naked singularity. 

Let us denote $C_s^-$ be an incoming null cone starting from on a spherical section on  $C_{-1}$ with the self-similar parameter taking value $s$. We will consider the ``exterior'' problem to $C_s^-$ and let $s\nearrow s_*$.  Fix any $s_1<s_*$, which will determine the regularity of the perturbation.  For any $\epsilon>0$, $s_1$ can be chosen such that 
\begin{equation}\label{nphinrbound}k^2-\epsilon\le r^2\left|\frac{n\phi}{n r}\right|^2\le k^2+\epsilon, s\in [s_1,s_*]\end{equation}
because when $s=s_*$, i.e., on the past null cone of the singularity, $r\frac{n\phi}{n r}=-k$.  Also, because $r\frac{l\phi}{lr}=\frac{1}{k}$ at $s=s_*$ we may assume
\begin{equation}\label{lphilrbound}r\left|\frac{l\phi}{l r}\right|\le\frac{2}{k}, s\in [s_1,s_*].\end{equation}

For each $s\in[s_1,s_*)$,  we apply Theorems \ref{main3} and \ref{main4} for 
$$\Cb_{0,s}=\Cb_s,\quad C_{u_0,s}=C_{-1}^+$$
and construct double null coordinates $(\ub_s,u_s)$ as described before Theorem \ref{main3}. With 
$$|u_{1,s}|=\text{the radius of $\Cb_s\cap\{s=s_1\}$},$$
the assumption \eqref{initialboundCb0} holds for $K=\frac{2}{k}$, the assumption \eqref{omegablower} holds for $p=k^2-\epsilon$. By Theorem \ref{main4} and the discussions below it, for each $s\in[s_1,s_*)$, we can find an initial data $\chih_s$ defined on $\ub_s\in[0,\delta_s]\times S^2$  where $\delta_s$ is determined by 
\begin{equation}\label{deltau1}\delta_s|u_{1,s}|^{k^2-\epsilon-1}a=\varepsilon\end{equation}
 ($\epsilon$ and $\varepsilon$ are distinct parameters, and $a$ can be chosen independent of $s$). The maximal future development contains a closed trapped surface $S_{\delta_s, u_{1,s}}$ . When $s\nearrow s_*$,  $u_{1,s}\to0$ and hence $\delta_s\to0$, and the $C_{\ub_s}^\alpha$ norm converges to zero for any $\alpha<k^2-\epsilon$ when $s\nearrow s_*$. By shifting the coordinate and taking smooth extension we obtain a family of perturbations converging to zero in $C_{\ub}^\alpha C^\infty_\vartheta$ for any $\alpha<k^2-\epsilon$ where $\ub$ is the parameter on $C_{-1}^+$. Moreover, the convergence is smooth away from $S_0=\mathcal{N}\cap C_{-1}^+$, we just need to note that away from $S_0$, the initial shear $\chih$ eventually becomes identically zero as $\delta\to0$. 

Perturbations constructed in this way are supported before $S_0$, we only need to show that $\delta_s=o(s_*-s)$.  This follows from the fact that $|u_{1,s}|=O\left((s_*-s)^{\frac{1}{1-k^2}}\right)$ which follows by a more detailed analysis in \cite{Li25} about the $k$-self-similar solutions and we omit the details.

But note that the perturbations constructed above is not uniformly bounded but tends to $\infty$ as $s\nearrow s_*$. In order to find more regular families, we need to let $s_1\nearrow s_*$ in a single family, and simultaneously $\epsilon=\epsilon(s_1)$ will then change. The construction proceeds as follows. We start by giving a $\delta$, and then we choose $\epsilon=\epsilon(\delta)$ such that $\delta^{\epsilon(\delta)}\to1$ as $\delta\to0$. For this $\epsilon$, we choose $s_1$ so that \eqref{nphinrbound} and \eqref{lphilrbound} hold. Choose $a$ so that \eqref{acondition} holds for $p=k^2$. Then $\eqref{acondition}$ holds for all $p$ near $k^2$. Then we determine $|u_{1,s}|$, which is a number determined on $s_1$ and $s$, according to \eqref{deltau1}, and this finally determine $s$.  The coordinate system $(\ub_s,u_s)$ then depends on $\delta$.  So if we define a $\delta$ family of shear\footnote{Strictly speaking, $\chih$ cannot be prescribed directly in this way, since it may not be trace-free with respect to the resulting metric; see \cite{Chr08} for how the conformal metric is specified as initial data. Here we only keep track of the scaling relation.}
$$\chih_\delta(\ub_s,u_{0,s},\vartheta)=\sqrt{c_1(a\varepsilon^{-1})^{\frac{1+(k^2-\epsilon(\delta))}{1-(k^2-\epsilon(\delta))}}} \delta^{\frac{k^2-\epsilon(\delta)}{1-(k^2-\epsilon(\delta))}}\chih_0(\frac{\ub_s}{\delta},\vartheta)$$
where $\chih_0$ is a fixed seed data. Then in the coordinates $(\ub_s,u_s)$, $S_{\delta, u_{1,s}(\delta)}$ is a closed trapped surface. We only need to check the Sobolev norm at the threshold
$$\|\chih_\delta\|_{H^{\frac{1}{2}+\frac{k^2}{1-k^2}}}\approx\sqrt{c_1(a\varepsilon^{-1})^{\frac{1+(k^2-\epsilon(\delta))}{1-(k^2-\epsilon(\delta))}}}\delta^{\frac{k^2-\epsilon(\delta)}{1-(k^2-\epsilon(\delta))}-\frac{k^2}{1-k^2}}\to  \sqrt{c_1(a\varepsilon^{-1})^{\frac{1+k^2)}{1-k^2}}}, \delta\to0.$$
This completes the proof. Note that this limit cannot be chosen small based on the above argument. This is consistent to the fact that small perturbations at threshold (within spherical symmetry) lead to stability.

\end{proof}

We begin to prove Theorem \ref{main2}. The works of anisotropic trapped surface formation can be found in \cite{K-L-R}, \cite{An-Han}, \cite{A}. The equations mentioned below and some basic estimates can be found there.

\begin{proof}[Proof of Theorem \ref{main2}]

As in the above proof of Theorem \ref{main1}, the interior instability can be established once an exterior trapped surface formation theorem is obtained.  So we assume in addition to Theorem \ref{main3}  that  on $\Cb_0$, $|u|^{k^2+\epsilon}\le \Omega_0^2(u)\le|u|^{k^2-\epsilon}$ for some $\epsilon\ge0$ for $u\in[u_0,u_1]$.  We apply  Theorem \ref{main3} for some $p= k^2-\epsilon$ (we could also choose $p<k^2-\epsilon$ in order to allow $L\phi$ to be perturbed as well)  and $\gamma=\frac{p}{1-p}$.

Suppose that the closed trapped surface on $\Cb_\delta$ has the form $\{(\ub,u,\vartheta):\ub=\delta, u=-R(\vartheta)\}$ where $R$ is a function on $\vartheta$.  The new outgoing null expansion is
$$\overline{\tr\chi}=\tr_g\chi-2\Omega\Delta_g R-4\Omega\eta\cdot\nabla_g R-4\Omega^2\chibh_{bc}\nabla_g^bR\nabla_g^cR-\Omega^2\tr_g\chib|\nabla_g R|^2+4\Omega\omegab|\nabla_g R|^2$$
where $g$ is the induced metric on the set $\{(\ub,u,\vartheta):\ub=\delta, u=-R(\vartheta)\}$. All unbarred quantities in the original null frame. In order to have a trapped surface, so we need to satisfy $\overline{\tr\chi}<0$. This is equivalent to the following
\begin{equation}\label{ellipticinequality1}\Delta_gR+\frac{1}{2}\Omega\tr_g\chib|\nabla_gR|^2-\frac{1}{2}\Omega^{-1}\tr_g\chi-2\omegab|\nabla_gR|^2+2\Omega\chibh(\nabla_gR,\nabla_gR)+2g(\eta,\nabla_gR)>0\end{equation}
 By the estimates in Theorem \ref{main3}, we have (evaluated at $(\ub,-R(\vartheta),\vartheta)$)
$$|\Omega\chibh(\nabla_gR,\nabla_gR)|\lesssim(\delta R^{p-1})^{1+\gamma}R^{-1}a|\nabla_gR|^2,$$
$$|g(\eta,\nabla_gR)|\lesssim (\Omega_0^{-1}(-R)R^{\frac{1}{2}p})(\delta R^{p-1})^{1+\gamma}R^{-1}a|\nabla_gR|,$$
$$|\widetilde{\omegab}||\nabla_gR|^2\lesssim (\delta R^{p-1})R^{-1}a|\nabla_gR|^2.$$
Denote $h=h(u)=\frac{|u|\tr\chi'}{2}\big|_{\Cb_0}$, in view of \eqref{equationofM} we have
\begin{align*}-\Omega^{-1}\tr_g\chi(\ub,-R(\vartheta),\vartheta)\ge &-\frac{2h(-R)}{R}+\int_{0}^{\delta}\left(|\chih|^2+2|\Lh\phi|^2\right)(\ub,-R(\vartheta),\vartheta)\D\ub\\
\ge&-\frac{2h(-R)}{R}+\frac{1}{8}\frac{M(u_0,\vartheta)}{\Omega_0^{2}(-R)R^2}-c(\Omega_0^{-2}(-R)R^p)K^2\delta R^{-2+p}a
\end{align*}
and also
$$-2\omegab\big|_{\Cb_0}\ge pR^{-1}>0,$$
$$h(-R)\le1.$$
In order that \eqref{ellipticinequality1} holds, it suffices to find $R=R(\vartheta)$ so that
\begin{equation}\label{ellipticinequality2}
\begin{aligned}&\Delta_gR-R^{-1}|\nabla_gR|^2-\frac{1}{R}+\frac{1}{16}\frac{M(u_0,\vartheta)}{\Omega_0^2(-R)R^2}\\
&-(\delta R^{p-1})R^{-1}a|\nabla_gR|^2-(\Omega_0^{-1}(-R)R^{\frac{1}{2}p})(\delta R^{p-1})^{1+\gamma}R^{-1}a|\nabla_gR|-c(\Omega_0^{-2}(-R)R^p)K^2\delta R^{-2+p}a>0.
\end{aligned}\end{equation}
We make a transformation $R(\vartheta)=C\mathrm{e}^{-\phi(\vartheta)}$ where $C$ is to be determined and define the metric $\gamma$ as $g=R^2\gamma$, so we can write the relationship equation between the two:
$$
\left\{
\begin{aligned}
	\nabla_g\phi&=&R^{-2}\nabla_\gamma \phi \\
	\Delta_g\phi&=&R^{-2}\Delta_\gamma \phi
\end{aligned}
\right.
$$
and
$$
\left\{
\begin{aligned}
	\nabla_gR&=-R^{-1}\nabla_\gamma\phi\\
	\Delta_gR&=R\left[|\nabla_g\phi|^2_g-\Delta_g\phi\right]=R^{-1}\left[|\nabla_\gamma\phi|^2_\gamma-\Delta_\gamma\phi\right]
\end{aligned}
\right.
$$
Therefore, \eqref{ellipticinequality2} becomes
\begin{equation}\label{ellipticinequality3}
\begin{aligned}
	&-R^{-1}\Delta_\gamma\phi-\frac{1}{R}+\frac{1}{16}\frac{M(u_0,\vartheta)}{\Omega_0^2(-R)R^2}\\&-c(\delta R^{p-1})R^{-1}a|\nabla_\gamma \phi|_\gamma^2-c(\Omega_0^{-1}(-R)R^{\frac{1}{2}p})(\delta R^{p-1})^{1+\gamma}R^{-1}a|\nabla_\gamma \phi|_\gamma-c(\Omega_0^{-2}(-R)R^p)K^2\delta R^{-2+p}a>0.
\end{aligned}
\end{equation}Theorem \ref{main3} 
Moreover, denote $\cir{\gamma}$ be the standard round metric, we have
$$|\nabla_\gamma \phi|_\gamma\lesssim|\nabla_{\cir{\gamma}}\phi|_{\cir{\gamma}}.$$
and
$$|\Delta_\gamma\phi-\Delta_{\cir{\gamma}}\phi|\lesssim \delta R^{p-1}a(1+\phi)(|\nabla^2_{\cir{\gamma}}\phi|_{\cir{\gamma}}+|\nabla_{\cir{\gamma}}\phi|_{\cir{\gamma}}+|\nabla_{\cir{\gamma}}\phi|_{\cir{\gamma}}^2).$$
Therefore,  in order to find $R$ so that \eqref{ellipticinequality3} holds, it suffices to show
\begin{equation}\label{ellipticinequality4}
\begin{aligned}
	&\Delta \phi+ 1-\frac{1}{16}\frac{M(u_0,\vartheta)}{\Omega_0^2(-R)R}\\
	&+c(\delta R^{p-1})a(1+\phi)(|\nabla^2  \phi|+|\nabla  \phi|+|\nabla  \phi|^2)\\
	&+c(\Omega_0^{-1}(-R)R^{\frac{1}{2}p})(\delta R^{p-1})^{1+\gamma}a|\nabla  \phi| +c(\Omega_0^{-2}(-R)R^p)K^2\delta R^{-1+p}a<0
\end{aligned}
\end{equation}
where $\Delta,\nabla$ are with respect to standard round metric. We use the following construction introduced in \cite{K-L-R}.

\begin{lemma}
There is a function $\phi$  on $S^2$ verifying the differential inequality
$$\Delta\phi+1<\frac{1}{2}M_0\mathrm{e}^\phi$$
so that
$$\log\left(\frac{1}{M_*\lambda^2}\right)-O(1)\le \phi\le \log\left(\frac{1}{M_*\lambda^5}\right)+O(1),$$
$$|\nabla\phi|\le O(\lambda^{-1}), |\nabla^2\phi|\le O(\lambda^{-2}).$$ 
Here $M_0$ satisfy
$$M(\vartheta)\ge M_*>0, \vartheta\in B_\lambda(p)$$
for some $p\in S^2, \lambda\in (0,\pi)$. 
\end{lemma}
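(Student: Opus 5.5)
The plan is to build $\phi$ as a logarithmic potential sourced by a bump concentrated in $B_\lambda(p)$, and then shift it by a constant so that the exponential beats the source there. We may assume $\lambda$ is small, since for $\lambda$ bounded below everything is $O(1)$. Here $M_0(\vartheta)\ge 0$ is the function on the right of the inequality, with $M_0\ge M_*$ on $B_\lambda(p)$.

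\textbf{Step 1: the source and the potential.} Fix a smooth radial cutoff $\chi$ on $[0,\infty)$ with $\chi=1$ on $[0,\frac12]$, $\chi=0$ on $[1,\infty)$, $0\le\chi\le1$. Fix a small constant $\eta\in(0,1)$. Set
$$F(\vartheta)=c_\lambda\lambda^{-2}\chi\big(\mathrm{dist}(\vartheta,p)/\lambda\big),$$
where $c_\lambda\approx 1$ is chosen so that $\int_{S^2}F=4\pi(1+\eta)$. Then $F-1-\eta$ has zero mean, so we can solve $\Delta\psi=F-1-\eta$ on the round sphere, normalized by $\min\psi=0$.

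\textbf{Step 2: estimates for $\psi$.} Write $\psi=\int G(\cdot,\vartheta')F(\vartheta')\,\D\mu(\vartheta')$ plus a smooth bounded part, where $G$ is the Green function with $G(\vartheta,\vartheta')=-\frac{1}{2\pi}\log\mathrm{dist}(\vartheta,\vartheta')+O(1)$. Since $F$ has total mass $\approx 4\pi$ spread over $B_\lambda(p)$, this gives
$$\psi(\vartheta)=2\log\frac{1}{\max(\mathrm{dist}(\vartheta,p),\lambda)}+O(1).$$
Hence $\mathrm{osc}\,\psi=2\log\frac1\lambda+O(1)$, and $\psi\ge 2\log\frac1\lambda-O(1)$ on $B_\lambda(p)$.

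For derivatives, rescale $B_{2\lambda}(p)$ to unit size. The rescaled equation has a source of size $O(1)$, so interior Schauder estimates give $|\nabla\psi|\lesssim\lambda^{-1}$ and $|\nabla^2\psi|\lesssim\lambda^{-2}$ there. Away from the ball, differentiating the Green representation gives $|\nabla\psi|\lesssim \mathrm{dist}^{-1}\le\lambda^{-1}$ and $|\nabla^2\psi|\lesssim \mathrm{dist}^{-2}\le \lambda^{-2}$.

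\textbf{Step 3: the shift and the inequality.} Set $\phi=\psi+\log\frac{1}{M_*\lambda^2}+A$, with $A$ a large absolute constant. Then $\Delta\phi+1=F-\eta$, and we check the inequality region by region.
\begin{itemize}
\item Outside $B_\lambda(p)$ we have $\Delta\phi+1=-\eta<0\le\frac12M_0\mathrm{e}^\phi$. The constant $\eta$ is what makes the inequality strict even where $M_0$ might vanish.
\item Inside $B_\lambda(p)$ we have $\Delta\phi+1\le C\lambda^{-2}$. On the other hand
$$\tfrac12M_0\mathrm{e}^\phi\ge\tfrac12M_*\cdot\frac{\mathrm{e}^{A}}{M_*\lambda^2}\cdot\lambda^{-2}\mathrm{e}^{-O(1)}\gtrsim \mathrm{e}^A\lambda^{-4}.$$
This exceeds $C\lambda^{-2}$ once $A$ is large, and for $\lambda$ small the margin is even better.
\end{itemize}
The bounds on $\phi$ follow from Step 2: $\log\frac1{M_*\lambda^2}-O(1)\le\phi\le\log\frac{1}{M_*\lambda^4}+O(1)\le\log\frac1{M_*\lambda^5}+O(1)$. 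The derivative bounds are inherited from $\psi$.

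\textbf{Main obstacle.} The only real work is Step 2: a uniform-in-$\lambda$ two-sided description of the potential, namely the logarithmic profile with $O(1)$ error together with the $\lambda^{-1}$ and $\lambda^{-2}$ derivative bounds. I would do this by splitting into the near region $\mathrm{dist}\le2\lambda$ and the far region. In the near region, use scaling plus elliptic estimates. In the far region, use the explicit Green function, comparing $F$ with a point mass at $p$, so that the error is $O(\lambda/\mathrm{dist})$ in $\psi$ and correspondingly smaller in its derivatives.
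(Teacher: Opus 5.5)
Your construction is the right kind: a concentrated source of total mass slightly above $4\pi$, shifted by $\log\frac{1}{M_*\lambda^2}$. The paper itself gives no proof and imports the lemma from \cite{K-L-R}. However, Step 2 has the sign of the potential backwards, and Step 3 relies on that wrong sign.

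Since $\Delta\psi=F-1-\eta$ with $F\ge 0$ concentrated at $p$, at a minimum point of $\psi$ we have $\Delta\psi\ge 0$, hence $F\ge 1+\eta$ there. So the \emph{minimum} of $\psi$ sits in the bump near $p$, not the maximum. Equivalently, your $G=-\frac{1}{2\pi}\log\mathrm{dist}+O(1)$ is the Green function of $-\Delta$, so $\psi=-\int GF$ up to a constant, not $+\int GF$. The correct profile is
$$\psi(\vartheta)=2(1+\eta)\log\max(\mathrm{dist}(\vartheta,p),\lambda)+\mathrm{const}+O(1),$$
which is a well at $p$ rather than a peak. Note also the coefficient $2(1+\eta)$, because the mass is $4\pi(1+\eta)$; the extra $2\eta\log\frac1\lambda$ is not $O(1)$ uniformly in $\lambda$. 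With $\min\psi=0$ you get $\psi=O(1)$ on $B_\lambda(p)$ and $\mathrm{osc}\,\psi=2(1+\eta)\log\frac1\lambda+O(1)$. So three of your claims are false: $\psi\ge 2\log\frac1\lambda-O(1)$ on $B_\lambda(p)$, the bound $\frac12M_0\mathrm{e}^\phi\gtrsim\mathrm{e}^A\lambda^{-4}$ there, and $\phi\le\log\frac{1}{M_*\lambda^4}+O(1)$.

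The repair is easy.
\begin{itemize}
\item On $B_\lambda(p)$, use only $\psi\ge 0$. Then $\frac12M_0\mathrm{e}^\phi\ge\frac12\mathrm{e}^A\lambda^{-2}$, which beats $F-\eta\le C\lambda^{-2}$ once $\mathrm{e}^A>2C$. The margin is a fixed ratio that does not improve as $\lambda\to0$; on the ball the shift $\log\frac{1}{M_*\lambda^2}$ does all the work.
\item The upper bound becomes $\phi\le\log\frac{1}{M_*\lambda^{4+2\eta}}+O(1)$. You therefore need $\eta\le\frac12$ to stay within $\log\frac{1}{M_*\lambda^5}+O(1)$. The choice $\eta=\frac12$ (source mass $6\pi$, strictness margin $\frac12$ off the ball) reproduces exactly the exponent $5$ in the statement.
\item Several parts are unaffected by the sign and are fine as written: the off-ball check, which uses $M_0\ge 0$; the derivative bounds via rescaled Schauder estimates (the oscillation of $\psi$ on $B_{3\lambda}(p)$ is still $O(1)$) and the differentiated kernel; and the independence of the $O(1)$ constants from $M_*$.
\end{itemize}
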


We choose the initial data in the form
\begin{equation}\label{initialdataalphabeta}\chih_\delta(\ub,u_0,\vartheta)=\delta^\beta \chih_0(\frac{\ub}{\delta}, \frac{\vartheta}{\delta^\alpha}),\end{equation}
where $\chih_0$ is defined on $[0,1]\times S^2$ and supported in $(0,1)\times B(p)$ where $B(p)\subset S^2$ is a small ball centered at some $p\in S^2$ and $\vartheta_p=0$.  Here $\alpha,\beta\ge0$ should not be confused with the curvature components denoted by the same symbols. We will first fix $\alpha,\beta$ to construct a family of perturbations that converges to zero according to an arbitrarily given  regularity below the threshold, then by allowing $\epsilon$, $\alpha$ and $\beta$ to vary with $\delta$ as in the proof of Theorem \ref{main1}, we will obtain a family which converges zero at all regularities below the threshold (but not necessarily uniformly bounded at the threshold).

We may leave the scalar field unperturbed, then 
$$M(u_0,\vartheta)=\delta^{1+2\beta}A_0(\vartheta)$$
where $A_0(\vartheta)=\Omega_0^2(u_0)\int_0^1|u_0|^2|\chih_0(v,\frac{\vartheta}{\delta^\alpha})|^2\D v$. We choose $\phi$ as above with $M_0(\vartheta)=\frac{1}{8}A_0(\vartheta)$.  Then $M_*$ is a number depending only on $\chih_0$ and $\lambda\approx \delta^\alpha$ with the implicit constants depending only on $\chih_0$.  So we are able to find a function $\phi$ so that 
$$\log\left(\frac{1}{\delta^{2\alpha}}\right)-O(1)\le \phi\le \log\left(\frac{1}{\delta^{5\alpha}}\right)+O(1),$$
$$|\nabla\phi|\le O(\delta^{-\alpha}), |\nabla^2\phi|\le O(\delta^{-2\alpha}).$$ 
where the implicit constants depending on $\chih_0$. 

Choose $C=\delta^{\frac{1+2\beta}{1+p}}$ and $R(\vartheta)= \delta^{\frac{1+2\beta}{1+p}}\mathrm{e}^{-\phi}$, note that $\delta$ can be chosen small enough so that $\phi\ge0$. Then
$$\frac{M(u_0,\vartheta)}{8\Omega_0^2(-R)R}\mathrm{e}^{-\phi}\ge\frac{\delta^{1+2\beta}A_0(\vartheta)}{8CR^{p}}=\frac{\delta^{1+2\beta}A_0(\vartheta)}{8C^{1+p}\mathrm{e}^{-p\phi}}\ge \frac{1}{8}A_0(\vartheta)=M_0(\vartheta).$$
Note also that $h(-R)\le 1$,  we have
$$\Delta \phi+ 1-\frac{1}{16}\frac{M(u_0,\vartheta)}{\Omega_0^2(-R)R}<0.$$
It suffices to show that the remaining terms in \eqref{ellipticinequality4} can be chosen arbitrary small and hence \eqref{ellipticinequality4} still holes true.  Recall that we apply  Theorem \ref{main3}  for $p=k^2-\epsilon$. For our construction, we will choose $\beta\le \frac{p}{1-p}$, so we should replace the number $a$ in Theorem \ref{main3} by $a\delta^{\beta-\frac{p}{1-p}}$. Also, if we choose moreover $\alpha>0$, then we should replace the number $a$ in Theorem \ref{main3} and the above inequalities by $a\delta^{\beta-\frac{p}{1-p}-2N\alpha}$ where $N$ is the number of derivatives we used in the estimate. We believe that the latter loss can be recovered by a more refined argument as in \cite{K-R12}, but we do not pursue this here. Now we choose $\delta$ so that
\begin{equation}\label{deltau1modified}\delta|u_1|^{p-1}\cdot a\delta^{\beta-\frac{p}{1-p}-2N\alpha}=\varepsilon \delta^{3\alpha}|u_1|^{2\epsilon},\end{equation}
where $\varepsilon$ is the constant determined in Theorem \ref{main3}. Note that $k^2+\epsilon-p\ge 2\epsilon\ge0$,  then the estimates there hold and  the remaining terms in \eqref{ellipticinequality4} satisfy
$$(\delta R^{p-1})\cdot a\delta^{\beta-\frac{p}{1-p}-2N\alpha}\cdot (1+\phi)(|\nabla^2  \phi|+|\nabla  \phi|+|\nabla  \phi|^2)\lesssim \varepsilon|u_1|^{2\epsilon}(1+\log(2+\delta^{-5\alpha}))\delta^{\alpha},$$
$$(\underbrace{\Omega_0^{-1}(-R)R^{\frac{1}{2}p}}_{\lesssim|u_1|^{-\epsilon}})(\delta R^{p-1})^{1+\gamma}\cdot a\delta^{\beta-\frac{p}{1-p}-2N\alpha}\cdot|\nabla  \phi| \lesssim \varepsilon |u_1|^{\epsilon}\delta^{2\alpha},$$
$$(\Omega_0^{-2}(-R)R^p)K^2\delta R^{-1+p}\cdot a\delta^{\beta-\frac{p}{1-p}-2N\alpha}\lesssim \varepsilon K^2\delta^{3\alpha}.$$
Note that if $\alpha$ is small enough, $\delta\to0$ as $|u_1|\to0$, which is allowed when we push the incoming null cones before $\mathcal{N}$ to $\mathcal{N}$. In particular, the right hand sides of the above estimates $\le c\varepsilon K^2$. So if we choose $\varepsilon$ small enough so that 
$$\Delta \phi-p |\nabla\phi|^2+ h(-R)-\frac{1}{16}\frac{M(u_0,\vartheta)}{\Omega_0^2(-R)R}+c\varepsilon K^2<0,$$
the function $\phi$ satisfies \eqref{ellipticinequality4}  and hence $R(\vartheta)=\delta^{\frac{1+2\beta}{1+p}}\mathrm{e}^{-\phi}$ satisfies \eqref{ellipticinequality1}.

We show that this $R(\vartheta)$ is located within the existence region \eqref{deltau1modified}, that is, it should hold $R(\vartheta)=\delta^{\frac{1+2\beta}{1+p}}\mathrm{e}^{-\phi}\ge |u_1|$. From the upper bound of $\phi$, and plugging in \eqref{deltau1modified},  it suffices to show 
\begin{equation}\label{inexistenceregion}C_0\delta^{\frac{1+2\beta}{1+p}+5\alpha}\ge \left(\frac{a}{\varepsilon}\right)^{\frac{1}{1-p+2\epsilon}}
\delta^{\frac{1+\beta-\frac{p}{1-p}-(2N+3)\alpha}{1-p+2\epsilon}}.\end{equation}
 for some $C_0$ depending on $\chih_0$. By comparing the exponents of $\delta$, we only need
 \begin{equation}\label{alpha+beta}
\begin{aligned}
\alpha+\beta\le &\frac{p}{1-p}-\frac{2\epsilon(1+p)}{(1-p)(1+4\epsilon-3p)}
\\&-\alpha\left[\frac{(1+p)\bigl(5(1-p+2\epsilon)+2N+3\bigr)}{1+4\epsilon-3p}-1\right].
\end{aligned}
\end{equation}
Recall that $p=k^2-\epsilon <\frac{1}{3}$, the right hand side is $\le\frac{p}{1-p}=\frac{k^2-\epsilon}{1-(k^2-\epsilon)}$. For each $\epsilon>0$, we can choose $\alpha,\beta>0$ so that the above inequality holds. Moreover, by letting $\epsilon\to0$, $\alpha+\beta$ can be chosen arbitrarily close to $\frac{k^2}{1-k^2}$ from below. Note that initial data given by \eqref{initialdataalphabeta} verifies
$$\|\chih\|_{H^s([0,\delta]\times S^2)} \approx C(\chih_0), s=\frac{1}{2}+\alpha+\beta,$$
uniformly bounded at $H^{\frac{1}{2}+\alpha+\beta}$. Because $|u_1|$ equals to the right hand side of \eqref{inexistenceregion}, the power of $\delta$ on the right hand side is strictly less than $\frac{1}{1-k^2}$, this allows us to construct interior perturbations totally supported in the original interior region. By varying $\epsilon(\delta)\to0$, $\alpha(\delta)+\beta(\delta)\to\frac{k^2}{1-k^2}$ and $\delta^{\alpha(\delta)}\to0$, we can extract a single family and the proof of Theorem \ref{main2} is completed.

\end{proof}
\begin{remark}
The above argument works even when $\epsilon=0$, $\alpha=0$ or $\beta=\frac{p}{1-p}$.  If we choose $\epsilon=\alpha=0, \beta=\frac{k^2}{1-k^2}$ simultaneously, and $C_0$ sufficiently large. Because the power of $a$ in \eqref{inexistenceregion} is less than $2$ (as in the proof of Theorem \ref{main4}),  we have obtained a classical fully anisotropic version of the Theorem \ref{main4}, which we will not state separately. By varying $\epsilon$ and $\beta$ (but setting $\alpha=0$)  with $\delta$ as in the proof of Theorem \ref{main1}, we also obtain a family of interior perturbations with angular support contained in a small ball which is  uniformly bounded in $H^{\frac{1}{2}+\frac{k^2}{1-k^2}}$.

But note that even when $\epsilon=0$, if we choose $\alpha>0$, $\alpha+\beta$ cannot be chosen equal to $\frac{k^2}{1-k^2}$. So based on the above argument, we cannot find data that are simultaneously large at the threshold and localized to an angular region whose size tends to zero. More modestly, if we allow $\epsilon, \alpha$ and $\beta$ to vary with $\delta$ and $\alpha(\delta)>0$, in order that the family is uniformly bounded at the threshold and the angular support tends to zero,  we require that $\delta^{\alpha(\delta)+\beta(\delta)-\frac{k^2}{1-k^2}}\to1$ and $\delta^{\alpha(\delta)}\to0$, which is inconsistence with the condition \eqref{alpha+beta}. It would be interesting to see what happens in this endpoint case. 
\end{remark}


\begin{thebibliography}{99}
\bibitem{A} X. An, \textit{Naked Singularity Censoring with Anisotropic Apparent Horizon}, arXiv:2401.02003

\bibitem{An2} X. An, S. Wu, \textit{Naked Singularities beyond Spherical Symmetry: Singular Inner Cauchy Horizons for the Einstein-Scalar Field System}, arXiv:2607.07134v1, 2026.

\bibitem{An3} X. An, S. Wu, \textit{Naked Singularities beyond Spherical Symmetry: Instability of $kappa$-Self-Similar Solutions via an Iteration Scheme}, arXiv:2609.04723 (2026).


\bibitem{An-Han} Xinliang An and Qing Han, \textit{Anisotropic Dynamical Horizons Arising in Gravitational Collapse}, 
arXiv:2010.12524 (2020). 

\bibitem{An-Luk} X. An and J. Luk, \textit{Trapped surfaces in vacuum arising dynamically from mild incoming radiation}, Advances in Theoretical and Mathematical Physics 21 (2017), 1--120. 

\bibitem{A-T} X. An, H. K. Tan, \textit{A Proof of Weak Cosmic Censorship Conjecture for the Spherically Symmetric Einstein-Maxwell-Charged Scalar Field System}, arXiv:2402.16250

  \bibitem{Cho} M.W. Choptuik, \textit{Universality and scaling in gravitational collapse of a massless scalar field}, Phys. Rev. Lett., 70, 9–12, (1993).


\bibitem{Chr87} D. Christodoulou, \textit{A Mathematical Theory of Gravitational Collapse}, Comm. Math. Phys. 109, (1987), 613--647

\bibitem{Chr91} D. Christodoulou, \textit{The formation of black holes and singularities in spherically symmetric gravitational collapse}, Communications on Pure and Applied Mathematics 44, no. 3 (1991): 339-373.

\bibitem{Chr93} D. Christodoulou, \textit{Bounded variation solutions of the spherically symmetric einstein-scalar field equations}, Communications on Pure and Applied Mathematics 46, no. 8 (1993): 1131-1220.

\bibitem{Chr94} D. Christodoulou, \textit{Examples of Naked Singularity Formation in the Gravitational Collapse of a Scalar Field}, Annals of Mathematics, (1994) 140(3), 607--653.

\bibitem{Chr99} D. Christodoulou, \textit{The instability of naked singularities in the gravitational collapse of a scalar field}, Ann. of Math. 149, 183-217 (1999).

\bibitem{Chr99cqg} D. Christodoulou, \textit{On the global initial value problem and the issue of singularities}, Class. Quantum Grav. (1999)  16 A23.

\bibitem{Chr08} D. Christodoulou, \textit{The Formation of Black Holes in General Relativity}, European Mathematical Soc., 2009.


\bibitem{CK93} D. Christodoulou, S. Klainerman, \textit{The Global Nonlinear Stability of the Minkowski Space}, Princeton Mathematical Series, vol. 41, Princeton University Press, Princeton, NJ, 1993.


\bibitem{Ci26} S. Cicortas, \textit{Weak cosmic censorship for the circularly symmetric Einstein-scalar field system in 2+1 dimensions}, arXiv:2605.19143v1



\bibitem{Ci-Ke} S. Cicortas, C. Kehle, \textit{Discretely self-similar exterior-naked singularities for the Einstein-scalar field system}, aXiv:2412.09540



\bibitem{D} M. Dafermos, \textit{Spherically symmetric spacetimes with a trapped surface}, Class. Quantum Grav. 22 2221, 2005

 
 \bibitem{G-H-M} C. Gundlach, D. Hilditch, J. M. Mart\'in-Garc\'ia, \textit{Critical Phenomena in Gravitational Collapse}, arXiv: 2507.07636.
 
 
 
 
\bibitem{G-H-J} Y. Guo, M. Hadzic, M., J. Jang, \textit{Naked Singularities in the Einstein-Euler System}, Ann. PDE 9, 4 (2023).


\bibitem{K-L-R} S. Klainerman, J. Luk and I. Rodnianski, \textit{A fully anisotropic mechanism for formation of trapped surfaces in vacuum}, Invent. Math. 198 (2014), no. 1, 1–26.

\bibitem{K-R12} S. Klainerman and I. Rodnianski, \textit{On the Formation of Trapped Surfaces}, Acta Mathematica 208 (2012), no. 2, 211–333.

 
 \bibitem{Li25} J. Li, \textit{Interior instability of naked singularities of a scalar field}, arXiv:2508.07655v1, 2025.
 
\bibitem{Li-Liu1} J. Li, and J. Liu \textit{Instability of spherical naked singularities of a scalar field under gravitational perturbations}, Journal of Differential Geometry, 120(1): 97-197, 2022.


\bibitem{Liu-Li} J. Liu and J. Li, \textit{A robust proof of the instability of naked singularities of a scalar field in spherical symmetry}, Comm. Math. Phys. 363 (2018), no. 2, 561–578


\bibitem{L-Z3} J. Li and X. P. Zhu, \textit{Scalar perturbations to naked singularities of perfect fluid}, arXiv: 2505.20766.


\bibitem{O-P} A. Ori, T. Piran, \textit{Naked singularities and other features of self-similar general-relativistic gravitational collapse}, Phys. Rev. D 42, 1068, 1990

\bibitem{Pen69} R. Penrose, \textit{Gravitational collapse: the role of general relativity}, Noovo Cimento 1, 252 - 276 (1969).


\bibitem{R-Sh} I. Rodnianski, Y. Shlapentokh-Rothman, \textit{Naked singularities for the Einstein vacuum equations: The exterior solution}, Ann. Math., 198 (2023), 1, 231-391

\bibitem{Sh} Y. Shlapentokh-Rothman, \textit{Naked Singularities for the Einstein Vacuum Equations: The Interior Solution}, arXiv:2204.09891.

\bibitem{JS1} J. Singh, \textit{A construction of approximately self-similar naked singularities for the spherically symmetric Einstein-scalar field system}, arXiv: 2210.11325.

\bibitem{JS2} J. Singh, \textit{High regularity waves on self-similar naked singularity interiors: decay and the role of blue-shift}, arXiv: 2402.00062

\bibitem{SZ26} J. Singh, W. Zheng, \textit{Nonlinear stability of continuously self-similar naked singularities for the Einstein-scalar field equations II: linearized stability}, arXiv:2605.16095

\bibitem{Zheng} W. Zheng, \textit{Nonlinear stability of continuously self-similar naked singularities for the Einstein-scalar field equations I: main results}, arXiv:2605.16235


\end{thebibliography}
\end{document}